\documentclass[journal,10 pt,oneside,web]{ieeecolor} 

\usepackage{lcsys}

\let\labelindent\relax
\usepackage{enumitem}
\setlist[itemize]{leftmargin=*}

\usepackage{resizegather}

\usepackage{amsmath,amssymb,amsfonts,mathrsfs}
\usepackage{graphicx}
\usepackage{mdframed}
\usepackage{textcomp}
\usepackage{xcolor}
\makeatletter
\def\tagform@#1{\maketag@@@{{\color{blue}(#1)}}}
\makeatother
\usepackage{textcomp}
\usepackage{tablefootnote}
\usepackage{threeparttable}
\usepackage{bbm}
\usepackage{dsfont}
\usepackage{tikz}
\usepackage{graphicx}
\usepackage{tkz-euclide}
\usepackage{algorithm}
\usepackage{algpseudocode}
\usepackage{booktabs}
\usetikzlibrary{positioning}
\usetikzlibrary{shapes}
\usetikzlibrary{shapes.misc}
\usetikzlibrary{shapes.geometric}
\usetikzlibrary{plotmarks}
\usetikzlibrary{intersections}
\usetikzlibrary{calc}
\usetikzlibrary{fit}
\usetikzlibrary{patterns,tikzmark}
\usetikzlibrary{matrix,decorations.pathreplacing,calc}

\tikzset{cross/.style={cross out, draw, 
         minimum size=2*(#1-\pgflinewidth), 
         inner sep=0pt, outer sep=0pt}}

 \usepackage{circuitikz} 
 
\usepackage{enumitem}
\setlist[itemize]{leftmargin=*}
\setlist[enumerate]{leftmargin=*}

\usepackage{booktabs,multirow}

\usepackage{hyperref}
\hypersetup{colorlinks=true, linkcolor=blue, citecolor=blue, urlcolor=blue}

\newcommand{\ci}[0]{u}

\newcommand{\x}[0]{x}

\newcommand{\vertiii}[1]{{\left\vert\kern-0.25ex\left\vert\kern-0.25ex\left\vert #1 
    \right\vert\kern-0.25ex\right\vert\kern-0.25ex\right\vert}}

\newtheorem{theorem}{Theorem}
\newtheorem{proposition}{Proposition}

\newtheorem{corollary}{Corollary}[theorem]
\newtheorem{lemma}{Lemma}

\newtheorem{problem}{Problem}
\newtheorem{assumption}{Assumption}
\newtheorem{remark}{Remark}

\newtheorem{definition}{Definition}

\newmdenv[
    linecolor=white, backgroundcolor=lightgray!15, innertopmargin=5pt, innerbottommargin=5pt, skipabove=10pt, skipbelow=10pt
]{graybox}

\makeatletter
\renewcommand{\fps@figure}{htp}
\renewcommand{\fps@table}{htp}
\makeatother

\newcommand{\mr}[1]{\mathrm{#1}}

\usepackage{comment}

\newtheorem{exmpl}{Example}

\newcommand{\parstart}[1]{\noindent \textbf{#1.}\;}

\makeatletter
\def\@begintheorem#1#2{%
  \@IEEEtmpitemindent\itemindent\topsep 0pt\rmfamily\trivlist
  \item[\hskip\labelsep{\noindent\bfseries #1~#2.}]\itemindent\@IEEEtmpitemindent}
\def\@opargbegintheorem#1#2#3{%
  \@IEEEtmpitemindent\itemindent\topsep 0pt\rmfamily\trivlist
  \item[\hskip\labelsep{\noindent\bfseries #1~#2\normalfont~(#3)\bfseries.}]\itemindent\@IEEEtmpitemindent}
\renewenvironment{proof}[1][Proof]{%
  \par\normalfont
  \@IEEEtmpitemindent\itemindent\topsep 0pt\trivlist
  \item[\hskip\labelsep{\noindent\bfseries #1.}]\itemindent\@IEEEtmpitemindent\ignorespaces
}{%
  \unskip\nobreak\hfill\mbox{\rule[0pt]{1.3ex}{1.3ex}}\par
  \endtrivlist\@endpefalse
}
\makeatother	

\usepackage{balance}
\usepackage{xcolor} 
\colorlet{subsectioncolor}{.} 
\definecolor{subsectioncolor}{RGB}{39,94,77} 

\usetikzlibrary{shapes.geometric, arrows, positioning,calc}

\tikzset{
    block/.style = {rectangle, draw, fill=blue!10, text width=6em, text centered, rounded corners, minimum height=3em, font=\scriptsize},
    cloud/.style = {ellipse, draw, fill=orange!10, text width=5em, text centered, minimum height=2em, font=\tiny},
    arrow/.style = {thick,->,>=stealth}
}

\definecolor{ink}{RGB}{20,20,20}
\definecolor{gfill}{RGB}{246,246,246}      
\definecolor{gfill2}{RGB}{238,238,238}      
\definecolor{linegray}{RGB}{95,95,95}       
\definecolor{edgegray}{RGB}{70,70,70}       
\usepackage{titlesec} 
\titlespacing*{\section}{0pt}{*0.7}{*0.3} 
\titlespacing*{\subsection}{0pt}{*0.7}{*0.3} 
\titleformat{\subsubsection}[runin]{\color{nblue}\normalfont\fontsize{9}{11}\selectfont\everymath={\sf}\sf\itshape}{\thesubsubsectiondis}{0.5em}{}[:]
\titlespacing*{\subsubsection}{\parindent}{0ex plus 0.1ex minus 0.1ex}{0.5em}
\title{\Large \bf
\textsc{Verification and Forward Invariance of Control Barrier Functions for Differential-Algebraic Systems}
}

\author{Hongchao Zhang$^{1}$, Mohamad H. Kazma$^{2}$, Meiyi Ma$^{1}$, Taylor T. Johnson$^{1}$ and Ahmad F. Taha$^{2}$
\thanks{$^{1}$ Department of Computer Science, $^{2}$ Department of Civil and Environmental
Engineering, Vanderbilt University, Nashville, TN 37235 USA.
{\tt\scriptsize \{hongchao.zhang, mohamad.h.kazma, meiyi.ma, taylor.johnson, ahmad.taha\}@vanderbilt.edu}
}
}

\begin{document}
\maketitle
\thispagestyle{empty}

\begin{abstract}
Differential-algebraic equations (DAEs) arise in power networks, chemical processes, and multibody systems, where algebraic constraints encode physical conservation laws. The safety of such systems is critical, yet safe control is challenging because algebraic constraints restrict allowable state trajectories. Control barrier functions (CBFs) provide computationally efficient safety filters for ordinary differential equation (ODE) systems. However, existing CBF methods are not directly applicable to DAEs due to potential conflicts between the CBF condition and the constraint manifold. This paper introduces DAE-aware CBFs that incorporate the differential-algebraic structure through projected vector fields. We derive conditions that ensure forward invariance of safe sets while preserving algebraic constraints and extend the framework to higher-index DAEs. A systematic verification framework is developed, establishing necessary and sufficient conditions for geometric correctness and feasibility of DAE-aware CBFs. For polynomial systems, sum-of-squares certificates are provided, while for nonpolynomial and neural network candidates, satisfiability modulo theories are used for falsification. The approach is validated on wind turbine and flexible-link manipulator systems.
\end{abstract}

\begin{IEEEkeywords}
	Differential-algebraic systems, control barrier functions, safe control, verification.
\end{IEEEkeywords}

\section{Introduction and Paper Objective}\label{sec:intro}

\IEEEPARstart{D}{YNAMIC}  systems are ubiquitous in engineering and science, and ordinary differential equations (ODEs) represent one of the most common modeling frameworks. Safety of dynamical systems is typically formulated as the positive invariance of a safe set~\cite{blanchini1999set}, requiring that trajectories starting within a prescribed region remain there for all time. Approaches to ensuring safety range from offline certification methods such as Hamilton-Jacobi reachability~\cite{bansal2017hamilton} and learning-enabled approaches~\cite{brunke2022safe}, to online safe control. For online safe control, \textit{control barrier functions} (CBFs) have emerged as a computationally efficient tool~\cite{ames2014control,ames2019control}. CBFs synthesize safety filters that take a desired control input from any performance-based policy and, if the input is unsafe, minimally modify it in real-time to guarantee safety. This framework has seen numerous extensions, including high-order CBFs for systems with high relative degree~\cite{xiao2021high} and neural network representations~\cite{dawson2023safe,zhang2024seev}.

\begin{figure}[t]
    \centering
    \includegraphics[scale=0.8]{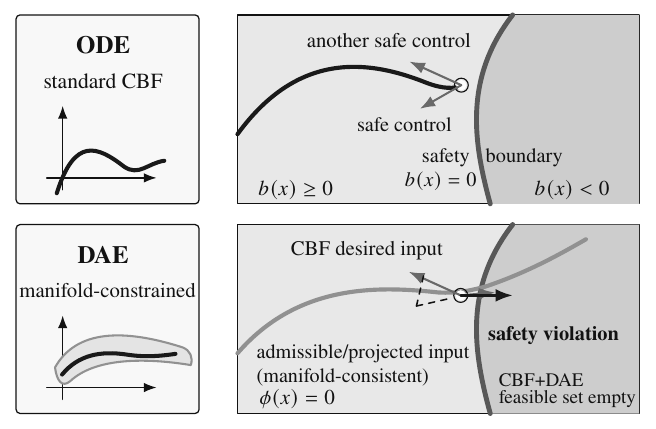}
    \vspace{-0.6cm}
    \caption{Conceptual illustration of hidden infeasibility in standard CBF safety filters when applied to DAE systems. The algebraic constraint manifold $\mathcal{M} = \{\x : \phi(\x) = 0\}$ imposes an additional compatibility requirement on the control input that is not accounted for by standard CBF conditions, which may render the safety filter infeasible even when the state lies within the safe set $\mathcal{C}$.}
    \label{fig:dae_cbf_conceptual}
    \vspace{-0.5cm}
\end{figure}

In engineering applications including power networks~\cite{Grob2016,Kazma2025e}, chemical processes~\cite{Volker2005}, and multibody mechanical systems~\cite{Park2024b}, systems are naturally modeled as differential-algebraic equations (DAEs) because they inherently involve algebraic constraints, often stemming from physical laws like conservation of mass and energy. 
In ODE systems, CBFs have shown great promise in online safe control, yet are not directly applicable to DAE systems due to the algebraic constraints that restrict allowable state trajectories. As visualized in Fig. \ref{fig:dae_cbf_conceptual}, a fundamental assumption underlying CBF safety is the existence of a feasible control input satisfying the barrier condition, which may be violated due to the algebraic constraints (see Fig.~\ref{fig:dae_cbf_conceptual}). Recently, reference~\cite{Park2024b} introduced CBFs for flexible-link manipulators modeled as DAEs, but a general framework remains an open problem.

\parstart{Related Work} CBFs have emerged as a computationally efficient solution, enforcing set invariance through quadratic program (QP) formulations that serve as real-time safety filters~\cite{ames2014control,ames2019control}. Recent work has explored neural network representations of CBFs to improve expressiveness and scalability, with advances in verification~\cite{abate2021fossil,zhang2023exact,edwards2024fossil} and synthesis~\cite{dawson2022safe,dawson2023safe,zhang2024seev,tayal2025cp} enabling formal safety guarantees for neural CBFs. This line of work also extends to stochastic systems \cite{tayal2024learning,zhang2025stochastic,wooding2025protect}. While these methods are well-established for ODEs, their extension to DAEs presents fundamental challenges due to the evolution of the differential states along the constraint manifold defined by the algebraic constraints.

The safety problem for DAEs remains significantly unexplored. Reference~\cite{Dang2004} applied hybrid-systems reachability techniques to analog and mixed-signal circuits modeled as DAEs by integrating an equivalent ODE at each step and numerically projecting the result onto the constraint manifold, thereby introducing projection-error accumulation. Reference~\cite{Cross2008} extended Hamilton–Jacobi level-set reachability to nonlinear index-1 DAEs by deriving two schemes: an analytic projection of the dynamics onto the constraint manifold and a full-state formulation using closest-point extensions. Although these methods demonstrate that classical reachability ideas can be adapted to DAE settings, they address only offline safety verification. The verification for DAE systems has been explored primarily through reachability-based approaches in the formal methods community, including benchmarks with simulation and discrepancy functions~\cite{Musau2018}, star-set propagation for large-scale linear DAEs~\cite{Tran2019}, and scalable zonotope-based algorithms for nonlinear index-1 DAEs~\cite{Althoff2014}. Barrier certificates have also been applied to DAE safety verification using sum-of-squares (SOS) programming~\cite{Pedersen2016}. However, these frameworks are fundamentally diagnostic: they certify whether a pre-defined controller maintains safety but do not provide constructive methods to synthesize feedback laws that enforce safety in real-time under differential-algebraic constraints.

The control synthesis for DAEs has targeted asymptotic stability rather than hard constraint satisfaction. Existing approaches construct polynomial Lyapunov functions via SOS programming~\cite{Kundu2015} or establish dissipativity conditions using integral quadratic constraints and LMI-based verification~\cite{Jensen2025}. Yet, asymptotic stability is insufficient for safety-critical applications; a system can be stable while transiently violating critical state constraints such as voltage limits or physical boundaries. Consequently, the rich literature on DAE stabilization does not directly translate to the strict set-invariance requirements of safety-critical control.

The extension of CBFs to DAEs remains an open problem. Standard geometric control approaches for manifold-constrained systems typically assume explicit constraint representations, rather than implicitly defined algebraic equations. CBFs for DAEs were first introduced in~\cite{Park2024b}, enabling safety control of flexible-link manipulators without converting the model to an ODE. However, their approach assumes that trajectories remain in the consistency space without explicitly deriving how control inputs enforce algebraic constraint manifold invariance. Here, we construct projected vector fields that encode the differential-algebraic structure and make the role of compatibility conditions explicit, including how algebraic state derivatives are determined. In addition,~\cite{Park2024b} does not formalize the relationship between the differentiation index of the DAE and the relative degree of the barrier function---a relationship that is critical for understanding when CBF constraints remain feasible in the presence of hidden algebraic constraints. Moreover, while~\cite{Park2024b} provides theoretical safety guarantees, it does not develop computational methods for formally verifying candidate CBFs. Our framework applies to higher-index DAEs, explicitly characterizes how the differentiation index impacts CBF design, and provides formal verification tools based on SOS programming with Satisfiability Modulo Theories (SMT) based falsification for non-polynomial cases.

\parstart{Paper Contributions} In this paper, we address these aforementioned limitations by introducing a novel framework for safety-critical control for DAE systems. As illustrated in Fig.~\ref{fig:dae_cbf_conceptual}, applying standard CBF-based safety filters to DAE systems can lead to infeasibilities and, consequently, safety violations. This failure stems from two fundamental challenges: (C1) how to ensure that the control input maintains invariance of the constraint manifold while simultaneously enforcing set invariance, and (C2)  how to verify that the safety filter remains feasible when the differential equations are coupled with the algebraic constraints. Our framework addresses both challenges by developing DAE-aware CBF conditions and formal verification methods. The contributions are as follows.
\vspace{-0.2cm}
\begin{itemize}
    \item 
    We present a class of CBFs, termed DAE-aware CBFs, that explicitly account for the differential-algebraic structure of the system through projected vector fields along the constraint manifold. We derive DAE-aware CBF conditions for the safety of nonlinear DAE systems while ensuring forward invariance of safe sets and consistency of the algebraic constraints.
    \item We generalize the DAE-aware CBF framework to handle higher-index DAE systems. We formalize the relationship between the differentiation index and the relative degree, showing that both notions must be accounted for to ensure safety; this leads to high-order CBF conditions that account for hidden constraints within the DAE structure. 
    \item We develop verification methods that certify both geometric correctness (the barrier set is contained in the safe set on the constraint manifold) and feasibility (existence of compatible control inputs satisfying both safety and algebraic constraints). For CBFs represented by polynomials, we provide SOS certificates. For CBFs represented by feed-forward neural networks, we formulate nonlinear programs and employ SMT for falsification. 
    \item We validate the proposed framework through simulations on benchmark DAE systems, demonstrating that DAE-aware CBFs maintain invariance of the constraint manifold and ensure feasibility of the safety filter. As a result, we numerically show that safety is preserved without violating the algebraic constraints, in contrast to standard CBF approaches, which become infeasible.
\end{itemize}

\parstart{Paper Organization}The remainder of this work is organized as follows. Section~\ref{sec:problem} provides preliminaries on DAE models and control barrier functions, and presents the formal problem formulation. Section~\ref{sec:dae_cbf} introduces DAE-aware CBFs and derives safety conditions for general DAE systems. Section~\ref{section:verification} develops computational methods for verifying the validity of a candidate DAE-aware CBF. Finally, Section~\ref{sec:conclusion} concludes the paper and outlines future research directions.
\section{Problem Formulation}\label{sec:problem}
In this section, we introduce the system model representation, safety specification, and formulate the problem.
\subsection{System Model and Preliminaries}
\label{subsec:model}

We consider a general continuous-time nonlinear control affine DAE system. The states of the system are described by dynamic states denoted as $\x_{d}(t) \in \mathbb{R}^{n_{d}}$ and algebraic states denoted as $\x_{a}(t) \in \mathbb{R}^{n_{a}}$. The overall state vector is given by $\x(t) = [\x_{d}(t)^{T}, \x_{a}(t)^{T}]^{T} \in \mathbb{R}^{n_{x}}$, where $n_{x} = n_{d} + n_{a}$. The control input is denoted as $\ci(t) \in \mathbb{R}^{n_{u}}$. The system dynamics can be described by the following general semi-explicit nonlinear DAE form
\begin{subequations}
\label{eq:dyn}
\begin{align}
    \dot{\x}_{d}(t) &= f(\x_{d}(t), \x_{a}(t)) + g(\x_{d}(t),\x_{a}(t)) \ci(t), \label{eq:dyn_diff} \\
    0 &= \phi(\x_{d}(t), \x_{a}(t)), \label{eq:dyn_alg}
\end{align}
\end{subequations}
where $f: \mathbb{R}^{n_x} \rightarrow \mathbb{R}^{n_d}$, $g: \mathbb{R}^{n_x} \rightarrow \mathbb{R}^{n_d} \times \mathbb{R}^{n_u}$, and $\phi: \mathbb{R}^{n_d} \times \mathbb{R}^{n_a} \rightarrow \mathbb{R}^{n_m}$ are smooth functions. Note that~\eqref{eq:dyn} can represent a broad class of systems, including mechanical systems satisfying holonomic or nonholonomic constraints, power network dynamics satisfying power flow equations, chemical reactors, water quality distribution systems that are modeled through partial differential equations.
\begin{assumption}
    \label{assumption:no_input}
    Throughout this paper, we assume that the algebraic constraint $\phi(\x)$ does not depend on the input $\ci$. 
\end{assumption}

While DAEs offer a modeling framework for several applications, numerical challenges often arise because the system dynamics evolve on a constraint manifold $\mathcal{M} = \{\x \in \mathbb{R}^{n_x}: \phi(\x) = 0\}$ that may be defined implicitly~\cite{Linh2009}. This manifold, determined by the algebraic constraints, is hidden in the sense that its geometric structure and dimension are not immediately apparent from the constraint equations alone. Additional constraints can be revealed by performing repeated differentiations of the algebraic constraints. This concept is formalized through the notion of a differentiation index. 
\begin{definition}[Differentiation index]\label{def:inde(1)}
	The differentiation index~\cite{Volker2005} of a descriptor system, denoted as $\nu$, refers to the number of differentiations required to obtain an ODE using algebraic manipulations.
\end{definition}

The generalization of the concept of differentiation index is the strangeness index~\cite[Definition 4.4]{Volker2005}. DAEs of index one are considered strangeness-free if no additional differentiation of the algebraic constraints is required to reveal hidden constraints. Note that DAEs of index greater than one can be transformed into an equivalent strangeness-free form through index reduction techniques~\cite{Linh2009,Volker2005}.
Furthermore, regularity is an important property for DAEs; it is a condition for the existence of a consistent unique solution for every consistent initial condition~\cite{Grob2016}. 
In practice, regularity, which implies the solvability of the DAE system, is a prerequisite for the solvability of physical systems modeled as DAEs.

\begin{definition}[Regularity]
    \label{def:regular}
	Regularity of a DAE~\eqref{eq:dyn} around an initial state can be characterized by matrix pair $(E, A)$, such that it is regular if and only if $\mr{det}(sE - A) \neq 0 $ for $s \in \mathbb{C}$.
\end{definition}
Here $E$ is the singular mass matrix and $A$ is the state space matrix obtained from linearizing the DAE system around the initial state. The above definition is standard in the literature on DAE systems~\cite{Volker2005,Grob2016}; it implies that a DAE is regular if the matrix pencil $(sE - A)$ is regular, i.e., the determinant is not identically zero. That being said, it ensures the DAE is well-posed: by the theory of matrix pencils~\cite{Volker2005}, a regular DAE has a unique solution for every consistent initial condition, meaning that the system dynamics are uniquely determined once the algebraic constraints are satisfied.

\subsection{Safety and Control Barrier Functions}
\label{subsec:safety}
Safety of dynamic systems usually refers to the property where the system remains inside of a pre-defined set of states. Let $\mathcal{C}:= \{\x\in\mathbb{R}^{n_x}: h(\x) \geq 0\}\subseteq \mathcal{X}$ be a closed set denoted as the safe set, for some function $h: \mathcal{X} \rightarrow \mathbb{R}$. Let $\partial\mathcal{C}$ denote the boundary 
of the set $\mathcal{C}$. The system is safe if the system remains in the safe set $\mathcal{C}$ for all time. This property is also referred to as positive invariance and is defined as follows.

\begin{definition}[Safety]
    \label{def:positive-invariance}
    A set $\mathcal{C} \subseteq \mathbb{R}^{n_x}$ is positive invariant under dynamics (\ref{eq:dyn}) and control policy $\mu$ if $\x(0) \in \mathcal{C}$ and $\ci(t) = \mu(\x(t)) \ \forall t \geq 0$ implies that $\x(t) \in \mathcal{C}$ for all $t \geq 0$. 
\end{definition}

One approach towards designing safe control policies for ODE systems is to use CBF. A CBF is a continuously differentiable function 
$b(\x): \mathbb{R}^{n_x} \rightarrow \mathbb{R}$ whose 0-level set defines 
$\mathcal{D} := \{\x \in \mathbb{R}^{n_x} : b(\x) \geq 0\}\subseteq \mathcal{C}$. Let $\alpha: \mathbb{R} \rightarrow \mathbb{R}$ denote a class-$\mathcal{K}_{\infty}$ function (i.e., strictly increasing and satisfies $\alpha(0) = 0$). The following result guarantees positive invariance of the set $\mathcal{D}$ under appropriate control.

\begin{theorem}[\hspace{-0.01cm}\cite{ames2019control}]
\label{theorem:CBF-safety}
Suppose that $b(\x)$ is a CBF such that $b(\x(0)) \geq 0$, and the input $\ci(t)$ satisfies
\begin{equation}
\label{eq:CBF-def}
\frac{\partial b}{\partial \x}\big(f(\x) + g(\x)\ci\big) \geq -\alpha\big(b(\x)\big)
\end{equation}
for all $t \geq 0$. Then the set $\mathcal{D} = \{\x \in \mathbb{R}^{n_x} : b(\x) \geq 0\}$ is positive invariant.
\end{theorem}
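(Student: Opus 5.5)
The plan is a scalar comparison argument along the closed-loop trajectory. Fix a solution $\x(t)$ of the closed-loop ODE $\dot{\x} = f(\x)+g(\x)\ci(t)$ on its maximal interval of existence $[0,T)$. I will assume enough regularity that solutions exist and are unique: for instance, $f,g$ locally Lipschitz and $\ci(t)=\mu(\x(t))$ with $\mu$ locally Lipschitz. Define the scalar function $\beta(t) := b(\x(t))$. Since $b$ is continuously differentiable and $\x(\cdot)$ is absolutely continuous (indeed $C^1$ under the above assumptions), the chain rule gives
\begin{equation*}
\dot{\beta}(t) = \frac{\partial b}{\partial \x}(\x(t))\big(f(\x(t)) + g(\x(t))\ci(t)\big) \geq -\alpha(\beta(t)).
\end{equation*}
The inequality is exactly hypothesis \eqref{eq:CBF-def} evaluated along the trajectory. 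The goal is then to show $\beta(t)\ge 0$ for all $t\in[0,T)$, which is precisely $\x(t)\in\mathcal{D}$.

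\textbf{Main step (comparison).} Consider the scalar initial value problem $\dot{y} = -\alpha(y)$ with $y(0)=\beta(0)\ge 0$. Because $\alpha(0)=0$, the function $y\equiv 0$ is an equilibrium. By monotonicity of $\alpha$, any solution starting at $y(0)\ge 0$ stays nonnegative. For $y>0$ we have $\dot y<0$, so the solution is nonincreasing and bounded below by $0$; it cannot cross $0$ because $0$ is an equilibrium. The comparison lemma then yields $\beta(t)\ge y(t)\ge 0$.

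This step is the main obstacle, since $\alpha$ is only assumed strictly increasing and need not be Lipschitz. In that case uniqueness for the comparison ODE may fail, and the standard comparison lemma (e.g., Khalil's Lemma 3.4) requires local Lipschitzness. I would therefore bypass the comparison ODE with a direct contradiction argument, which only needs $\alpha(0)=0$ and monotonicity.

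Suppose $\beta(t_1)<0$ for some $t_1$. Let $t_0 := \sup\{t\le t_1:\beta(t)\ge 0\}$. By continuity of $\beta$, we have $\beta(t_0)=0$ and $\beta<0$ on $(t_0,t_1]$. On this interval $-\alpha(\beta(t))>0$, since $\alpha$ is strictly increasing with $\alpha(0)=0$, so $\alpha(s)<0$ for $s<0$. Hence $\dot\beta(t)\ge -\alpha(\beta(t))>0$ on $(t_0,t_1]$, and $\beta$ is increasing there. This gives $\beta(t_1)>\beta(t_0)=0$, a contradiction. The argument avoids any Lipschitz requirement on $\alpha$. It uses only continuity of $\beta$ and the mean value theorem, or absolute continuity of $\beta$ if $\ci$ is merely measurable.

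\textbf{Conclusion.} The contradiction gives $b(\x(t))\ge 0$ on $[0,T)$, i.e., $\mathcal{D}$ is positive invariant in the sense of Definition~\ref{def:positive-invariance} on the existence interval. If one wants the statement for all $t\ge0$, I would add the standing assumption that solutions are forward complete, or that $\mathcal{D}$ is compact. Under compactness, trajectories remaining in $\mathcal{D}$ cannot blow up, so $T=\infty$.
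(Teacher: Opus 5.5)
Your argument is correct, but it follows a different route from the one the paper relies on.

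The paper gives no proof of this theorem, since it is quoted from \cite{ames2019control}. However, in the corresponding step of its proof of Theorem~\ref{thm:positive_invariance}, and again in Theorem~\ref{thm:positive_invariance_indexnu}, it passes from $\dot b\ge-\alpha(b)$ to $b(\x(t))\ge 0$ via the comparison lemma \cite[Lemma~3.4]{khalil2002nonlinear}. That is, it bounds $b(\x(t))$ from below by the solution of $\dot y=-\alpha(y)$, $y(0)=b(\x(0))$.

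That route gives a quantitative estimate: a guaranteed rate at which $b$ can approach $0$, and convergence back toward $\mathcal{D}$ when starting from $b<0$. As stated in Khalil, though, it needs $\alpha$ to be locally Lipschitz. The paper's definition of $\alpha$ (strictly increasing with $\alpha(0)=0$) does not provide this.

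Your first-crossing contradiction uses only continuity (or absolute continuity) of $t\mapsto b(\x(t))$ and the sign property $\alpha(s)<0$ for $s<0$. Even $\alpha(s)\le 0$ would suffice, since it makes $\beta$ nondecreasing on $[t_0,t_1]$. So your argument is valid under exactly the stated hypotheses. The cost is that it yields only qualitative invariance, which is all the theorem claims.

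Both arguments evaluate \eqref{eq:CBF-def} at times where $b(\x(t))<0$. This is legitimate here because the statement assumes the inequality for all $t\ge 0$ along the trajectory. If the inequality were imposed only on $\mathcal{D}$, as in Definition~\ref{def:valid-cbf}, neither argument would go through without extra regularity of the closed loop.

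Two smaller points:
\begin{itemize}
\item Your preliminary comparison-ODE paragraph is superfluous and can be dropped. Its claim that solutions ``cannot cross $0$ because $0$ is an equilibrium'' tacitly uses uniqueness, and the contradiction argument stands on its own.
\item Your forward-completeness caveat is appropriate, since Definition~\ref{def:positive-invariance} tacitly presumes that solutions exist for all $t\ge 0$.
\end{itemize}
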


Theorem~\ref{theorem:CBF-safety} implies that if $b(\x)$ is a CBF, then enforcing the constraint~\eqref{eq:CBF-def} on the control input $\ci(t)$ at each time step is sufficient to guarantee safety. In practice, this is achieved by solving a CBF quadratic program (\textbf{CBF-QP}) at each time step that minimizes deviation from a nominal controller $\ci_{\mathrm{nom}}$ while satisfying the CBF constraint:
\begin{equation}
\label{eq:CBF-QP}
\begin{aligned}
    & \min_{\ci(t) \in \mathcal{U}}  \|\ci(t) - \ci_{\mathrm{nom}(t)}\|^2 
    \\
    & \text{s.t.} \quad  \frac{\partial b}{\partial \x}\big(f(\x(t)) + g(\x(t))\ci(t)\big) \geq -\alpha\big(b(\x(t))\big).
\end{aligned}
\end{equation}
We note that Theorem~\ref{theorem:CBF-safety} assumes the existence of a control input that satisfies the CBF condition~\eqref{eq:CBF-def}. This motivates the following definition.
\begin{definition}[Valid CBF]
\label{def:valid-cbf}
A function $b(\x): \mathbb{R}^{n_x} \rightarrow \mathbb{R}$ is a \emph{valid control barrier function} for the safe set $\mathcal{C}$ under dynamics~\eqref{eq:dyn} if the following conditions hold
\begin{enumerate}
    \item (Correctness) $\mathcal{D}\subseteq \mathcal{C}$, i.e., $b(\x) \geq 0$ implies $h(\x) \geq 0$ for all $\x \in \mathbb{R}^{n_x}$. This ensures that the CBF-defined set is contained within the true safe set.
    \item (Feasibility) For all $\x \in \mathcal{D}$, there exists a control input $\ci \in \mathbb{R}^{n_u}$ such that $\frac{\partial b}{\partial \x}\big(f(\x) + g(\x)\ci\big) \geq -\alpha\big(b(\x)\big)$.
\end{enumerate}
\end{definition}
We denote a state $\x_{ce}^{(c)} \in \partial \mathcal{D} \subseteq \mathbb{R}^{n_x}$ that violates \eqref{eq:Nagumo-CPI} as a \emph{safety counterexample}.

\subsection{Problem Formulation: DAE-Aware CBFs}
For DAE systems~\eqref{eq:dyn}, the system trajectories are restricted to the constraint manifold $\mathcal{M}$ defined by the algebraic constraints. Following the CBF approach from Section~\ref{subsec:safety}, we seek to construct a barrier function $b(\x)$ that defines a set $\mathcal{D}\subseteq \mathcal{C}$. The problem of designing safe control policies for DAE systems is formalized as follows. 

\begin{problem}
\label{problem:dae_cbf}
Given a DAE system described by \eqref{eq:dyn} and a safe set $\mathcal{C}$, design a safe control policy $\mu: \mathbb{R}^{n_x} \rightarrow \mathbb{R}^{n_u}$ such that the DAE system \eqref{eq:dyn} maintains positive invariance of $\mathcal{D} \cap \mathcal{M}$ for all initial states $\x(0) \in \mathcal{D} \cap \mathcal{M}$. 
\end{problem}
\begin{figure}[t]
  \centering
  \includegraphics[scale=0.67]{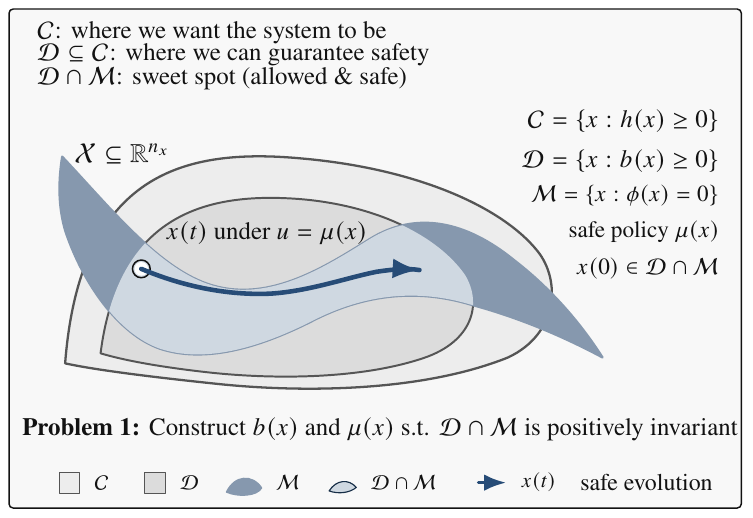}
  \vspace{-0.2cm}
  \caption{Geometric illustration of Problem~\ref{problem:dae_cbf}. The state space contains the constraint manifold $\mathcal{M}$, the safe set $\mathcal{C}$, and the candidate barrier set $\mathcal{D}$. The goal is to design a barrier function $b(\x)$ and a safe control policy $\mu$ such that $\mathcal{D} \cap \mathcal{M} \subseteq \mathcal{C} \cap \mathcal{M}$ is positively invariant under the closed-loop DAE dynamics~\eqref{eq:dyn}, ensuring that all system trajectories initialized in $\mathcal{D} \cap \mathcal{M}$ evolve on the constraint manifold and remain within the safe set for all $t \ge 0$.}
  \label{fig:problem1_concept}
  \vspace{-0.6cm}
\end{figure}

While the CBF approach from Section~\ref{subsec:safety} is well-established for ODE systems, directly applying standard CBFs to DAE systems presents fundamental challenges. For DAE systems, the existence of a control input satisfying the CBF condition~\eqref{eq:CBF-def} is not guaranteed because the standard CBF formulation does not account for the algebraic constraints $\phi(\x) = 0$. Specifically, a control input that satisfies the CBF inequality $\frac{\partial b}{\partial \x}\big(f(\x) + g(\x)\ci\big) \geq -\alpha\big(b(\x)\big)$ may violate the constraint-manifold compatibility condition, leading to infeasibility of the CBF-QP~\eqref{eq:CBF-QP} on $\mathcal{M}$ and resulting in safety violations. 
The following example illustrates the limitations of applying traditional CBF methods to DAE systems.

\begin{exmpl}[Motivating Example]\label{exmpl:DAE_unaware_example_1}
\begin{figure*}[t]
    \vspace{-0.2cm}
    \centering
    \includegraphics[width=0.95\linewidth]{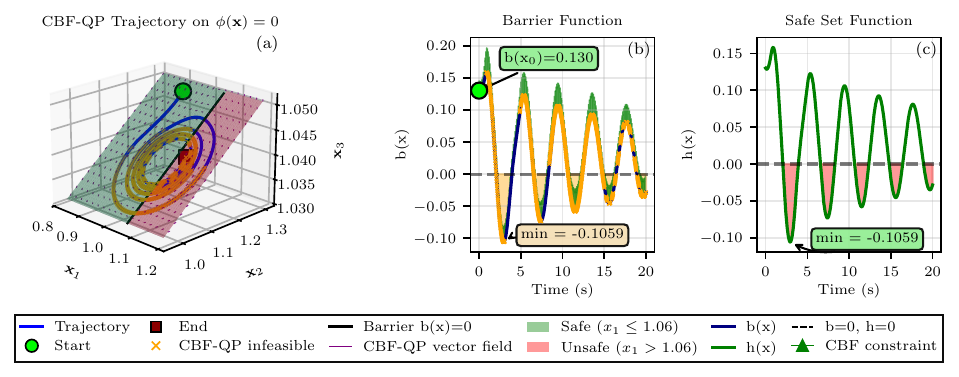}
    \vspace{-0.3cm}
    \caption{Simulation results for Example~\ref{exmpl:DAE_unaware_example_1} using a DAE-unaware CBF. The figure shows (a) the CBF-QP trajectory on the constraint manifold $\phi(\x)=0$, (b) the evolution of the barrier function $b(\x)$ (with $b(\x_0)=0.130$ and a minimum $ b(\x)=-0.1059$), and (c) the evolution of the safe set function $h(\x)$ (a minimum $h(\x)=-0.1059$). The CBF-QP becomes infeasible when the CBF inequality conflicts with the algebraic equality constraints, leading to violation of the safety condition $b(\x)\ge 0$.} \vspace{-0.6cm}
    \label{fig:motivating_infeasible}
\end{figure*}
Consider a nonlinear DAE model derived from a wind turbine power system~\cite{Tsourakis2009}:
\begin{subequations}\label{eq:wind_turbine}
    \begin{align}
    & \dot{x}_{1}=\alpha_1\left(u - \beta_1 x_3(x_2 - x_3)\right), \\
    & \dot{x}_{2}=\alpha_2\left(x_{1}-x_3\right), \\
    & 0=x_3^4 - \left[\beta_2 + \beta_3 x_3(x_2 - x_3)\right] x_3^2 + \beta_4,
    \end{align}
\end{subequations}
where $\x_d=(x_1, x_2)$ are the differential states, $\x_a=x_3$ is the algebraic state, $\alpha_i$ and $\beta_j$ are system parameters, and $\ci$ is the control input. The safe set is defined as $\mathcal{C} = \{\x \in \mathbb{R}^{3} : h(\x) = x_1 - x_{\max} \geq 0\}$. A traditional CBF $b(\x) = h(\x)$ is constructed without accounting for the algebraic constraint. Simulation results using a standard CBF QP are shown in Fig.~\ref{fig:motivating_infeasible}. The trajectory (Fig.~\ref{fig:motivating_infeasible}a) evolves on the constraint manifold until the QP becomes infeasible. This infeasibility causes the system to enter the intersection of the unsafe region and the constraint manifold. As shown in Fig.~\ref{fig:motivating_infeasible}b), the safety function $h(\x)$ drops below zero after infeasibility occurs. The CBF QP becomes infeasible because the control input required to render the safe set forward invariant may violate the algebraic constraints. Consequently, the associated QP admits an empty feasible set.
\end{exmpl}

\section{DAE-Aware CBFs}
\label{sec:dae_cbf}

This section introduces the main theoretical results; specifically, the framework for constructing CBFs that explicitly account for the algebraic constraints inherent in DAE systems. Tab.~\ref{tab:dae_cbf_roadmap} summarizes the main results of this section.

\begin{table}[h]
\centering
\caption{Summary of Main Results for DAE-Aware CBF Construction, outlining the key concepts in each subsection.}
\label{tab:dae_cbf_roadmap}
\begin{tabular}{@{}lll@{}}
\toprule
\textbf{Subsection} & \textbf{Key Concept} & \textbf{Main Results} \\ \midrule
\ref{subsec:prelim} Preliminaries & Lie derivatives + relative deg. & Defs. \ref{def:lie_derivative}-\ref{def:classical_relative_degree}; Thm. \ref{theorem:HOCBF-safety} \\
\ref{subsec:projected_fields} Projected Fields & Revealing hidden dynamics & Eq. \ref{eq:projected_fields}; Lem. \ref{lemma:index-relative-degree} \\
\ref{subsec:index1} Index-1 DAEs & Compatibility + safety & Thm. \ref{thm:positive_invariance}; Cor. \ref{lemma:dae_hocbf_index1} \\
\ref{subsec:index2plus} Higher-Index & $\nu$-order diff. + projection & Prop. \ref{prop:manifold_invariance_indexnu}; Thm. \ref{thm:positive_invariance_indexnu} \\ \bottomrule
\end{tabular}
\vspace{-0.5cm}
\end{table}

\subsection{Preliminaries}\label{subsec:prelim}
We first introduce some preliminaries on Lie derivatives and relative degree for control-affine systems, which will be used in the subsequent development of DAE-aware CBFs.

For a control-affine system $\dot{\x} = f(\x) + g(\x)\ci$, the Lie derivative can be defined according to the following definition.
\begin{definition}[Lie derivative~\cite{khalil2002nonlinear}]
\label{def:lie_derivative}
The \emph{Lie derivative} of a smooth scalar function $h(\x)$ along a smooth vector field $f(\x)$ is the directional derivative of $h(\x)$ in the direction of $f(\x)$, defined as
\begin{equation*}
\mathcal{L}_f h(\x) := \frac{\partial h}{\partial \x} f(\x).
\end{equation*}
Furthermore, the Lie derivative, for $k \geq 1$, can be defined recursively as $\mathcal{L}_f^k h(\x) := \mathcal{L}_f\left(\mathcal{L}_f^{k-1} h(\x)\right)$ with $\mathcal{L}_f^0 h(\x) := h(\x)$.
\end{definition}
For a matrix-valued function $g(\x)$, the Lie derivative generalizes to $\mathcal{L}_g h(\x) := \frac{\partial h}{\partial \x} g(\x)$. The classical notion of relative degree for a scalar function with respect to a control-affine system is defined as follows.

\begin{definition}[Relative degree~\cite{Isidori2013}]
\label{def:classical_relative_degree}
A smooth function $h(\x)$ has relative degree $d \geq 1$ with respect to the system $\dot{\x} = f(\x) + g(\x) \ci$ if $d$ is the smallest integer such that 
\begin{enumerate}
    \item $\mathcal{L}_g  \mathcal{L}_f^k h(\x)=0$ for all $\x$ in a neighborhood of $\x^{\circ}$ and all $k<d-1$;
    \item $ \mathcal{L}_g  \mathcal{L}_f^{d-1} h\left(\x\right) \neq 0$.
\end{enumerate}
\end{definition}

Note that, by considering the relative degree, Theorem~\ref{theorem:CBF-safety} applies when the barrier function has relative degree one, meaning that the control input appears directly in the first time derivative of $b$. 
For barrier functions with an arbitrary relative degree $r \geq 1$, the following result introduces the higher-order CBF framework.

\begin{theorem}[High-Order CBF (HOCBF)~{\cite[Theorem~4]{xiao2021high}}]
\label{theorem:HOCBF-safety}
Consider the ODE system $\dot{\x} = f(\x) + g(\x)\ci$ and suppose $b(\x)$ has relative degree $r \geq 1$. Let the HOCBF hierarchy 
be defined as
\begin{align*}
\psi_0(\x) &:= b(\x), \\
\psi_i(\x) &:=  \mathcal{L}_f \psi_{i-1}(\x) + \alpha_i(\psi_{i-1}(\x)), \qquad i = 1, \ldots, r-1, \nonumber
\end{align*}
then the set $\mathcal{D} = \{\x : b(\x) \geq 0\}$ is positively invariant under any initial condition $\x(0) \in \mathcal{D}$ with $\psi_i(\x(0)) \geq 0$ for $i = 0, \ldots, r-1$ if there exists a control $\ci$ such that
\begin{equation}
\label{eq:hocbf_condition}
 \mathcal{L}_f \psi_{r-1}(\x) + \alpha_{r-1}(\psi_{r-1}(\x)) +  \mathcal{L}_g \psi_{r-1}(\x) \, \ci \geq 0
\end{equation}
holds for all $\x \in \mathcal{D}$.
\end{theorem}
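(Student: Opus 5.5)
The statement to prove is Theorem~\ref{theorem:HOCBF-safety}, the HOCBF result for the ODE $\dot{\x}=f(\x)+g(\x)\ci$. The plan is to argue by backward induction along the hierarchy $\psi_{r-1},\dots,\psi_0$. At each level we use the comparison lemma together with Nagumo-type reasoning: a nonnegative lower bound on $\dot\psi_{i-1}+\alpha_i(\psi_{i-1})$ keeps $\psi_{i-1}$ nonnegative whenever it starts nonnegative. This reduces the higher-order claim to $r$ applications of the first-order argument behind Theorem~\ref{theorem:CBF-safety}.

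\textbf{Step 1: identify $\dot\psi_i$ along trajectories.} Since $b$ has relative degree $r$ (Definition~\ref{def:classical_relative_degree}), $\mathcal{L}_g\mathcal{L}_f^k b\equiv 0$ for $k<r-1$. By induction on $i$, the same holds for the $\psi_i$: each $\psi_i$ with $i\le r-1$ is built only from $\mathcal{L}_f^k b$ with $k\le i$ and compositions with the $\alpha_j$. The chain rule then gives $\mathcal{L}_g\psi_{i-1}=0$ for $i\le r-1$. Hence along any closed-loop trajectory, for $i=1,\dots,r-1$,
\[
\dot\psi_{i-1}(\x(t))=\mathcal{L}_f\psi_{i-1}(\x(t)) \quad\text{and}\quad \dot\psi_{i-1}=\psi_i-\alpha_i(\psi_{i-1}).
\]
At the top level, $\dot\psi_{r-1}=\mathcal{L}_f\psi_{r-1}+\mathcal{L}_g\psi_{r-1}\ci$. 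Condition~\eqref{eq:hocbf_condition}, with $\alpha_r:=\alpha_{r-1}$ following the indexing in the statement, yields $\dot\psi_{r-1}\ge -\alpha_{r-1}(\psi_{r-1})$.

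\textbf{Step 2: the scalar comparison argument.} Let $y$ be absolutely continuous with $\dot y\ge -\alpha(y)+w(t)$, where $w\ge 0$, $y(0)\ge0$, and $\alpha$ is class-$\mathcal{K}$. I would show $y(t)\ge 0$ for all $t$. Suppose not, and let $t_1$ be a time with $y(t_1)<0$. Let $t_0=\sup\{t<t_1:y(t)\ge0\}$. On $(t_0,t_1]$ we have $y<0$, so $-\alpha(y)>0$ and therefore $\dot y>0$. Thus $y(t_1)>y(t_0)=0$, a contradiction. Equivalently, one can invoke the comparison lemma with the solution $z\equiv0$ of $\dot z=-\alpha(z)$, which requires local Lipschitz continuity of $\alpha$ to get uniqueness. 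The direct argument avoids this requirement. I would adopt this route so that only continuity of $\alpha_i$ is needed.

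\textbf{Step 3: induct downward.} Apply Step 2 to $y=\psi_{r-1}$ with $w\equiv0$. Using $\psi_{r-1}(\x(0))\ge0$, this gives $\psi_{r-1}(\x(t))\ge0$ for all $t$. Then $\dot\psi_{r-2}=\psi_{r-1}-\alpha_{r-1}(\psi_{r-2})\ge-\alpha_{r-1}(\psi_{r-2})$. Apply Step 2 again with $\psi_{r-2}(\x(0))\ge0$, and continue down to $\psi_0=b$. This gives $b(\x(t))\ge0$, i.e.\ $\x(t)\in\mathcal{D}$ on the whole existence interval.

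The main obstacle is the regularity bookkeeping rather than the induction. We need $\psi_i$ to be $C^1$, which requires $b\in C^r$ and the $\alpha_i$ to be sufficiently differentiable. We also need the closed-loop control to produce a solution: a locally Lipschitz feedback satisfying~\eqref{eq:hocbf_condition}. In addition, the hypothesis that~\eqref{eq:hocbf_condition} holds only on $\mathcal{D}$ must be reconciled with the argument. This is handled because the trajectory is shown to remain in $\mathcal{D}$ up to the first exit time, so a continuation argument on the maximal interval closes the loop. I would state these standing smoothness assumptions explicitly and handle the "condition only on $\mathcal{D}$" issue via that first-exit-time argument.
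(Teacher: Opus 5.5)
Your Steps~1--3 are correct and match the standard proof. The paper does not prove this result itself; it imports it from \cite[Theorem~4]{xiao2021high}. That proof is the same backward induction you give: use the identities $\dot\psi_{i-1}=\psi_i-\alpha_i(\psi_{i-1})$ from the relative-degree assumption, then apply a first-order comparison lemma at each level. Your Step~2 is a valid self-contained replacement for that lemma. It needs only $\alpha(y)<0$ for $y<0$, which the extended class-$\mathcal{K}_\infty$ functions used here provide. One small caveat: Step~1 needs relative degree $r$ uniformly along the trajectory, not just near a single point as in Definition~\ref{def:classical_relative_degree}.

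The gap is in your final paragraph. If \eqref{eq:hocbf_condition} is only known to hold for $\x\in\mathcal{D}$, the first-exit-time argument does not close. Let $\tau$ be the exit time. Steps~2--3 on $[0,\tau]$ only give $b\ge 0$ on $[0,\tau]$, which is true by definition of $\tau$, and they say nothing about $(\tau,\tau+\epsilon)$, where the top-level inequality may fail.

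\begin{itemize}
\item You can rescue the cases where some $\psi_j(\x(\tau))>0$, or where the top inequality is strict at $\x(\tau)$. The lower identities hold regardless of the control, so the comparison carries past $\tau$.
\item In the degenerate case $\psi_0=\dots=\psi_{r-1}=0$ with $\dot\psi_{r-1}(\tau)=0$, no comparison argument suffices.
\end{itemize}

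For a counterexample, take $r=1$, $\dot x_1=u$, $b=x_1$, and the feedback $\mu(x_1)=0$ for $x_1\ge 0$ and $\mu(x_1)=-\sqrt{-x_1}$ for $x_1<0$. This feedback is continuous but not Lipschitz at $0$. It satisfies \eqref{eq:hocbf_condition} on all of $\mathcal{D}$. Yet $x_1(t)=-t^2/4$ is a closed-loop solution from $x_1(0)=0$ that leaves $\mathcal{D}$ immediately. Your continuation step uses nothing beyond existence of solutions and comparison on $\mathcal{D}$, and both are available in this example. So it cannot be correct as sketched; uniqueness has to be used in an essential way.

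Any one of three fixes closes the gap:
\begin{itemize}
\item Read the hypothesis as Xiao and Belta do, which is also how Theorem~\ref{theorem:CBF-safety} is phrased: the applied input satisfies the inequality at $\x(t)$ for all $t\ge 0$. Then Steps~1--3 already finish the proof.
\item Impose the inequality on an open set containing $\mathcal{D}$, so the comparison argument extends past $\tau$.
\item Apply Nagumo's theorem (Theorem~\ref{theorem:Nagumo}) to $K=\bigcap_i\{\psi_i\ge 0\}$, with a locally Lipschitz feedback to guarantee uniqueness. Relative degree $r$ makes $\nabla\psi_0,\dots,\nabla\psi_{r-1}$ linearly independent, so the tangent cone to $K$ is $\{v:\nabla\psi_i\,v\ge 0\text{ for active }i\}$. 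Your Step~1 identities, together with \eqref{eq:hocbf_condition} at points of $K\subseteq\mathcal{D}$, place the closed-loop field in this cone.
\end{itemize}
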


\vspace{0.15cm}
\parstart{Revisiting Example~\ref{exmpl:DAE_unaware_example_1}} we revisit the wind turbine DAE~\eqref{eq:wind_turbine} and analyze safety using the standard HOCBF framework of Theorem~\ref{theorem:HOCBF-safety}. On the constraint manifold $\mathcal{M}$, the algebraic constraint implicitly determines $x_3$ as a smooth function of $(x_1, x_2)$. Consequently, the control input $\ci$ steers the differential states $\x_d = (x_1, x_2)$ through~\eqref{eq:dyn_diff}, which in turn constrains the algebraic state $x_3$ via the implicit relation $\phi(\x_d, x_3) = 0$. This $\ci \to \x_d \to \x_a$ projection chain is the mechanism by which safety enforcement on $\x_d$ hold for DAE system \eqref{eq:dyn} on manifold $\mathcal{M}$.

The following theorem provides necessary and sufficient conditions for a set to be positive invariant.

\begin{theorem}[Nagumo's Theorem~{\cite[Section~4.2]{blanchini2008set}}]
\label{theorem:Nagumo}
A closed set $\mathcal{D}$ is \emph{controlled positively invariant} if, for every $\x \in \partial \mathcal{D}$, where $\partial \mathcal{D}$ denotes the boundary of $\mathcal{D}$, there exists a control input $\ci \in \mathcal{U}$ such that
\begin{equation}
\label{eq:Nagumo-CPI}
f(\x) + g(\x)\ci \in \mathcal{A}_{\mathcal{D}}(\x),
\end{equation}
where $\mathcal{A}_{\mathcal{D}}(\x)$ denotes the tangent cone to $\mathcal{D}$ at $\x$.
\end{theorem}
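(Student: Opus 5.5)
We would prove the sufficiency direction of Theorem~\ref{theorem:Nagumo} by constructing a feedback and showing the closed-loop trajectories cannot leave $\mathcal{D}$. Necessity is simpler and is handled first. If $\mathcal{D}$ is controlled positively invariant, fix $\x\in\partial\mathcal{D}$ and an admissible input $\ci$ whose trajectory stays in $\mathcal{D}$. Then $(\x(t)-\x)/t \to f(\x)+g(\x)\ci$ as $t\downarrow 0$. Since every $\x(t)\in\mathcal{D}$, the Bouligand characterization of the tangent cone,
\[
\mathcal{A}_{\mathcal{D}}(\x)=\Bigl\{v:\ \liminf_{t\downarrow 0}\tfrac{\mathrm{dist}(\x+tv,\mathcal{D})}{t}=0\Bigr\},
\]
places this limit in $\mathcal{A}_{\mathcal{D}}(\x)$. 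This step needs only continuity of $f,g$ and, when $\mathcal{U}$ is compact, a subsequence argument on inputs.

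For sufficiency we would reduce to the uncontrolled Nagumo theorem for differential inclusions. Define the set-valued map $F(\x)=\{f(\x)+g(\x)\ci:\ci\in\mathcal{U}\}$. Under the standing assumptions that $\mathcal{U}$ is compact and convex and $f,g$ are smooth, $F$ is upper semicontinuous with nonempty compact convex values, and it has linear growth locally. The hypothesis then says $F(\x)\cap\mathcal{A}_{\mathcal{D}}(\x)\neq\emptyset$ on $\partial\mathcal{D}$; in the interior the condition is trivial, since $\mathcal{A}_{\mathcal{D}}(\x)=\mathbb{R}^{n_x}$. The viability theorem (Haddad–Aubin) then yields, from every $\x(0)\in\mathcal{D}$, a solution $\dot\x\in F(\x)$ with $\x(t)\in\mathcal{D}$ on its interval of existence. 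A measurable selection (Filippov's lemma) recovers a measurable input $\ci(t)\in\mathcal{U}$ realizing this trajectory, which is the definition of controlled positive invariance.

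To keep the argument self-contained rather than citing viability theory, we would prove the key step by Euler polygons. On a partition of mesh $\delta$, pick at each node $\x_k$ the projection $p_k$ of $\x_k$ onto $\mathcal{D}$. At $p_k$, choose $\ci_k$ with $f(p_k)+g(p_k)\ci_k\in\mathcal{A}_{\mathcal{D}}(p_k)$, and step $\x_{k+1}=\x_k+\delta(f(p_k)+g(p_k)\ci_k)$. The tangency gives $\mathrm{dist}(p_k+\delta v_k,\mathcal{D})=o(\delta)$. Lipschitz continuity of the projection-related estimates then gives the recursion $d_{k+1}\le (1+L\delta)d_k+o(\delta)$ for $d_k=\mathrm{dist}(\x_k,\mathcal{D})$, and discrete Gronwall yields $d_k\to 0$ uniformly as $\delta\to 0$. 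Arzelà–Ascoli then gives a limit trajectory, and convexity of $F$ makes the limit a solution of the inclusion lying in the closed set $\mathcal{D}$.

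The main obstacle is the $o(\delta)$ term. The tangent-cone condition holds only pointwise, so uniformity of the remainder over the compact portion of $\partial\mathcal{D}$ visited is not automatic. We expect to handle it as the standard proofs do: choose $v_k$ only approximately tangent, with $\mathrm{dist}(p_k+\delta v_k,\mathcal{D})\le \delta\varepsilon$, allow the step size to vary with the node, and then let $\varepsilon\to 0$ along a diagonal sequence. Convexity of $F$, which comes from affine dependence on $\ci$ and convex $\mathcal{U}$, is what makes the closure step pass to the limit.
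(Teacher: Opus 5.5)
The paper gives no proof of this statement; it is imported from Blanchini--Miani, so your proposal has to stand on its own. Your main route is the standard proof of the controlled version. You rewrite the system as the inclusion $\dot x\in F(x)=f(x)+g(x)\mathcal{U}$, apply the Haddad--Aubin viability theorem (interior points are trivial, since $\mathcal{A}_{\mathcal{D}}(x)=\mathbb{R}^{n_x}$ there), and recover a measurable input by Filippov's selection lemma. This is correct under the hypotheses you add.

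What it gains over the citation is that it makes those hypotheses visible, so state them as hypotheses rather than ``standing assumptions.''
\begin{itemize}
\item Compactness and convexity of $\mathcal{U}$ appear nowhere in the statement. The paper later takes $\mathcal{U}=\{u: A_u u\le r_u\}$, which need not be bounded. Without these properties $F$ is not compact- and convex-valued, and the viability theorem does not apply.
\item Convexity is also what your necessity argument needs for measurable inputs, since limits of $t^{-1}\int_0^t u(s)\,ds$ must lie in $\mathcal{U}$. Necessity is not part of this statement anyway.
\item Smoothness of $f,g$ gives local boundedness of $F$, not linear growth. You therefore get a viable solution only on a maximal interval unless $\mathcal{D}$ is compact or a global growth bound holds.
\item Be explicit that what you prove is viability: each initial state admits \emph{some} open-loop input keeping $x(t)\in\mathcal{D}$. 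The paper's notion of invariance (Definition~\ref{def:positive-invariance}, and the QP filter built on it) concerns a state-feedback policy $\mu$. Pointwise tangency does not by itself produce one: a pointwise selection from $\{u\in\mathcal{U}: f(x)+g(x)u\in\mathcal{A}_{\mathcal{D}}(x)\}$ is typically discontinuous, and its closed loop need not be well posed.
\end{itemize}

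Your self-contained Euler-polygon variant has a genuine hole exactly where you suspect.

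First, the $o(\delta)$ claim is false for the Bouligand cone. Membership $v\in\mathcal{A}_{\mathcal{D}}(p)$ only gives $\mathrm{dist}(p+hv,\mathcal{D})/h\to 0$ along \emph{some} sequence $h_n\downarrow 0$, not for your prescribed mesh $\delta$. For example, take $\mathcal{D}=\{0\}\cup\{2^{-n}:n\in\mathbb{N}\}\subset\mathbb{R}$, $p=0$, $v=1$; at $h=3\cdot 2^{-n-2}$ the ratio is $1/3$. So $\mathrm{dist}(p_k+\delta v_k,\mathcal{D})=o(\delta)$ fails, and node-dependent steps are forced.

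Second, once steps depend on the node, the real danger is that the $h_k$ shrink so fast that $\sum_k h_k$ stalls before the final time. Letting $\varepsilon\to 0$ along a diagonal sequence does nothing about this.

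The standard cure is a uniform lower bound on step lengths:
\begin{itemize}
\item For each $x$ in the compact set $\mathcal{D}\cap\bar B_R$, pick $v_x\in F(x)\cap\mathcal{A}_{\mathcal{D}}(x)$ and $h_x\in(0,\delta]$ with $\mathrm{dist}(x+h_xv_x,\mathcal{D})\le\varepsilon h_x$.
\item Extract a finite subcover by balls $B(x_j,\varepsilon h_{x_j})$.
\item From any $y\in\mathcal{D}$ in the $j$-th ball, jump to a point of $\mathcal{D}$ nearest $x_j+h_{x_j}v_{x_j}$. The velocity used is within $2\varepsilon$ of $v_{x_j}\in F(x_j)$, with $|y-x_j|\le\varepsilon h_{x_j}$, so the polygon stays $O(\varepsilon)$-close to the graph of $F$.
\item Every step is then at least $\min_j h_{x_j}>0$.
\end{itemize}
A continuation argument that restarts from the limit of the polygon at an accumulation time also works.

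Finally, you do not need any Lipschitz property of the projection, and for nonconvex $\mathcal{D}$ you do not have one. With the velocity evaluated at $p_k$, the triangle inequality gives $d_{k+1}\le d_k+\mathrm{dist}(p_k+h_kv_k,\mathcal{D})$ directly.
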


Theorem~\ref{theorem:Nagumo} states that a set $\mathcal{D}$ is positive invariant if and only if for every state on the boundary of $\mathcal{D}$, there exists a control input such that the derivative of the states lies within the tangent cone of $\mathcal{D}$. This condition ensures that the system cannot leave the safe set $\mathcal{D}$ once it is inside. The next section presents preliminary results that will be used to derive the verification conditions for safety CBFs.
\subsection{Projected Vector Fields}\label{subsec:projected_fields}
We begin by characterizing the relationship between the dynamic state $\x_{d}$ and the algebraic state $\x_{a}$ imposed by the constraint manifold $\mathcal{M} = \{\x : \phi(\x) = 0\}$, leading to the definition of \emph{projected vector fields} that capture the constrained dynamics on the manifold defined by the algebraic equations. For any feasible trajectory of the DAE system \eqref{eq:dyn}, the algebraic constraints must remain satisfied at all times, i.e., $\phi(\x(t)) = 0$ for all $t \geq 0$. This requirement creates a hierarchical structure: the control input $\ci$ directly influences the dynamic state $\x_{d}$ through the differential equation $\dot{\x}_{d} = f(\x) + g(\x)\ci$, and the dynamic state in turn affects the algebraic state $\x_{a}$ through the implicit constraint $\phi(\x_{d}, \x_{a}) = 0$. This indirect coupling is fundamental to understanding how safety can be enforced in DAE systems. 

To analyze how the algebraic constraints couple the differential and algebraic states, we decompose the Jacobian of $\phi$ with respect to the dynamic variables $\x_{d}$ and the algebraic variables $\x_{a}$ as
\begin{align*}
    J_{d}(\x) &:= \nabla_{\x_{d}} \phi(\x), & J_{a}(\x) &:= \nabla_{\x_{a}} \phi(\x).
\end{align*}
We also define $f_{d}(\x) := f(\x)$ and $g_{d}(\x) := g(\x)$ as the drift and input matrices that appear in \eqref{eq:dyn}. Differentiating the constraint $\phi(\x(t)) = 0$ with respect to time yields the following \emph{compatibility condition}
\begin{equation}
\label{eq:dae_compatibility}
    J_{d}(\x) \big(f_{d}(\x) + g_{d}(\x) \ci\big) + J_{a}(\x) \dot{\x}_{a} = 0.
\end{equation}
This condition reveals that the algebraic state derivative $\dot{\x}_{a}$ is not directly controlled but must be implicitly determined to maintain consistency with the algebraic constraints. The solvability of \eqref{eq:dae_compatibility} for $\dot{\x}_{a}$ depends critically on the rank properties of $J_{a}(\x)$, which in turn determines the \emph{differentiation index} of the DAE system.

To construct a DAE-aware CBF, we must understand how the control input $\ci$ propagates through this hierarchical structure. The key question is: how many differentiation steps are required before $\ci$ affects a given function of the state? This depends on two distinct concepts: the \emph{differentiation index} of the algebraic constraint and the \emph{relative degree} of the barrier function on the manifold $\mathcal{M}$.

For DAE systems, we extend this notion to the algebraic constraint $\phi(\x)$ to characterize how many times each component must be differentiated before the control appears explicitly. 
The following fundamental result connects the differentiation index $\nu$ to the relative degree $d$.
Since the algebraic constraint $\phi$ couples to only a subset of the differential states through $J_d^{(\nu)}$, the standard relative degree of $\x_d$ may overcount the actuation available to $\phi$. We therefore define the \emph{constraint-coupled relative degree} $d'$ of the differential subsystem \eqref{eq:dyn_diff} as the smallest integer such that $J_d^{(\nu)}(\x)\, \mathcal{L}_g \mathcal{L}_f^{d^{\prime}-1} \x_d(\x) \neq 0$.

\begin{lemma}
\label{lemma:index-relative-degree}
Consider the DAE system \eqref{eq:dyn} with differentiation index $\nu$ and constraint-coupled relative degree $d'$. Then the relative degree $d$ of the algebraic constraint $\phi$ with respect to the full DAE system satisfies
\begin{equation*}
d = \nu + d^{\prime} - 1.
\end{equation*}
In particular, when $d^{\prime} = 1$ (i.e., $J_d^{(\nu)} g_d \neq 0$), we have $d = \nu$.
\end{lemma}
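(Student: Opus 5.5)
The plan is to count differentiations of $\phi$ along solutions of \eqref{eq:dyn} in two stages. In the first stage the control has not yet reached the constraint. In the second stage the control propagates through the differential chain of \eqref{eq:dyn_diff}.

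\emph{Step 1 (index stage).} I will write $\phi^{(k)}$ for the $k$-th total time derivative of $\phi(\x(t))$. By Definition~\ref{def:inde(1)}, the first $\nu-1$ differentiations only produce hidden constraints, i.e., relations that restrict $\x$ (and $\dot{\x}_a$ through \eqref{eq:dae_compatibility}). Only at the $\nu$-th differentiation does $\dot{\x}_a$ become uniquely determined. I will define $J_d^{(\nu)}(\x)$ as the Jacobian multiplying $\dot{\x}_d$ in $\phi^{(\nu)}$ once the algebraic derivatives are eliminated, so that
\[
\phi^{(\nu)} = J_d^{(\nu)}(\x)\,\dot{\x}_d + \Gamma_\nu(\x),
\]
where $\Gamma_\nu$ collects the $\ci$-free terms. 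The point to establish is that $\ci$ cannot appear in $\phi^{(k)}$ for $k<\nu$. If it did, that derivative would already be an equation solvable for $\ci$ rather than a hidden constraint, which contradicts minimality of $\nu$. This matches condition~1 of Definition~\ref{def:classical_relative_degree} for the indices $k<\nu-1$.

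\emph{Step 2 (relative-degree stage).} Next I substitute the expansion of the differential states along \eqref{eq:dyn_diff}. The $j$-th derivative of $\x_d$ equals $\mathcal{L}_f^{j}\x_d$ plus terms involving $\ci$, and the first $\ci$-dependent term has the form $\mathcal{L}_g\mathcal{L}_f^{j-1}\x_d\,\ci$. Differentiating $\phi^{(\nu)}$ a further $d'-1$ times and applying the product rule, the only term that can carry $\ci$ at the lowest order is
\[
J_d^{(\nu)}\,\mathcal{L}_g\mathcal{L}_f^{d'-1}\x_d\,\ci .
\]
By definition of $d'$ this coefficient is nonzero, and $J_d^{(\nu)}\mathcal{L}_g\mathcal{L}_f^{j-1}\x_d=0$ for $j<d'$. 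It follows that $\ci$ first appears in $\phi^{(\nu+d'-1)}$, which gives $d=\nu+d'-1$. The special case $d'=1$ reduces to $J_d^{(\nu)} g_d\neq 0$ and hence $d=\nu$.

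\emph{Main obstacle.} The delicate step is showing that the cross terms from differentiating $J_d^{(\nu)}(\x)$ and $\Gamma_\nu(\x)$ do not introduce $\ci$ earlier than $J_d^{(\nu)}\mathcal{L}_g\mathcal{L}_f^{d'-1}\x_d$ does. Each time derivative of a function of $\x$ brings in $\dot{\x}_d$, and $\dot{\x}_d$ contains $g\ci$. So $\ci$ could appear at order $\nu+1$ through $\tfrac{d}{dt}J_d^{(\nu)}\cdot\dot{\x}_d$, even when $d'>1$.

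My first approach would be to interpret the definition of $d'$ as a condition on the full map $\x_d\mapsto\phi^{(\nu)}$. Concretely, I would require that the coupled output $y:=J_d^{(\nu)}$-weighted $\x_d$ has input-free derivatives below order $d'$, in the sense of Definition~\ref{def:classical_relative_degree}, which is a local condition near $\x^{\circ}$. If this does not follow directly, I would state it explicitly as the standing interpretation of the constraint-coupled relative degree. Under that interpretation the coefficients of $\ci$ vanish identically on a neighborhood, so their derivatives vanish too and contribute nothing. The remaining argument is then bookkeeping of the lowest-order $\ci$ term.
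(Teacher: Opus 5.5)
Your two-stage count is the same argument the paper gives. The paper's proof only asserts that after $\nu$ differentiations $\phi^{(\nu)}$ sees $\x_d$ only through $J_d^{(\nu)}$, and that $\ci$ then needs $d'$ further derivatives of $\x_d$ ``as seen through $J_d^{(\nu)}$''. It justifies neither stage, and the justifications you supply for them do not go through.

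\textbf{Step~1.} Minimality of $\nu$ does not stop $\ci$ from appearing in $\phi^{(k)}$ for $k<\nu$. The index counts the differentiations needed to determine $\dot{\x}_a$, and an equation solvable for $\ci$ does not determine $\dot{\x}_a$. For example, take $\dot x_1 = x_a+\ci$, $0=x_1$. Then $\dot\phi = x_a+\ci$, so $\ci$ enters at order $1$ with $J_d g_d=1$. However, $\dot x_a=-\dot\ci$ emerges only after a second differentiation, so $\nu=2$ and $\nu+d'-1\ge 2$. The fact that the lower-order derivatives are $\ci$-free functions of $\x$ must therefore be assumed. The paper builds it in tacitly in Remark~\ref{remark:index_reduction}.

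\textbf{Step~2.} The obstacle you identify is fatal for the paper's definition of $d'$, which freezes $J_d^{(\nu)}$ and contracts it with $\mathcal{L}_g\mathcal{L}_f^{d'-1}\x_d$. Suppose $J_d^{(\nu)}g_d\equiv 0$ on a neighborhood, and hold $\x_a$ fixed. Differentiating that identity along $f_d$ and using symmetry of the Hessian shows that the coefficient of $\ci$ at the next order is $J_d^{(\nu)}\big(\tfrac{\partial f_d}{\partial \x_d}g_d-\tfrac{\partial g_d}{\partial \x_d}f_d\big)$, not $J_d^{(\nu)}\tfrac{\partial f_d}{\partial \x_d}g_d$. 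The two agree when $g_d$ is constant, which is why the wind-turbine computation works.

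A counterexample with non-constant $g_d$: take $\phi=x_a-\tfrac12(x_1^2+x_2^2)$, $f_d=(1,0)^\top$, $g_d=(-x_2,x_1)^\top$. Here $\nu=1$ and $J_d\,\mathcal{L}_g\mathcal{L}_f^{k}\x_d=0$ for all $k\ge 0$, so no $d'$ exists in the paper's sense. Yet on $\mathcal{M}$ one has $\dot x_a=x_1$ and $\ddot x_a=1-x_2\ci$, so $\ci$ enters at order $2$.

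Your fallback is the right repair: redefine $d'$ as the genuine local relative degree of $\phi^{(\nu-1)}$, the function whose $\x_d$-gradient is $J_d^{(\nu)}$. But this proves a modified lemma. In that version $d=\nu+d'-1$ is nearly tautological, since $\phi^{(\nu-1+j)}=\big(\phi^{(\nu-1)}\big)^{(j)}$.

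Finally, your decomposition $\phi^{(\nu)}=J_d^{(\nu)}\dot{\x}_d+\Gamma_\nu(\x)$ ``once the algebraic derivatives are eliminated'' needs a fixed meaning. If $\dot{\x}_a$ is eliminated using $\phi^{(\nu)}=0$ itself, then $\phi^{(\nu)}\equiv 0$ and there is no $\ci$-coefficient to track. You must decide whether $d$ is computed with $\x_a$ treated as an independent variable, or as the relative degree of the output $\x_a$.
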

\begin{proof}
After $\nu$ differentiations of $\phi$, the algebraic state $\x_a$ becomes expressible as a function of $\x_d$ (Definition~\ref{def:inde(1)}), and the $\nu$-th derivative $\phi^{(\nu)}$ depends on $\x_d$ through $J_d^{(\nu)}$. For $\ci$ to appear in $\phi^{(\nu)}$, we need $d'$ additional derivatives of $\x_d$ as seen through $J_d^{(\nu)}$ (Definition~\ref{def:classical_relative_degree}). Hence $d = \nu + d' - 1$.
\end{proof}

This lemma reveals that the relative degree of the algebraic constraint decomposes into contributions from the DAE index and the actuation structure of the differential states. The result generalizes the relationship discovered in~\cite[Lemma 1]{di2019globally} to arbitrary relative degrees of the differential subsystem.

\parstart{Revisiting Example~\ref{exmpl:DAE_unaware_example_1}}
For the wind turbine DAE~\eqref{eq:wind_turbine}, $J_a(\x) = \partial\phi/\partial x_3 \neq 0$ generically, so $\nu = 1$. Since $J_d = [0,\; -\beta_3 x_3^3]$ and $g_d = [\alpha_1,\; 0]^\top$, we have $J_d g_d = 0$; that is, $\ci$ does not appear after a single differentiation of $\phi$. However, $\dot{x}_2 = \alpha_2(x_1 - x_3)$ depends on $x_1$, whose derivative contains $\ci$, so $J_d \mathcal{L}_g \mathcal{L}_f \x_d \neq 0$ and $d' = 2$. By Lemma~\ref{lemma:index-relative-degree}, $d = 1 + 2 - 1 = 2$, reflecting the two-hop chain $\ci \to x_1 \to x_2 \to \phi$.

For simplicity, let $\eta^{(\nu)}(\x)$ denote the \emph{control influence matrix} at level $\nu$, defined as
\begin{equation}
    \label{eq:eta}
\eta^{(\nu)}(\x) := (J_d^{(\nu)}(\x) f_d(\x))^{d^{\prime}-1} J_d^{(\nu)}(\x).
\end{equation}
When $d^{\prime} = 1$, this simplifies to $\eta^{(\nu)}(\x) = J_d^{(\nu)}(\x)$. Let $J_{ext}(\x) = \begin{bmatrix} J_a^{(\nu)}(\x) & \eta^{(\nu)}(\x) g_d(\x) \end{bmatrix}$; we make the following assumption for the remainder of the paper.
\begin{assumption}
\label{assumption:regularity}
We assume that the DAE system~\eqref{eq:dyn} has Jacobian $J_{ext}(\x)$ with full row rank for all $x \in \mathcal{C} \cap \mathcal{M}$.
\end{assumption}

The above assumption is not restrictive for physical systems that are commonly modeled as DAEs, since such systems naturally admit strangeness-free representations (power system generator and electric circuit models~\cite{Kazma2025e,Duan2016}) or can be transformed into such forms (mechanical system models~\cite{Abdalmoaty2021,Park2024b}) using index reduction techniques~\cite{Volker2005,Linh2009}. 
Assumption~\ref{assumption:regularity} requires the extended Jacobian $J_{ext}(\x)$ to have full row rank for all $\x \in \mathcal{M}$. This allows for systems where $J_a^{(\nu)}(\x)$ alone may be rank deficient, but the control input $\ci$, acting through $\eta^{(\nu)}(\x) g_d(\x)$, provides additional degrees of freedom to satisfy the compatibility condition. When $J_a^{(\nu)}(\x)$ has full row rank, the extended Jacobian trivially has full row rank and the projection $P^{(\nu)}(\x) = 0$.

\begin{figure}[t]
    \centering
\begin{tikzpicture}[
    >=Latex,
    font=\rmfamily\fontsize{8pt}{9pt}\selectfont, 
    line cap=round,
    line join=round,
    node distance=5mm and 3.2mm, 
    box/.style={
      draw=ink,
      rounded corners=1pt,
      line width=0.4pt,
      fill=gfill,
      align=center,
      inner sep=3pt,
      text width=#1
    },
    emphbox/.style={
      draw=edgegray,
      rounded corners=1pt,
      line width=0.6pt,
      fill=gfill2,
      align=center,
      inner sep=3pt,
      text width=#1
    },
    arr/.style={->, draw=linegray, line width=0.6pt}
]

\def\colW{4.05cm}

\node[box=\colW] (A) {%
\textbf{1) Index-1 DAE}\\
$\dot x_d=f_d(x)+g_d(x)u$, $\phi(x)=0$
};

\node[box=\colW, right=of A] (B) {%
\textbf{2) Diff. Constraint}\\
$\frac{d}{dt}\phi(x)=0 \Rightarrow$ compatibility \eqref{eq:dae_compatibility}
};

\node[box=\colW, below=of A] (C) {%
\textbf{3) Solve Algebraic Jacobians}\\
$\dot x_a(x,u)$ via $J_a^\dagger$ \eqref{eq:dae_min_norm}
};

\node[box=\colW, below=of B] (D) {%
\textbf{4) Projected Fields}\\
$\hat f(x),\hat g(x)$ on $\mathcal{M}$ \eqref{eq:projected_fields}
};

\node[box=\colW, below=of C] (E) {%
\textbf{5) Operators}\\
$\mathcal L_{\hat f}, \mathcal L_{\hat g}$ and $P(x)$ \eqref{eq:projection_operator}
};

\node[emphbox=\colW, below=of D] (F) {%
\textbf{6) Enforce Conditions}\\
(i) \eqref{eq:invariance_projection} and (ii) \eqref{eq:invariance_cbf}
};

\node[emphbox=8.4cm, below=6mm of $(E.south)!0.5!(F.south)$, anchor=north] (G) {%
\textbf{Conclusion (Theorem \ref{thm:positive_invariance})}: If conditions (i)--(ii) hold on $\mathcal{D}\cap\mathcal{M}$, the set is positively invariant.
};

\draw[arr] (A) -- (B);
\draw[arr] (B.south) -- ++(0,-0.15) -| (C.north);
\draw[arr] (C) -- (D);
\draw[arr] (D.south) -- ++(0,-0.15) -| (E.north);
\draw[arr] (E) -- (F);

\draw[arr] (F.south) -- (F.south |- G.north);

\end{tikzpicture}
    \vspace{-0.4cm}
    \caption{Roadmap for the derivation of DAE-aware CBF conditions for index-1 systems. The algebraic constraints $\phi(\x) = 0$ are differentiated to expose the compatibility condition and the constraint Jacobians $J_d(\x)$ and $J_a(\x)$. Under the invertibility assumption on $J_a(\x)$, the algebraic states are eliminated via the implicit function theorem, and the DAE~\eqref{eq:dyn} is projected onto the constraint manifold $\mathcal{M}$. The CBF condition is then lifted back to the full state space, producing DAE-aware CBF conditions that jointly enforce manifold compatibility and positive invariance of $\mathcal{D} \cap \mathcal{M}$.}
    \label{fig:roadmapCBF}
    \vspace{-0.5cm}
\end{figure}

\subsection{CBFs for Index-1 DAEs}\label{subsec:index1}
We first consider the class of index-1 DAE systems in~\eqref{eq:dyn}.  
The derivation of the proposed DAE-aware CBF conditions is summarized in Fig.~\ref{fig:roadmapCBF}. Starting from the differential–algebraic dynamics, the algebraic constraint $\phi(\x)=0$ is differentiated to expose the compatibility condition and the associated constraint Jacobians. This enables the elimination of the algebraic state and the construction of projected vector fields on the constraint manifold $\mathcal{M}$. The resulting projected dynamics allow the CBF condition to be formulated consistently with the algebraic constraints.

Under Assumption~\ref{assumption:regularity}, any trajectory starting from $\x(0) \in \mathcal{M}$ remains on $\mathcal{M}$ for all $t \geq 0$ if and only if the control input $\ci(t)$ satisfies the compatibility condition \eqref{eq:dae_compatibility}. This structural property ensures that the compatibility condition \eqref{eq:dae_compatibility} can always be satisfied by an appropriate choice of $(\dot{\x}_a, \ci)$. To see this, note that manifold invariance requires $\phi(\x(t)) = 0$ for all $t \geq 0$. Differentiating with respect to time and applying the chain rule yields
\begin{equation*}
\frac{d}{dt}\phi(\x(t)) = J_d(\x) \dot{\x}_d + J_a(\x) \dot{\x}_a = 0.
\end{equation*}

Substituting the differential dynamics $\dot{\x}_d = f_d(\x) + g_d(\x)\ci$ and
transposing gives the compatibility condition \eqref{eq:dae_compatibility}.
Rewriting this condition shows that the algebraic state derivative must satisfy
\begin{equation*}
J_a(\x)\dot{\x}_a = -J_d(\x)\big(f_d(\x) + g_d(\x)\ci\big).
\end{equation*}
Since the extended Jacobian $J_{ext}(\x)$ has full row rank by Assumption~\ref{assumption:regularity}, this linear system is consistent and admits a solution for $\dot{\x}_a$. A solution is given by
\begin{equation}
\label{eq:dae_min_norm}
\dot{\x}_a(\x,\ci)
    = -J_a(\x)^{\dagger} J_d(\x)\big(f_d(\x) + g_d(\x)\ci\big),
\end{equation}
where $J_a(\x)^{\dagger}$ denotes the Moore-Penrose 
pseudoinverse. Substituting this expression into $J_a(\x)\dot{\x}_a$ verifies that
$\frac{d}{dt}\phi(\x(t))=0$, ensuring that the trajectory remains on the constraint manifold. 

This solution provides a natural construction for the constrained dynamics on $\mathcal{M}$ in control-affine form. We define the \emph{projected vector fields} by augmenting the dynamic state derivatives with the corresponding algebraic state derivatives from~\eqref{eq:dae_min_norm} as
\begin{equation}
\label{eq:projected_fields}
\begin{split}
    \hat{f}(\x) &:=
        \begin{bmatrix}
            f_{d}(\x) \\
            -J_{a}(\x)^{\dagger} J_{d}(\x) f_{d}(\x)
        \end{bmatrix},
        \\
    \hat{g}(\x) &:=
        \begin{bmatrix}
            g_{d}(\x) \\
            -J_{a}(\x)^{\dagger} J_{d}(\x) g_{d}(\x)
        \end{bmatrix}.
\end{split}
\end{equation}

Here, $\hat{f}(\x)$ represents the closed-loop dynamics when $\ci = 0$, while $\hat{g}(\x)$ captures the input-to-state map on $\mathcal{M}$. By construction, both fields automatically satisfy the compatibility condition \eqref{eq:dae_compatibility}, and the constrained dynamics on $\mathcal{M}$ admit the control-affine form $\dot{\x} = \hat{f}(\x) + \hat{g}(\x) \ci$. To verify that these fields are tangent to $\mathcal{M}$, note that 
\begin{align*}
\nabla\phi(\x) \hat{f}(\x)
&= J_d(\x) f_d(\x)
   - J_a(\x) J_a(\x)^\dagger J_d(\x) f_d(\x), \\
\nabla\phi(\x) \hat{g}(\x)
&= J_d(\x) g_d(\x)
   - J_a(\x) J_a(\x)^\dagger J_d(\x) g_d(\x).
\end{align*}
Since $J_a(\x)$ has full row rank, we have 
$J_a(\x) J_a(\x)^\dagger J_d(\x) = J_d(\x)$, and therefore both expressions
equal zero. Hence any trajectory evolving under 
$\dot{\x} = \hat{f}(\x) + \hat{g}(\x)\ci$ remains on~$\mathcal{M}$.

For any differentiable scalar function $\psi: \mathbb{R}^{n_x} \to \mathbb{R}$ (not to be confused with the algebraic constraint $\phi$), we write the Lie derivatives along the projected vector fields as
\begin{equation}
    \mathcal{L}_{\hat{f}} \psi(\x) := \nabla \psi(\x)^{\top} \hat{f}(\x),
    \quad
    \mathcal{L}_{\hat{g}} \psi(\x) := \nabla \psi(\x)^{\top} \hat{g}(\x),
\end{equation}
and define higher-order Lie derivatives recursively, e.g., $\mathcal{L}_{\hat{f}}^{2} \psi = \mathcal{L}_{\hat{f}}(\mathcal{L}_{\hat{f}} \psi)$ and $\mathcal{L}_{\hat{g}} \mathcal{L}_{\hat{f}} \psi = \mathcal{L}_{\hat{g}}(\mathcal{L}_{\hat{f}} \psi)$. When $\psi(\x) = \bar{\psi}(\x_{d})$ depends only on the dynamic state, we use the shorthand $L_{f_{d}} \bar{\psi}$ and $L_{g_{d}} \bar{\psi}$ for $\nabla_{\x_{d}} \bar{\psi}(\x_{d})^{\top} f_{d}(\x)$ and $\nabla_{\x_{d}} \bar{\psi}(\x_{d})^{\top} g_{d}(\x)$, respectively.

We define the projection operator
\begin{equation}
\label{eq:projection_operator}
P(\x) := I_{n_m} - J_a(\x) J_a(\x)^{\dagger},
\end{equation}
The compatibility condition requires $P(\x) \big(J_{d}(\x) f_{d}(\x) + \eta^{(1)}(\x) g_{d}(\x) \ci\big) = 0$, which constrains the control input $\ci$ to ensure manifold invariance. For index-1 DAEs, $\eta^{(1)}(\x) = J_d(\x)$ (defined in~\eqref{eq:eta} with $\nu = 1$ and $d' = 1$).

\begin{remark}\label{remark:projection_operator}
The projection operator $P(\x)$ defined in \eqref{eq:projection_operator} projects onto the null space of $J_a(\x)^\top$. When $J_a(\x)$ has full row rank, $P(\x) = 0$ and the compatibility condition is trivially satisfied. When $J_a(\x)$ is rank deficient, $P(\x) \neq 0$ and the condition $P(\x) (J_d(\x) f_d(\x) + \eta^{(1)}(\x) g_d(\x)\ci) = 0$ imposes a constraint on $\ci$. For higher-index DAEs, one must differentiate the algebraic constraints multiple times before applying projection techniques.
\end{remark}

Based on Assumption~\ref{assumption:regularity} and Remark~\ref{remark:projection_operator}, the dynamics
$$
\dot{\x} = \hat{f}(\x) + \hat{g}(\x)\ci,
$$
are well-defined and locally Lipschitz on $\mathcal{C} \cap \mathcal{M}$, allowing the proposed DAE-aware CBF conditions to be applied on the manifold.
The following theorem is the main result of this subsection. It provides sufficient conditions for the positive invariance of $\mathcal{D} \cap \mathcal{M}$. 

\begin{theorem}[Safety of Index-1 DAEs]
\label{thm:positive_invariance}
Consider the DAE system~\eqref{eq:dyn} with differentiation index $\nu = 1$ satisfying Assumption~\ref{assumption:no_input} and~\ref{assumption:regularity}. Suppose that $b(\x)$ is a CBF such that $b(\x(0)) \geq 0$ and $\x(0) \in \mathcal{D} \cap \mathcal{M}$. The set $\mathcal{D} \cap \mathcal{M}$ is positively invariant if the input $\ci(t)$ satisfies, for all $\x \in \mathcal{D} \cap \mathcal{M}$,
\begin{subequations}
\label{eq:invariance_conditions}
\begin{align}
P(\x) \big(J_{d}(\x) f_{d}(\x) + \eta^{(1)}(\x) g_{d}(\x) \ci\big) &= 0, \label{eq:invariance_projection} \\
\mathcal{L}_{\hat{f}} b(\x) + \mathcal{L}_{\hat{g}} b(\x) \ci &\geq -\alpha_{0}(b(\x)), \label{eq:invariance_cbf}
\end{align}
\end{subequations}
where $\eta^{(1)}(\x)$ is defined in~\eqref{eq:eta} (which simplifies to $J_d$ for $\nu = 1$).
\end{theorem}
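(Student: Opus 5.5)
The argument splits into two parts: manifold invariance from \eqref{eq:invariance_projection}, and invariance of $\mathcal{D}$ along the projected dynamics from \eqref{eq:invariance_cbf}. Their conjunction gives invariance of $\mathcal{D}\cap\mathcal{M}$.

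\textbf{Step 1 (manifold invariance).} Fix a control $\ci(t)$ satisfying \eqref{eq:invariance_conditions}, and let $\x(t)$ be the closed-loop solution with $\x(0)\in\mathcal{D}\cap\mathcal{M}$. Decompose $J_d(f_d+g_d\ci)$ into its component in $\mathrm{range}(J_a)$ and its component in $\mathrm{range}(J_a)^\perp=\ker J_a^\top$. Since $\eta^{(1)}=J_d$, the projection condition \eqref{eq:invariance_projection} states exactly that the second component vanishes. Hence $J_aJ_a^\dagger J_d(f_d+g_d\ci)=J_d(f_d+g_d\ci)$, and choosing $\dot{\x}_a$ as in \eqref{eq:dae_min_norm} gives
\[
\tfrac{d}{dt}\phi(\x(t))=J_d\dot{\x}_d+J_a\dot{\x}_a=(I-J_aJ_a^\dagger)J_d(f_d+g_d\ci)=0 .
\]
This is the same computation the paper used to show tangency of $\hat f,\hat g$. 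It does not need full row rank of $J_a$, only \eqref{eq:invariance_projection}, which is where Assumption~\ref{assumption:regularity} enters to make that condition satisfiable. Since $\phi(\x(0))=0$, we get $\phi(\x(t))\equiv 0$. Assumption~\ref{assumption:no_input} ensures $\phi$ has no $\ci$-dependence, so this differentiation is legitimate. Consequently the trajectory is a solution of $\dot{\x}=\hat f(\x)+\hat g(\x)\ci$ and stays on $\mathcal{M}$.

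\textbf{Step 2 (barrier invariance).} Along this trajectory, $\dot b=\mathcal{L}_{\hat f}b+\mathcal{L}_{\hat g}b\,\ci\ge-\alpha_0(b)$ by \eqref{eq:invariance_cbf}. Treating $\dot{\x}=\hat f+\hat g\ci$ as an ODE on $\mathcal{M}$, which is locally Lipschitz on $\mathcal{C}\cap\mathcal{M}$ as noted before the theorem, I would invoke Theorem~\ref{theorem:CBF-safety} with $(f,g)$ replaced by $(\hat f,\hat g)$. The self-contained alternative is the comparison lemma: $\dot y=-\alpha_0(y)$ with $y(0)=b(\x(0))\ge0$ keeps $y\ge0$, so $b(\x(t))\ge y(t)\ge0$.

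\textbf{Main obstacle.} The conditions are imposed only for $\x\in\mathcal{D}\cap\mathcal{M}$, so there is a circularity in using them along a trajectory that has not yet been shown to remain there. I would resolve it with a first-exit-time argument. Let $\tau=\sup\{t:\x(s)\in\mathcal{D}\cap\mathcal{M}\ \forall s\le t\}$. On $[0,\tau)$ both Step 1 and Step 2 apply. By continuity and closedness of $\mathcal{D}\cap\mathcal{M}$, $\x(\tau)$ lies on its boundary. Suppose $\tau<\infty$. The solution can be continued past $\tau$ using the Lipschitz projected field, with $\phi\equiv0$ and $b$ satisfying the differential inequality near $\tau$. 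This follows Nagumo's Theorem~\ref{theorem:Nagumo} applied to $\mathcal{D}$ within $\mathcal{M}$: $\hat f+\hat g\ci$ lies in the tangent cone of $\mathcal{M}$ and points inward (or tangentially) to $\mathcal{D}$ at $\partial\mathcal{D}$. Continuing in this way contradicts maximality of $\tau$, so $\tau=\infty$.
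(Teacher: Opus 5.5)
Your Steps 1 and 2 are the paper's proof. Compatibility makes the projected field tangent to $\mathcal{M}$, so $\phi\equiv 0$, and the comparison lemma applied to $\dot x=\hat f(x)+\hat g(x)u$ gives $b\ge 0$. Your Step 1 is marginally more general than the paper's. The paper invokes full row rank of $J_a$, so that $P\equiv 0$ and \eqref{eq:invariance_projection} becomes vacuous. You instead derive $\nabla\phi\,(\hat f+\hat g u)=P J_d(f_d+g_d u)=0$ from \eqref{eq:invariance_projection} itself. Identifying the DAE's own solution with the projected one still needs $\dot x_a$ to be uniquely determined, i.e.\ $J_a$ injective, which $\nu=1$ provides.

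The gap is in your last paragraph, which claims a resolution it does not deliver. You correctly notice that \eqref{eq:invariance_cbf} is only hypothesized on $\mathcal{D}\cap\mathcal{M}$. But the phrase ``$b$ satisfying the differential inequality near $\tau$'' assumes exactly what is missing: for $t>\tau$ the state may already have $b<0$, where nothing is assumed.

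Nagumo's theorem does not close this on its own. It needs uniqueness of closed-loop solutions, e.g.\ a locally Lipschitz feedback $u=\mu(x)$. It also needs $\nabla b(x)\,v\ge 0$ with $v\in T_x\mathcal{M}$ to actually place $v$ in the tangent cone of $\mathcal{D}\cap\mathcal{M}$. That requires $\nabla b(x)$ not to lie in the row space of $\nabla\phi(x)$ at every $x\in\partial\mathcal{D}\cap\mathcal{M}$. Neither is among the hypotheses, and without the second the statement is false.

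Counterexample: take $\dot x_1=1+u$, $\phi=x_2-x_1$, $b=-x_1^2$. Then $\mathcal{D}\cap\mathcal{M}=\{0\}$ and $P\equiv 0$. Both \eqref{eq:invariance_projection} and \eqref{eq:invariance_cbf} hold at $0$ with $u\equiv 0$, because $\nabla b(0)=0$. Yet $x(t)=(t,t)$ leaves the set immediately.

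There are two ways to make your exit-time argument rigorous:
\begin{enumerate}
\item Add this nondegeneracy condition together with a Lipschitz feedback, and then apply Nagumo.
\item Require \eqref{eq:invariance_cbf} on a relatively open neighborhood of $\mathcal{D}\cap\mathcal{M}$ in $\mathcal{M}$. Then $\dot b\ge-\alpha_0(b)>0$ wherever $b<0$ near $\tau$, so $b$ cannot go negative and maximality of $\tau$ is contradicted directly.
\end{enumerate}
The $\mathcal{M}$-part has no circularity for $\nu=1$: wherever $J_a$ has full row rank, $\nabla\phi\,\hat f=\nabla\phi\,\hat g=0$ identically, so $\phi$ is conserved for any $u$.

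The paper itself never confronts this issue. Its proof applies the comparison lemma along the entire trajectory, which amounts to reading the hypothesis as holding along $x(t)$ for all $t\ge 0$, as in Theorem~\ref{theorem:CBF-safety}. Under that reading your Steps 1--2 already suffice and the exit-time paragraph is unnecessary.
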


\begin{proof}
By Assumption~\ref{assumption:no_input}, $\phi$ is independent of $\ci$, so differentiating $\phi(\x(t))=0$ and substituting $\dot{\x}_{d}=f_{d}(\x)+g_{d}(\x)\ci$ yields the compatibility condition~\eqref{eq:dae_compatibility}. Condition~\eqref{eq:invariance_projection} is equivalent to~\eqref{eq:dae_compatibility}. Since $\nu=1$, the Jacobian $J_{a}(\x)$ has full row rank on $\mathcal{C}\cap\mathcal{M}$, so $\dot{\x}_{a}$ is uniquely determined by~\eqref{eq:dae_min_norm}. By Assumption~\ref{assumption:regularity}, $J_{ext}(\x)$ has full row rank, so the projected fields~\eqref{eq:projected_fields} are well-defined and satisfy $\nabla\phi(\x)\hat{f}(\x)=0$, $\nabla\phi(\x)\hat{g}(\x)=0$. Hence any trajectory of $\dot{\x}=\hat{f}(\x)+\hat{g}(\x)\ci$ starting on $\mathcal{M}$ remains on $\mathcal{M}$. On $\mathcal{M}$ the dynamics reduce to the ODE $\dot{\x}=\hat{f}(\x)+\hat{g}(\x)\ci$, so
\begin{equation*}
\dot{b}(\x(t)) = \mathcal{L}_{\hat{f}} b(\x) + \mathcal{L}_{\hat{g}} b(\x)\,\ci.
\end{equation*}
Condition~\eqref{eq:invariance_cbf} yields $\dot{b}\geq-\alpha_{0}(b)$. Since $\alpha_{0}\in\mathcal{K}_{\infty}$ with $\alpha_{0}(0)=0$ and $b(\x(0))\geq 0$, the comparison lemma~\cite[Lemma~3.4]{khalil2002nonlinear} gives $b(\x(t))\geq 0$ for all $t\geq 0$. Combining manifold invariance and barrier nonnegativity, any trajectory starting in $\mathcal{D}\cap\mathcal{M}$ remains in $\mathcal{D}\cap\mathcal{M}$ for all $t\geq 0$.
\end{proof}

The key insight of Theorem~\ref{thm:positive_invariance} is that the projected vector fields reduce the DAE to an equivalent ODE on $\mathcal{M}$, after which the standard CBF condition of Theorem~\ref{theorem:CBF-safety} applies directly: condition~\eqref{eq:invariance_cbf} is exactly~\eqref{eq:CBF-def} with $(f, g)$ replaced by $(\hat{f}, \hat{g})$, while condition~\eqref{eq:invariance_projection} ensures that trajectories remain on $\mathcal{M}$. This is precisely the framework in Fig.~\ref{fig:roadmapCBF}.

\parstart{Revisiting Example~\ref{exmpl:DAE_unaware_example_1}}
The contrast between the DAE-unaware and DAE-aware approaches is illustrated by comparing Fig.~\ref{fig:motivating_infeasible} and Fig.~\ref{fig:synth_results_daecbf}. In Fig.~\ref{fig:motivating_infeasible}, the standard CBF-QP of Theorem~\ref{theorem:CBF-safety} is applied directly to the DAE system without accounting for the algebraic constraints; the resulting QP becomes infeasible and safety is violated. In Fig.~\ref{fig:synth_results_daecbf}, the DAE-aware CBF-QP of Theorem~\ref{thm:positive_invariance} jointly enforces the compatibility condition~\eqref{eq:invariance_projection} and the barrier inequality~\eqref{eq:invariance_cbf} along the projected dynamics. The trajectory remains on $\mathcal{M}$, and both $b(\x)$ and $h(\x)$ stay positive throughout, confirming that $\mathcal{D} \cap \mathcal{M}$ is positively invariant as guaranteed by the theorem.

In contrast to the relative degree of the algebraic constraint, we must also consider the \emph{relative degree of the barrier function} $b$ on the constraint manifold $\mathcal{M}$. This concept determines how many times $b$ must be differentiated along the constrained dynamics before the control appears. The following corollary generalizes Theorem~\ref{thm:positive_invariance} to the high-order CBF framework for index-1 DAE systems. 

\begin{corollary}[DAE-aware HOCBF for index-1 systems]
\label{lemma:dae_hocbf_index1}
Consider an index-1 DAE system satisfying Assumption~\ref{assumption:regularity}. Let $b(\x)$ have relative degree $d \geq 1$ on $\mathcal{M}$ with respect to the projected dynamics \eqref{eq:projected_fields}. Define the HOCBF hierarchy 
\begin{align*}
\psi_0(\x) := b(\x), \qquad 
\psi_i(\x) := \mathcal{L}_{\hat{f}} \psi_{i-1}(\x) + \alpha_i\big(\psi_{i-1}(\x)\big),
\end{align*}
for $i = 1, \ldots, d-1$ and where $\alpha_i$ are class-$\mathcal{K}_\infty$ functions. Then $\mathcal{D} \cap \mathcal{M}$ is positively invariant if there exists $\ci$ such that
\begin{subequations} 
\label{eq:dae_hocbf_conditions}
\begin{align}
P(\x) \big(J_d(\x) f_d(\x) + \eta^{(1)}(\x) g_d(\x)\ci\big) &= 0, \label{eq:dae_hocbf_compat} \\
\mathcal{L}_{\hat{f}} \psi_{d-1}(\x) + \alpha_{d-1}(\psi_{d-1}(\x)) + \mathcal{L}_{\hat{g}} \psi_{d-1}(\x) \, \ci &\geq 0, \label{eq:dae_hocbf_cbf}
\end{align}
\end{subequations}
hold for all $\x \in \mathcal{D} \cap \mathcal{M}$.
\end{corollary}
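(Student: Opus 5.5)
The plan is to reduce the statement to the two ingredients already in hand: manifold invariance from the projected fields, as in the proof of Theorem~\ref{thm:positive_invariance}, and the HOCBF result of Theorem~\ref{theorem:HOCBF-safety} applied to the reduced ODE $\dot{\x}=\hat f(\x)+\hat g(\x)\ci$.

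\textbf{Step 1 (manifold invariance).} Condition~\eqref{eq:dae_hocbf_compat} is exactly the compatibility condition~\eqref{eq:invariance_projection}. Arguing as in Theorem~\ref{thm:positive_invariance}, index one and Assumption~\ref{assumption:regularity} make $J_a$ full row rank on $\mathcal{C}\cap\mathcal{M}$. Then $\dot{\x}_a$ is given uniquely by~\eqref{eq:dae_min_norm}, and we have $\nabla\phi\,\hat f=0$ and $\nabla\phi\,\hat g=0$. Hence any solution starting in $\mathcal{M}$ stays in $\mathcal{M}$, and along it the DAE coincides with the locally Lipschitz ODE $\dot{\x}=\hat f+\hat g\ci$.

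\textbf{Step 2 (chain of inequalities).} On this ODE, $b$ has relative degree $d$ by hypothesis. So $\mathcal{L}_{\hat g}\psi_{i-1}=0$ for $i<d$, and therefore $\dot\psi_{i-1}=\mathcal{L}_{\hat f}\psi_{i-1}$. By the definition of the hierarchy,
\begin{equation*}
\psi_i=\dot\psi_{i-1}+\alpha_i(\psi_{i-1}), \qquad i=1,\dots,d-1 .
\end{equation*}
Condition~\eqref{eq:dae_hocbf_cbf} states that $\dot\psi_{d-1}+\alpha_{d-1}(\psi_{d-1})\ge 0$, since the left side equals $\dot\psi_{d-1}$ along the projected dynamics plus the class-$\mathcal{K}$ term.

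\textbf{Step 3 (backward induction).} Assume $\psi_i(\x(0))\ge0$ for all $i$, as in Theorem~\ref{theorem:HOCBF-safety}. The comparison lemma~\cite[Lemma~3.4]{khalil2002nonlinear} applied to $\dot\psi_{d-1}\ge-\alpha_{d-1}(\psi_{d-1})$ gives $\psi_{d-1}(\x(t))\ge0$. This means $\dot\psi_{d-2}\ge-\alpha_{d-2}(\psi_{d-2})$, so $\psi_{d-2}\ge0$. Iterating down the hierarchy yields $\psi_0=b\ge0$ for all $t$. Equivalently, one can simply invoke Theorem~\ref{theorem:HOCBF-safety} with $(f,g)$ replaced by $(\hat f,\hat g)$. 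Combined with Step 1, this gives $\x(t)\in\mathcal{D}\cap\mathcal{M}$.

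\textbf{Main obstacle.} The delicate point is well-posedness. The control must satisfy~\eqref{eq:dae_hocbf_compat} and~\eqref{eq:dae_hocbf_cbf} simultaneously and be selected so that the closed loop admits solutions, for instance as a locally Lipschitz QP solution. In addition, Theorem~\ref{theorem:HOCBF-safety} needs the intermediate conditions $\psi_i(\x(0))\ge0$, which the corollary leaves implicit. I would handle the first issue by noting that for index one $P=0$, so~\eqref{eq:dae_hocbf_compat} is vacuous and only a single affine inequality remains. I would handle the second by stating the initial conditions on $\psi_i$ explicitly, inheriting them from Theorem~\ref{theorem:HOCBF-safety}.
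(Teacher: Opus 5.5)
Your proposal is correct and follows the paper's proof. The paper likewise gets manifold invariance from the argument of Theorem~\ref{thm:positive_invariance} and then invokes Theorem~\ref{theorem:HOCBF-safety} on the projected ODE $\dot{\x}=\hat f(\x)+\hat g(\x)\ci$; your backward induction just unrolls that theorem's proof. You are right that the corollary silently needs the initial conditions $\psi_i(\x(0))\ge 0$ inherited from Theorem~\ref{theorem:HOCBF-safety}, since for $d\ge 2$ the conclusion can fail without them. One harmless index slip: under the paper's indexing, $\psi_{d-1}\ge 0$ gives $\dot\psi_{d-2}\ge-\alpha_{d-1}(\psi_{d-2})$, not $-\alpha_{d-2}(\psi_{d-2})$.
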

\begin{proof}
Condition \eqref{eq:dae_hocbf_compat} ensures $\x(t) \in \mathcal{M}$ by Theorem~\ref{thm:positive_invariance}. Condition \eqref{eq:dae_hocbf_cbf} is Theorem~\ref{theorem:HOCBF-safety} applied to the projected ODE $\dot{\x} = \hat{f}(\x) + \hat{g}(\x)\ci$ on $\mathcal{M}$, ensuring $b(\x(t)) \geq 0$.
\end{proof}

Corollary~\ref{lemma:dae_hocbf_index1} establishes that the projected dynamics framework extends naturally to barrier functions of any relative degree $d \geq 1$ on $\mathcal{M}$. The conditions~\eqref{eq:dae_hocbf_conditions} decompose into (\emph{i}) condition~\eqref{eq:dae_hocbf_compat}, which enforces manifold invariance through a \emph{single} compatibility equation, and (\emph{ii}) condition~\eqref{eq:dae_hocbf_cbf}, which enforces safety via the HOCBF hierarchy built from the projected Lie derivatives $\mathcal{L}_{\hat{f}}$ and $\mathcal{L}_{\hat{g}}$. 

When $d = 1$, the hierarchy reduces to $\psi_0 = b$ and condition~\eqref{eq:dae_hocbf_cbf} reduces exactly to the projected CBF inequality~\eqref{eq:invariance_cbf} of Theorem~\ref{thm:positive_invariance}. For higher-index DAEs ($\nu \geq 2$), the above conditions requires additional considerations; the details are provided in the subsequent section.

\subsection{CBFs for Higher-Index DAEs}
\label{subsec:index2plus}
For index-$\nu$ DAEs with $\nu \geq 2$, a single differentiation of $\phi$ cannot uniquely resolve $\dot{x}_a$. The control input $u$ appears only after $\nu$ successive differentiations of $\phi$, each revealing a new hidden algebraic constraint and yielding a corresponding compatibility condition. This produces $\nu$ compatibility conditions and higher-order projection operators $P^{(k)}(x)$, from which the DAE-aware CBF conditions are derived; the derivation roadmap is illustrated in Fig.~\ref{fig:flowchartHID}. The following remark characterizes the structural property of the extended constraint system that follows from Assumption~\ref{assumption:regularity}.

\begin{remark}[Index reduction and full row rank]
\label{remark:index_reduction}
For index-$\nu$ DAE systems with $\nu \geq 2$, Assumption~\ref{assumption:regularity} implies that after index reduction the extended constraint system yields a strangeness-free representation. Specifically, the algebraic constraint $\phi(x)=0$ is differentiated $\nu$ times and all derivatives $\dot{x},\ddot{x},\ldots$ are eliminated using the differential equation $\dot{x}_d = f_d(x)+g_d(x)u$. The resulting extended constraints $\phi^{(k)}(x)=0$ for $k=0,\ldots,\nu$ are functions of $x$ only, and the extended Jacobian $J_{ext}^{(\nu)}(x) = \begin{bmatrix} J_a^{(\nu)}(x) & \eta^{(\nu)}(x) g_d(x) \end{bmatrix}$ has full row rank for all $x \in \mathcal{M}$. This yields an equivalent strangeness-free representation on which the projected dynamics and CBF conditions are defined.
\end{remark}

Thus, $J_a^{(\nu)}$ and $J_d^{(\nu)}$ in \eqref{eq:top_level_jacobians} are well-defined only after the index-reduced extended constraint system is expressed as functions of the state $x$ alone.  This is the standard approach in high-index DAE analysis and control~\cite{Volker2005,Linh2009}.

This structural property ensures the projection operator $P(\x)$ is well-defined for the extended constraint system. The control input $\ci$ appears only after $\nu$ differentiations of the algebraic constraints, leading naturally to HOCBFs on $\mathcal{M}$. Manifold invariance requires that $\phi(\x(t)) = 0$ for all $t \ge t_0$. For an index-$\nu$ DAE, this is equivalent to requiring that all derivatives of $\phi$ up to order $\nu-1$ vanish along the trajectory. The first condition can be written as
\begin{equation}
\label{eq:phi_dot_index2}
\frac{d}{dt} \phi(\x(t)) = J_d(\x) \dot{\x}_d + J_a(\x) \dot{\x}_a = 0.
\end{equation}

For index-1 DAEs, full row rank of $J_a(\x)$ guarantees that \eqref{eq:phi_dot_index2} uniquely determines $\dot{\x}_a$ given $\dot{\x}_d = f_d(\x) + g_d(\x) \ci$. For index-$\nu$ DAEs with $\nu \geq 2$, rank deficiency of $J_a(\x)$ renders \eqref{eq:phi_dot_index2} insufficient, necessitating higher-order derivatives.

Successive differentiation produces a sequence of compatibility conditions. The second derivative, given by
\begin{equation*}
\frac{d^2}{dt^2} \phi(\x(t)) = \frac{d}{dt}\big(J_d(\x) \dot{\x}_d + J_a(\x) \dot{\x}_a\big) = 0,
\end{equation*}
involves $\ddot{\x}_d$ and $\ddot{\x}_a$. Continuing to the $\nu$-th derivative yields
\begin{equation}
\label{eq:phi_nu_derivative}
\frac{d^\nu}{dt^\nu} \phi(\x(t)) = 0.
\end{equation}
By Lemma~\ref{lemma:index-relative-degree}, $\ci$ appears explicitly with nonzero coefficient in \eqref{eq:phi_nu_derivative}, defining the differentiation index $\nu$ as the minimum number of differentiations required for control to appear.

The cascade of constraints $\frac{d^k \phi}{dt^k} = 0$ for $k = 1,\ldots,\nu$ provides a hierarchical set of conditions that $\ci$ must satisfy for manifold invariance. While index-1 DAEs require a single compatibility condition, index-$\nu$ DAEs require $\ci$ to simultaneously satisfy $\nu$ conditions; however, only the highest-order condition \eqref{eq:phi_nu_derivative} directly involves $\ci$ with a nonzero coefficient, while the lower-order conditions serve to ensure consistency of the trajectory with the algebraic constraints.

\begin{figure}[t]
    \centering
    \begin{tikzpicture}[
    >=Latex,
    font=\rmfamily\fontsize{8pt}{9pt}\selectfont, 
    line cap=round,
    line join=round,
    node distance=5mm and 2.5mm, 
    box/.style={
      draw=ink,
      rounded corners=1pt,
      line width=0.4pt,
      fill=gfill,
      align=center,
      inner sep=3pt,
      text width=#1
    },
    emphbox/.style={
      draw=edgegray,
      rounded corners=1pt,
      line width=0.6pt,
      fill=gfill2,
      align=center,
      inner sep=3pt,
      text width=#1
    },
    arr/.style={->, draw=linegray, line width=0.6pt}
]

\def\colW{4.05cm} 

\node[box=\colW] (A) {%
\textbf{1) Index-$\nu$ DAE System}\\
$\dot x_d=f_d(x)+g_d(x)u$, $\phi(x)=0$\\
Index $\nu \geq 2$ (Remark~\ref{remark:index_reduction})
};

\node[box=\colW, right=of A] (B) {%
\textbf{2) Successive Differentiation}\\
$\frac{d^k}{dt^k}\phi(x)=0$ for $k=1, \dots, \nu$\\
$\nu$ compatibility conditions
};

\node[box=\colW, below=of A] (C) {%
\textbf{3) Jacobians}\\
Define $J_a^{(\nu)}, J_d^{(\nu)}$ \eqref{eq:top_level_jacobians}\\
Determine control influence $\eta^{(\nu)}$
};

\node[box=\colW, below=of B] (D) {%
\textbf{4) Higher-Index Projection}\\
$P^{(k)}(x) = I - J_a^{(k)}(J_a^{(k)})^\dagger$\\
\eqref{eq:top_level_projection} for $k \in \{1, \dots, \nu\}$
};

\node[box=\colW, below=of C] (E) {%
\textbf{5) Index-$\nu$ Projected Fields}\\
$\hat f(x), \hat g(x)$ using $J_a^{(\nu)}, J_d^{(\nu)}$\\
Lemma~\ref{lemma:projected_fields_indexnu}
};

\node[emphbox=\colW, below=of D] (F) {%
\textbf{6) Manifold \& Safety}\\
(i) \eqref{eq:invariance_projection_indexnu} (Invariance)\\
(ii) \eqref{eq:invariance_cbf_indexnu} (Safety)
};

\node[emphbox=8.4cm, below=6mm of $(E.south)!0.5!(F.south)$, anchor=north] (G) {%
\textbf{Conclusion (Theorem \ref{thm:positive_invariance_indexnu})}: If conditions (i)--(ii) hold for all $k \leq \nu$, then the set $\mathcal{D}\cap\mathcal{M}$ is positively invariant.
};

\draw[arr] (A) -- (B);
\draw[arr] (B.south) -- ++(0,-0.15) -| (C.north);
\draw[arr] (C) -- (D);
\draw[arr] (D.south) -- ++(0,-0.15) -| (E.north);
\draw[arr] (E) -- (F);

\draw[arr] (F.south) -- (F.south |- G.north);

\end{tikzpicture}
    \vspace{-0.4cm}
        \caption{Roadmap for deriving DAE-aware CBF conditions for higher-index systems ($\nu \geq 2$). The algebraic constraint $\phi(\x)=0$ is differentiated successively up to order $\nu$, eliminating time derivatives via~\eqref{eq:dyn_diff} at each step, until the control input appears explicitly in the extended constraint system. Full row rank of $J_{ext}^{(\nu)}(\x)$ then yields a projected control-affine representation on $\mathcal{M}$, from which the DAE-aware CBF conditions follow analogously to the index-1 case.}
    \label{fig:flowchartHID}
    \vspace{-0.6cm}
\end{figure}

To characterize the control input that ensures manifold invariance for index-$\nu$ systems, we focus on the $\nu$-th derivative of the algebraic constraint. Define
\begin{equation}
\label{eq:top_level_jacobians}
J_d^{(\nu)}(\x) := \nabla_{\x_d} \left(\frac{d^{\nu}\phi}{dt^{\nu}}\right), \quad J_a^{(\nu)}(\x) := \nabla_{\x_a} \left(\frac{d^{\nu}\phi}{dt^{\nu}}\right).
\end{equation}
We define the projection operator for higher-index systems as
\begin{equation}
\label{eq:top_level_projection}
P^{(\nu)}(\x) := I_{n_m} - J_a^{(\nu)}(\x) \left(J_a^{(\nu)}(\x)\right)^{\dagger},
\end{equation}
which projects onto $\ker\big((J_a^{(\nu)}(\x))^\top\big)$. The compatibility condition $P^{(\nu)}(\x) \big(J_d^{(\nu)}(\x) f_d(\x) + \eta^{(\nu)}(\x) g_d(\x)\ci\big) = 0$ constrains the control input $\ci$ to ensure manifold invariance. The following proposition characterizes the control input that ensures manifold invariance.

\begin{proposition}[Manifold invariance for index-$\nu$ DAEs]
\label{prop:manifold_invariance_indexnu}
Consider an index-$\nu$ DAE system \eqref{eq:dyn} with $\nu \geq 2$ satisfying Assumption~\ref{assumption:regularity}. Let $\x(0) \in \mathcal{M}$ with $\phi^{(k)}(\x(0)) = 0$ for $k = 1, \ldots, \nu$. Then, any trajectory starting from $\x(0)$ remains on $\mathcal{M}$ for all $t \geq 0$ if and only if for all $k\in\{1, \ldots, \nu\}$
\begin{equation}
\label{eq:extended_compatibility}
\frac{d^k}{dt^k} \phi(\x(t)) = 0,
\end{equation}
or equivalently, in projected form, if there exists $\ci$ such that for all $k\in\{1, \ldots, \nu\}$, 
\begin{equation}
\label{eq:extended_compatibility_projected}
P^{(k)}(\x) \big(J_d^{(k)}(\x) f_d(\x) + \eta^{(k)}(\x) g_d(\x) \ci\big) = 0
\end{equation}
\end{proposition}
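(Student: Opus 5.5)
The plan is to prove the two equivalences separately: first that manifold invariance is equivalent to the cascade \eqref{eq:extended_compatibility}, and then that the cascade is equivalent to the projected conditions \eqref{eq:extended_compatibility_projected}.

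\textbf{Step 1: invariance and the cascade.} Write $\varphi(t) := \phi(\x(t))$ along a trajectory; by Assumption~\ref{assumption:no_input} this is a function of $t$ through $\x$ only.

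For necessity, suppose $\x(t)\in\mathcal{M}$ for all $t\ge 0$. Then $\varphi\equiv 0$ on $[0,\infty)$. By smoothness of $\phi$, $f_d$, $g_d$ (and of $\ci$ when it enters higher derivatives), every derivative $\varphi^{(k)}$ vanishes identically, so \eqref{eq:extended_compatibility} holds for $k=1,\ldots,\nu$.

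For sufficiency, assume $\varphi^{(\nu)}(t)=0$ for all $t$. Combined with the consistent initial data $\varphi(0)=0$ and $\varphi^{(k)}(0)=\phi^{(k)}(\x(0))=0$ for $k=1,\ldots,\nu$, I integrate backward through the cascade. Concretely, $\varphi^{(k-1)}(t)=\varphi^{(k-1)}(0)+\int_0^t\varphi^{(k)}(s)\,ds$, applied inductively for $k=\nu,\nu-1,\ldots,1$. This yields $\varphi\equiv 0$, i.e.\ $\x(t)\in\mathcal{M}$. Along the way this shows that the lower-order conditions are implied by the top one together with consistent initialization, in agreement with the discussion preceding \eqref{eq:phi_nu_derivative}.

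\textbf{Step 2: the cascade and the projected form.} For each $k$, after eliminating $\dot{\x}_d$ via \eqref{eq:dyn_diff} (Remark~\ref{remark:index_reduction}), the $k$-th derivative has the structure
\[
J_d^{(k)}\big(f_d+\eta^{(k)}g_d\ci\big)\text{-type term} + J_a^{(k)}(\x)\,w_k = 0,
\]
where $w_k$ collects the highest algebraic-state derivative appearing at that level.

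I then split the range of the matrix into $\operatorname{range}(J_a^{(k)})$ and its orthogonal complement $\ker\big((J_a^{(k)})^\top\big)$, using $P^{(k)}$ and $I-P^{(k)}$. The component along $\operatorname{range}(J_a^{(k)})$ can always be cancelled by choosing
\[
w_k = -\big(J_a^{(k)}\big)^\dagger\big(J_d^{(k)}f_d+\eta^{(k)}g_d\ci\big),
\]
exactly as in \eqref{eq:dae_min_norm}, because $J_a^{(k)}\big(J_a^{(k)}\big)^\dagger$ is the orthogonal projector onto $\operatorname{range}(J_a^{(k)})$. The equation is therefore solvable in $w_k$ if and only if the residual $P^{(k)}\big(J_d^{(k)}f_d+\eta^{(k)}g_d\ci\big)$ vanishes, which is \eqref{eq:extended_compatibility_projected}.

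Existence of a $\ci$ making all these residuals vanish at level $\nu$ follows from the full row rank of $J_{ext}^{(\nu)}$ in Assumption~\ref{assumption:regularity}. Since $\eta^{(\nu)}g_d$ spans the complement of $\operatorname{range}(J_a^{(\nu)})$, the affine equation in $\ci$ is consistent. For $k<\nu$ the control does not appear (Lemma~\ref{lemma:index-relative-degree}), so the projected residual reduces to a state-only condition. That condition holds on the consistent set by Step~1, with the same $\ci$.

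\textbf{Main obstacle.} The delicate point is the lower levels $k<\nu$. There $\eta^{(k)}g_d=0$, and I must argue that $P^{(k)}J_d^{(k)}f_d=0$ holds automatically along trajectories, rather than imposing an extra constraint that could conflict with the top-level choice of $\ci$. I would first try to show it via the index-reduction structure of Remark~\ref{remark:index_reduction}: the vanishing of $P^{(k)}J_d^{(k)}f_d$ is precisely the hidden constraint $\phi^{(k)}=0$, and this is preserved by the inductive integration of Step~1.
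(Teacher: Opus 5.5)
Your proposal follows essentially the same route as the paper's proof. Both differentiate $\phi$, substitute $\dot x_d=f_d+g_d u$, resolve $\dot x_a$ through the pseudoinverse so that the residual is the $P^{(k)}$-projected condition, use full row rank of $J_{ext}^{(\nu)}$ to solve for $u$ at level $\nu$, and treat the control-free levels $k<\nu$ as automatically satisfied; your backward integration from the top level with consistent initial data additionally makes explicit the invariance/cascade equivalence that the paper only asserts. Your ``main obstacle'' closes more simply than via Remark~\ref{remark:index_reduction}: once Step~1 gives $J_d^{(k)}(f_d+g_d u)+J_a^{(k)}\dot x_a=0$ along the trajectory (with $\dot x_a$ fixed once by the top-level formula rather than a separate $w_k$ per level), left-multiplying by $P^{(k)}$, which annihilates $\operatorname{range}(J_a^{(k)})$, yields the level-$k$ projected condition directly.
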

\begin{proof}
For the state to remain on the manifold, the algebraic constraint $\phi(\x(t)) = 0$ must hold for all $t \geq 0$. This is equivalent to requiring the sequence of derivatives $\frac{d^k \phi}{dt^k}(\x(t)) = 0$ to hold for $k = 1, \ldots, \nu$. 

Differentiating $\phi^{(\nu)}(\x)$ gives
\begin{align*}
\frac{d^\nu}{dt^\nu}\phi(\x(t)) &= J_d^{(\nu)}(\x) \dot{\x}_d + J_a^{(\nu)}(\x) \dot{\x}_a.
\end{align*}
Substituting $\dot{\x}_d = f_d(\x) + g_d(\x)\ci$, we obtain
\begin{equation}
\label{eq:extended_compat_detailed}
J_d^{(\nu)}(\x) \big(f_d(\x) + g_d(\x) \ci\big) + J_a^{(\nu)}(\x) \dot{\x}_a = 0.
\end{equation}
Substituting $\dot{\x}_a^\star(\x,\ci)$ into \eqref{eq:extended_compat_detailed} yields \eqref{eq:extended_compatibility_projected}.
As a result of Assumption~\ref{assumption:regularity}, the extended Jacobian $\begin{bmatrix} J_a^{(\nu)}(\x) & \eta^{(\nu)}(\x) g_d(\x) \end{bmatrix}$ has full row rank. For $k=\nu$, if there exists a solution $\ci$ satisfying~\eqref{eq:extended_compatibility_projected}, we have $\frac{d^\nu}{dt^\nu}\phi(\x(t))=0$. For $k\leq \nu-1$, we have $P^{(k)}(\x) \eta^{(k)}(\x) g_d(\x) \ci = 0$ and the regularity of the system under Assumption~\ref{assumption:regularity} ensures $P^{(k)}(\x) \eta^{(k)}(\x) f_d(\x) = 0$ holds on $\mathcal{M}$.
\end{proof}

The solution for $\dot{\x}_a$ obtained above can be expressed using the Moore--Penrose pseudoinverse of the extended Jacobian. This representation provides a natural construction for expressing the constrained dynamics on $\mathcal{M}$ in control-affine form for index-$\nu$ systems. Analogous to the index-1 case, in the following lemma, we define projected vector fields for index-$\nu$ DAEs.

\begin{lemma}[Projected vector fields for index-$\nu$ DAEs]
\label{lemma:projected_fields_indexnu}
Consider an index-$\nu$ DAE system with $\nu \geq 2$ satisfying Assumption~\ref{assumption:regularity}. Define the projected vector fields
\begin{align}
\label{eq:projected_fields_indexnu}
    \hat{f}(\x) &:=
        \begin{bmatrix}
            f_{d}(\x) \\
            -\left(J_{a}^{(\nu)}(\x)\right)^{\dagger} J_{d}^{(\nu)}(\x) f_{d}(\x)
        \end{bmatrix},
        \\
    \hat{g}(\x) &:=
        \begin{bmatrix}
            g_{d}(\x) \\
            -\left(J_{a}^{(\nu)}(\x)\right)^{\dagger} J_{d}^{(\nu)}(\x) g_{d}(\x)
        \end{bmatrix}.
\end{align}
Then the constrained dynamics on $\mathcal{M}$ admit the control-affine form $\dot{\x} = \hat{f}(\x) + \hat{g}(\x) \ci$, and both $\hat{f}(\x)$ and $\hat{g}(\x)$ are tangent to $\mathcal{M}$ in the sense that they satisfy the projected compatibility condition \eqref{eq:extended_compatibility_projected}.
\end{lemma}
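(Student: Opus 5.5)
The plan mirrors the index-1 argument following \eqref{eq:projected_fields}, with $(J_a,J_d)$ replaced by the top-level Jacobians $(J_a^{(\nu)},J_d^{(\nu)})$ from \eqref{eq:top_level_jacobians}.

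\textbf{Step 1: control-affine form.} By Proposition~\ref{prop:manifold_invariance_indexnu}, any trajectory that stays on $\mathcal{M}$ satisfies the top-level relation \eqref{eq:extended_compat_detailed}. Substituting $\dot{\x}_d=f_d+g_d\ci$ gives the linear equation
\[
J_a^{(\nu)}\dot{\x}_a=-J_d^{(\nu)}(f_d+g_d\ci).
\]
I will take the pseudoinverse solution
\[
\dot{\x}_a^\star=-(J_a^{(\nu)})^\dagger J_d^{(\nu)}f_d-(J_a^{(\nu)})^\dagger J_d^{(\nu)}g_d\ci.
\]
It is affine in $\ci$. Stacking it under $\dot{\x}_d$ gives exactly $\dot{\x}=\hat f(\x)+\hat g(\x)\ci$ with $\hat f,\hat g$ as in \eqref{eq:projected_fields_indexnu}. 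This step is purely algebraic.

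\textbf{Step 2: tangency.} I will compute
\[
\nabla\phi^{(\nu)}\hat f = J_d^{(\nu)}f_d - J_a^{(\nu)}(J_a^{(\nu)})^\dagger J_d^{(\nu)}f_d = P^{(\nu)}J_d^{(\nu)}f_d,
\]
and similarly $\nabla\phi^{(\nu)}\hat g=P^{(\nu)}J_d^{(\nu)}g_d$. Here $P^{(\nu)}$ is the operator from \eqref{eq:top_level_projection}. So the residual of the top-level compatibility condition along $\hat f+\hat g\ci$ is precisely $P^{(\nu)}\bigl(J_d^{(\nu)}f_d+J_d^{(\nu)}g_d\ci\bigr)$. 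This has the form of \eqref{eq:extended_compatibility_projected} at $k=\nu$, once $J_d^{(\nu)}g_d$ is identified with $\eta^{(\nu)}g_d$. That identification is immediate when $d'=1$ by \eqref{eq:eta}. For $d'>1$, I will argue that the components of $J_d^{(\nu)}g_d$ lying in the range of $P^{(\nu)}$ are exactly the ones captured by $\eta^{(\nu)}g_d$.

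Whenever $\ci$ is chosen so that \eqref{eq:extended_compatibility_projected} holds, the fields satisfy the compatibility condition; this is the sense of tangency claimed in the lemma. In the special case where $J_a^{(\nu)}$ has full row rank, $P^{(\nu)}=0$ and both residuals vanish identically, exactly as in the index-1 computation.

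\textbf{Step 3: lower levels $k<\nu$.} I will invoke the last part of the proof of Proposition~\ref{prop:manifold_invariance_indexnu}. Under Assumption~\ref{assumption:regularity}, the lower-level projected conditions $P^{(k)}\eta^{(k)}g_d\ci=0$ and $P^{(k)}J_d^{(k)}f_d=0$ hold on $\mathcal{M}$ given consistent initial data. Hence the conditions for $k<\nu$ are automatically satisfied along $\hat f+\hat g\ci$. Combined with Step 2, this gives the projected compatibility condition for all $k$.

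\textbf{Main obstacle.} The hardest point is Step 2. Unlike the index-1 case, $J_a^{(\nu)}$ need not have full row rank, so $J_a^{(\nu)}(J_a^{(\nu)})^\dagger J_d^{(\nu)}\neq J_d^{(\nu)}$ in general and tangency is not automatic. My approach is to make the statement hinge on the residual identity $\nabla\phi^{(\nu)}(\hat f+\hat g\ci)=P^{(\nu)}(J_d^{(\nu)}f_d+\eta^{(\nu)}g_d\ci)$. Full row rank of $J_{ext}^{(\nu)}$ (Assumption~\ref{assumption:regularity}) then guarantees that some $\ci$ annihilates this residual. This reduces tangency to the same compatibility equation already appearing in Proposition~\ref{prop:manifold_invariance_indexnu}.
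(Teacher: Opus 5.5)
Your proposal follows the paper's own argument: substitute the pseudoinverse solution $\dot{\x}_a^\star$ into the top-level relation \eqref{eq:extended_compat_detailed} to read off $\hat f,\hat g$ and the projected compatibility condition, then defer the levels $k<\nu$ to Proposition~\ref{prop:manifold_invariance_indexnu}. Your version is more explicit than the paper's, which writes $\eta^{(\nu)}g_d$ where the substitution actually yields $J_d^{(\nu)}g_d$. The $d'>1$ case you flag as the main obstacle is trivial and needs no range argument: the definition of $d'$ forces $J_d^{(\nu)}g_d=J_d^{(\nu)}\mathcal{L}_g\x_d=0$, and since $\eta^{(\nu)}$ has $J_d^{(\nu)}$ as its right factor in \eqref{eq:eta}, also $\eta^{(\nu)}g_d=0$, so the two terms coincide.
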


\begin{proof}
From Proposition~\ref{prop:manifold_invariance_indexnu}, manifold invariance requires the compatibility condition~\eqref{eq:extended_compat_detailed}. Substituting the projected algebraic derivative $\dot{\x}_a^{\star}$ into~\eqref{eq:extended_compat_detailed}, we obtain $P^{(k)}(\x) \big(J_d^{(k)}(\x) f_d(\x) + \eta^{(k)}(\x) g_d(\x)\ci\big) = 0$ for all $k\in\{1,\ldots,\nu\}$, which constrain $\ci$ to ensure manifold invariance. Thus any trajectory evolving according to $\dot{\x} = \hat{f}(\x) + \hat{g}(\x)\ci$ with $\ci$ satisfying these constraints remains on $\mathcal{M}$.
\end{proof}

The above lemma establishes that for index-$\nu$ systems, the projected vector fields $\hat{f}(\x)$ and $\hat{g}(\x)$ are constructed using the top-level Jacobians $J_a^{(\nu)}(\x)$ and $J_d^{(\nu)}(\x)$, ensuring that the constrained dynamics remain on the manifold $\mathcal{M}$. This construction motivates the use of high-order control barrier functions, as the control input appears only after multiple differentiations.

\begin{theorem}[Safety of index-$\nu$ DAEs]
\label{thm:positive_invariance_indexnu}
Consider an index-$\nu$ DAE system \eqref{eq:dyn} with $\nu \geq 2$ satisfying Assumption~\ref{assumption:regularity}. Let $b(\x)$ define a safe set $\mathcal{D} = \{\x : b(\x) \geq 0\}$. The set $\mathcal{D} \cap \mathcal{M}$ is positively invariant if there exists a class-$\mathcal{K}_{\infty}$ function $\alpha_0$ and a control input $\ci$ such that for all $\x \in \mathcal{D} \cap \mathcal{M}$ and $k\in\{1,\ldots,\nu\}$,
\begin{subequations}
\label{eq:invariance_conditions_indexnu}
\begin{align}
P^{(k)}(\x) \big(J_{d}^{(k)}(\x) f_{d}(\x) + \eta^{(k)}(\x) g_{d}(\x) \ci\big) &= 0, \label{eq:invariance_projection_indexnu} \\
\mathcal{L}_{\hat{f}} b(\x) + \alpha_0(b(\x)) + \mathcal{L}_{\hat{g}} b(\x) \ci &\geq 0. \label{eq:invariance_cbf_indexnu}
\end{align}
\end{subequations}
\end{theorem}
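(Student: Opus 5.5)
The proof mirrors that of Theorem~\ref{thm:positive_invariance}: the compatibility conditions give manifold invariance, and the barrier inequality gives nonnegativity of $b$ along the reduced ODE.

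\emph{Manifold invariance.} I would invoke Proposition~\ref{prop:manifold_invariance_indexnu} directly. Condition~\eqref{eq:invariance_projection_indexnu} for $k=1,\ldots,\nu$ is exactly the projected compatibility condition~\eqref{eq:extended_compatibility_projected}. Assuming $\x(0)\in\mathcal{M}$ with consistent hidden constraints $\phi^{(k)}(\x(0))=0$, the proposition then gives $\frac{d^k}{dt^k}\phi(\x(t))=0$ for $k=1,\ldots,\nu$.

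To conclude $\phi(\x(t))\equiv 0$, I would integrate this cascade backward. Since $\phi^{(\nu)}$ vanishes identically and all lower derivatives vanish at $t=0$, repeated integration gives $\phi^{(k)}(\x(t))\equiv 0$ for every $k<\nu$, and finally $\phi(\x(t))\equiv 0$. Hence $\x(t)\in\mathcal{M}$ for all $t$ in the existence interval.

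\emph{Reduction to an ODE.} By Lemma~\ref{lemma:projected_fields_indexnu}, along such trajectories the constrained dynamics take the control-affine form $\dot{\x}=\hat f(\x)+\hat g(\x)\ci$, with $\hat f,\hat g$ built from the top-level Jacobians $J_a^{(\nu)}$ and $J_d^{(\nu)}$. Assumption~\ref{assumption:regularity} makes these fields well defined and locally Lipschitz on $\mathcal{C}\cap\mathcal{M}$. Solutions therefore exist and are unique, and the chain rule applies, giving
\[
\dot b(\x(t))=\mathcal{L}_{\hat f}b(\x)+\mathcal{L}_{\hat g}b(\x)\,\ci .
\]

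\emph{Barrier nonnegativity.} Condition~\eqref{eq:invariance_cbf_indexnu} then reads $\dot b\ge-\alpha_0(b)$. Since $\alpha_0\in\mathcal{K}_\infty$ and $b(\x(0))\ge 0$, the comparison lemma~\cite[Lemma~3.4]{khalil2002nonlinear} yields $b(\x(t))\ge 0$. Equivalently, one can apply Nagumo's Theorem~\ref{theorem:Nagumo} to the reduced ODE on $\mathcal{M}$: on $\partial\mathcal{D}$ the condition gives $\dot b\ge 0$. Combining this with manifold invariance, trajectories remain in $\mathcal{D}\cap\mathcal{M}$.

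\emph{Main obstacle.} The hardest step is justifying that the dynamics on $\mathcal{M}$ really coincide with the projected fields~\eqref{eq:projected_fields_indexnu}, i.e.\ that $\dot{\x}_a$ equals $-(J_a^{(\nu)})^\dagger J_d^{(\nu)}(f_d+g_d\ci)$. This requires \eqref{eq:extended_compat_detailed} to be consistent. Consistency is supplied by the projection condition at $k=\nu$ together with the full row rank of $J_{ext}$: the projection condition removes the component of $J_d^{(\nu)}(f_d+g_d\ci)$ lying in $\ker (J_a^{(\nu)})^\top$, so $J_a^{(\nu)}(J_a^{(\nu)})^\dagger$ acts as the identity on what remains. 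Only then is the lifted Lie derivative of $b$ the true $\dot b$.

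A secondary point is existence of a single $\ci$ meeting both the equalities and the inequality simultaneously. The theorem takes this as a hypothesis, so it only needs a remark, with verification deferred to Section~\ref{section:verification}.
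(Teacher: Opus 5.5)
Your proposal is correct and follows the paper's own argument: manifold invariance from Proposition~\ref{prop:manifold_invariance_indexnu}, then reduction to the projected ODE $\dot{\x}=\hat f(\x)+\hat g(\x)\ci$ via Lemma~\ref{lemma:projected_fields_indexnu}, then the comparison-lemma step on the projected dynamics to obtain $b(\x(t))\ge 0$. Your additions---the explicit consistent-initialization hypothesis $\phi^{(k)}(\x(0))=0$ (which the theorem silently inherits from the proposition), the backward integration of the derivative cascade, and the range/consistency check at $k=\nu$---only spell out steps that the paper's three-line proof leaves implicit.
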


\begin{proof}
The proof follows the same structure as Theorem~\ref{thm:positive_invariance}. Condition \eqref{eq:invariance_projection_indexnu} ensures manifold invariance via Proposition~\ref{prop:manifold_invariance_indexnu}, so trajectories evolve with dynamics $\dot{\x} = \hat{f}(\x) + \hat{g}(\x)\ci$ on $\mathcal{M}$ by Lemma~\ref{lemma:projected_fields_indexnu}. Condition \eqref{eq:invariance_cbf_indexnu} then ensures $b(\x(t)) \geq 0$ via the standard CBF comparison lemma argument on the projected dynamics.
\end{proof}
This theorem reveals a fundamental structure for index-$\nu$ systems: while the algebraic constraint $\phi$ requires high-order treatment due to the index-$\nu$ structure (Proposition~\ref{prop:manifold_invariance_indexnu} requires enforcing the $\nu$-th order compatibility condition), the barrier function $b$ is enforced via a standard CBF condition \eqref{eq:invariance_cbf_indexnu} once the trajectory is constrained to $\mathcal{M}$. The projected vector fields $\hat{f}(\x)$ and $\hat{g}(\x)$ incorporate all higher-order hidden algebraic constraints, allowing safety to be enforced using CBF methods applied on the manifold.

\begin{definition}[DAE-aware CBF]\label{def:dae_aware_cbf}
A continuous and differentiable function $b$ on $\mathcal{M}$ is called a \emph{DAE-aware CBF} if there exists a function $\alpha_{0}$ such that, for every $\x \in \mathcal{D} \cap \mathcal{M}$, there exists a control input $\ci \in \mathcal{U}$ satisfying \eqref{eq:invariance_conditions_indexnu}.
\end{definition}

Theorems~\ref{thm:positive_invariance} and~\ref{thm:positive_invariance_indexnu} establish sufficient conditions for positive invariance of a safe set $\mathcal{C}$ under the DAE dynamics~\eqref{eq:dyn}, assuming that $b$ is a valid DAE-aware CBF. The following section develops computational methods to verify this assumption for candidate barrier functions represented as polynomials or neural networks.  

\section{DAE-aware CBF Verification}
\label{section:verification}

In this section, we focus on verifying if the given function $b : \mathbb{R}^{n_x} \to \mathbb{R}$ is a valid DAE-aware CBF. 
The verification consists of geometric 
correctness and feasibility checking. Geometric correctness (referred to as correctness) ensures that the safe set $\mathcal{C}$ contains the candidate barrier set $\mathcal{D}$ on the manifold, i.e., $\mathcal{D} \cap \mathcal{M} \subseteq \mathcal{C} \cap \mathcal{M}$. Feasibility ensures that for every $\x \in \mathcal{D} \cap \mathcal{M}$ there exists a control input $\ci$ that preserves both manifold compatibility and positive invariance.

\begin{problem}
\label{problem:verification}
Given a candidate DAE-aware CBF $b : \mathbb{R}^{n_x} \to \mathbb{R}$, verify the following
\begin{enumerate}
    \item (Correctness) $\mathcal{D}\cap\mathcal{M} \subseteq \mathcal{C}\cap\mathcal{M}$.
    \item (Feasibility) For every $\x \in \mathcal{D} \cap \mathcal{M}$, there exists a control input $\ci$ such that the compatibility condition holds and the set $\mathcal{D} \cap \mathcal{M}$ is positively invariant.
\end{enumerate}
\end{problem}

The proposed verification framework is shown in Fig. \ref{fig:roadmap_verification}.

\subsection{Preliminaries}
The results in this section concern Positivstellensatz and Farkas' Lemma, and are standard in real algebraic geometry; see~\cite{parrilo2003semidefinite,matousek2006understanding} for more details. These results will be used to derive the verification conditions for safety CBFs.

We first introduce some required background from real algebraic geometry. A polynomial, denoted as $p(\x)$, is SOS if and only if it can be written as 
\begin{equation}
    p(\x) = \sum_{i=1}^{n} (p_i(\x))^2 \nonumber
\end{equation}
for some polynomials $p_i(\x)$. The set of all SOS polynomials is denoted as $\mbox{SOS}$.

For certifying non-negative polynomials, the cone generated from a set of polynomials $\Phi_{1},\ldots,\Phi_{k_{\Sigma}}$ is given by 
\begin{multline*}
\Sigma[\Phi_{1},\ldots,\Phi_{k_{\Sigma}}] = \left\{\sum_{K \subseteq \{1,\ldots,k_{\Sigma}\}}{\gamma_{K}(\x)\prod_{i \in K}{\Phi_{i}(\x)}} : \right. \\
\left.\gamma_{K}(\x) \in \mbox{ SOS } \;\forall \; K \subseteq \{1,\ldots,k_{\Sigma}\}\right\}.
\end{multline*}
The monoid $\mathcal{N}$ generated from a set of polynomials $\chi_{1},\ldots,\chi_{k_{\mathcal{N}}}$ with non-negative integer exponents $r_{1}\ldots r_{k_{\mathcal{N}}}$ is defined as
\begin{equation*}
\mathcal{N}[\chi_{1},\ldots,\chi_{k_{\mathcal{N}}}] = \left\{{\prod_{i =1}^{k_{\mathcal{N}}}{\chi^{r_{i}}(\x)}} : r_{1}\ldots r_{k_{\mathcal{N}}}\in\mathbb{Z}_{+}\right\}.
\end{equation*}
The ideal generated from polynomials $\Gamma_{1},\ldots,\Gamma_{k_{\mathbb{I}}}$ is given by 
\begin{multline*}
\mathbb{I}[\Gamma_{1},\ldots,\Gamma_{k_{\mathbb{I}}}] \\
= \left\{\sum_{i=1}^{k_{\mathbb{I}}}{p_{i}(\x)\Gamma_{i}(\x)} : p_{1},\ldots,p_{k_{\mathbb{I}}} \mbox{ are polynomials}\right\}.
\end{multline*}

The following theorem shows that the emptiness of a set defined by polynomial inequalities, non-vanishing conditions, and equalities is equivalent to the existence of polynomials belonging to the respective cone, monoid, and ideal, whose sum equals zero. 

\begin{theorem}[Positivstellensatz \cite{parrilo2003semidefinite}]
\label{th:Positivstellensatz}
Let $\left(\Phi_{i}\right)_{j=1, \ldots, k_{\Sigma}}$, $\left(\chi_{j}\right)_{k=1, \ldots, k_{\mathcal{N}}}$, $\left(\Gamma_{\ell}\right)_{\ell=1, \ldots, k_{\mathbb{I}}}$ be finite families of polynomials. Then, the following properties are equivalent.
\begin{enumerate}
    \item The set
    \begin{equation}
    \label{eq:psatz_set}
    \left\{
    \begin{array}{l|ll}
    \x \in \mathbb{R}^{n_{x}} &
    \begin{array}{ll}
    \Phi_{i}(\x) \geq 0, & i=1, \ldots, k_{\Sigma} \\
    \chi_{j}(\x) \neq 0, & j=1, \ldots, k_{\mathcal{N}} \\
    \Gamma_{l}(\x)=0, & l=1, \ldots, k_{\mathbb{I}}
    \end{array}
    \end{array}
    \right\}
    \end{equation}
    is empty.
    \item There exist $\Phi \in \Sigma, \chi \in \mathcal{N}, \Gamma \in \mathbb{I}$ such that $\Phi+\chi^{2}+\Gamma=0$.
\end{enumerate}
\end{theorem}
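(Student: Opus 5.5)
The plan is to prove the two implications separately. The implication (2)$\Rightarrow$(1) is an elementary sign argument. The implication (1)$\Rightarrow$(2) is the substantive direction. I would prove it by the classical Stengle/Krivine route: an abstract order-theoretic argument followed by a transfer principle from real closed fields back to $\mathbb{R}$.

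For (2)$\Rightarrow$(1), suppose $\Phi+\chi^{2}+\Gamma=0$ with $\Phi\in\Sigma$, $\chi\in\mathcal{N}$ and $\Gamma\in\mathbb{I}$, and suppose some $\x$ lies in the set \eqref{eq:psatz_set}.
\begin{itemize}
\item Each term $\gamma_K(\x)\prod_{i\in K}\Phi_i(\x)$ is a product of an SOS (hence nonnegative) polynomial and nonnegative factors, so $\Phi(\x)\ge 0$.
\item $\Gamma(\x)=0$, since each $\Gamma_\ell(\x)=0$.
\item $\chi(\x)=\prod_j\chi_j(\x)^{r_j}\neq 0$, since every $\chi_j(\x)\neq 0$; hence $\chi(\x)^2>0$.
\end{itemize}
Evaluating the identity at $\x$ then gives $0=\Phi(\x)+\chi(\x)^2+\Gamma(\x)>0$, a contradiction. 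So the set is empty.

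For (1)$\Rightarrow$(2) I would argue by contraposition: assume no identity $\Phi+\chi^{2}+\Gamma=0$ exists, and produce a point of the set. Let $A=\mathbb{R}[\x]$ and $T=\Sigma[\Phi_1,\ldots,\Phi_{k_\Sigma}]$, which is a preordering: closed under sums and products and containing all squares. Let $I=\mathbb{I}[\Gamma_1,\ldots,\Gamma_{k_{\mathbb{I}}}]$, and let $S=\mathcal{N}[\chi_1,\ldots,\chi_{k_{\mathcal{N}}}]$. The hypothesis says exactly that $-s^2\notin T+I$ for every $s\in S$.

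Using Zorn's lemma, I would enlarge $T+I$ to a maximal subset $P\supseteq T+I$ with the following properties: $P+P\subseteq P$, $PP\subseteq P$, $P$ contains all squares, and $-s^2\notin P$ for all $s\in S$. The standard maximality argument then shows that $P$ is a prime cone, meaning:
\begin{itemize}
\item $P\cup -P=A$;
\item $\mathfrak p:=P\cap -P$ is a prime ideal;
\item $\mathfrak p\supseteq I$ and $\mathfrak p\cap S=\emptyset$.
\end{itemize}
The key sub-step is the dichotomy: if neither $a$ nor $-a$ can be adjoined to $P$, then combining the two witnesses yields some $-s^2\in P$. I expect this step to be the main technical obstacle, because the monoid must be carried through the products, which forces the use of $s_1^2s_2^2=(s_1s_2)^2$ with $s_1s_2\in S$.

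Next, $P$ induces an ordering on the field of fractions $F$ of $A/\mathfrak p$; let $R$ be its real closure. The images $\bar\x\in R^{n_x}$ of the coordinates then satisfy the following, because of the membership properties of $P$ and $\mathfrak p$ above:
\begin{itemize}
\item $\Phi_i(\bar\x)\ge 0$, since $\Phi_i\in P$;
\item $\Gamma_\ell(\bar\x)=0$, since $\Gamma_\ell\in\mathfrak p$;
\item $\chi_j(\bar\x)\neq 0$, since $\chi_j\notin\mathfrak p$.
\end{itemize}
So the system has a solution over $R$. The system is a first-order sentence with real parameters, and $R\supseteq\mathbb{R}$ are both real closed fields. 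By the Tarski--Seidenberg transfer principle (model completeness of real closed fields), the system also has a solution in $\mathbb{R}^{n_x}$. This contradicts (1) and completes the contraposition.

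Since the paper uses this result purely as a tool, an acceptable alternative is to cite \cite{parrilo2003semidefinite} (Stengle's Positivstellensatz) for (1)$\Rightarrow$(2) and give only the elementary converse in full.
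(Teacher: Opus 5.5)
Your proof is correct, but it does not follow the paper, because the paper gives no proof of this theorem: it imports Stengle's Positivstellensatz as a black box from \cite{parrilo2003semidefinite}.

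Your argument for (2)$\Rightarrow$(1) is complete. Your argument for (1)$\Rightarrow$(2) is the classical Krivine--Stengle proof, and its steps check out:
\begin{enumerate}
\item $T+I$ is a preordering that misses $-S^2$.
\item A Zorn-maximal such $P$ satisfies $P\cup -P=A$ by the product trick you describe, using $s_1s_2\in S$.
\item $\mathfrak p=P\cap -P$ is an ideal that contains $I$ and is disjoint from $S$.
\item Transfer from the real closure back to $\mathbb{R}$ finishes the argument.
\end{enumerate}

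Make three points explicit in a write-up.
\begin{enumerate}
\item Primeness of $\mathfrak p$ needs its own short argument; it does not follow from the dichotomy alone. Suppose $ab\in\mathfrak p$ with $a,b\notin\mathfrak p$. After sign changes you may assume $-a,-b\notin P$. Maximality then gives $s_1^2+p_1=aq_1$ and $s_2^2+p_2=bq_2$. Since $-ab\in P$, this forces $-(s_1s_2)^2\in P$, a contradiction.
\item Nonzero constants are units, so $\mathfrak p\cap\mathbb{R}=0$ and $\mathbb{R}$ embeds in $R$.
\item This embedding is order-preserving, because positive reals are squares. This is what justifies using model completeness with the real coefficients of the $\Phi_i,\chi_j,\Gamma_\ell$ as parameters.
\end{enumerate}

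The two routes buy different things. Citing keeps the paper short. Your proof exposes something the paper passes over. Every place the paper actually uses the theorem, Lemma~\ref{lem:crt_psatz} and Proposition~\ref{prop:polynomial_dae_aware}, needs only the elementary direction (2)$\Rightarrow$(1): certificate implies emptiness. The hard direction (1)$\Rightarrow$(2) is nonconstructive (Zorn plus Tarski--Seidenberg), and your argument gives no bound on the degrees of $\Phi$, $\chi$, $\Gamma$. That is why the fixed-degree SOS program \eqref{eq:crt_sos_program} is only a sufficient test, even though the abstract identity is necessary and sufficient.
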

Positivstellensatz is a central theorem in real algebraic geometry; it provides a certificate of infeasibility for systems defined by polynomial inequalities, equalities, and non-vanishing constraints. When the polynomials are linear, a classical result that plays an analogous role is Farkas' Lemma. The following is one of its variants; several versions of this lemma exist. It is used, in subsequent sections, to derive the synthesis conditions for safety CBFs.

\begin{figure}[t]
    \centering
    \begin{tikzpicture}[
    >=Latex,
    font=\rmfamily\fontsize{8pt}{9pt}\selectfont, 
    line cap=round,
    line join=round,
    node distance=4mm and 3mm, 
    box/.style={
      draw=ink,
      rounded corners=1pt,
      line width=0.4pt,
      fill=gfill,
      align=center,
      inner sep=3pt,
      text width=#1
    },
    emphbox/.style={
      draw=edgegray,
      rounded corners=1pt,
      line width=0.6pt,
      fill=gfill2,
      align=center,
      inner sep=3pt,
      text width=#1
    },
    arr/.style={->, draw=linegray, line width=0.6pt}
]

\def\colW{3.8cm} 

\node[emphbox=8cm] (Start) {%
\textbf{Problem \ref{problem:verification}: Verification of Candidate CBF $b(x)$}\\
Is the function Correct? (Goal alignment) \quad \& \quad Is it Feasible? (Control existence)
};

\node[box=\colW, below=5mm of Start.south west, anchor=north west] (C1) {%
\textbf{1) Correctness Check}\\
$\mathcal{D} \cap \mathcal{M} \subseteq \mathcal{C} \cap \mathcal{M}$\\
Ensures $b(x) \geq 0 \Rightarrow h(x) \geq 0$
};

\node[box=\colW, below=of C1] (C2) {%
\textbf{Certificate Search}\\
\textit{Polynomial:} SOS Program \eqref{eq:crt_sos_program}\\
\textit{General:} NLP Min-Search \eqref{eq:correctness_nlp}
};

\node[box=\colW, below=5mm of Start.south east, anchor=north east] (F1) {%
\textbf{2) Feasibility Check}\\
Does an admissible $u$ exist for\\
compatibility + CBF + input bounds?
};

\node[box=\colW, below=of F1] (F2) {%
\textbf{Farkas Duality}\\
Form $A_{\mathrm{dae}}$ and $r_{\mathrm{dae}}$ \eqref{eq:dae_affine_matrix}-\eqref{eq:dae_affine_vector}\\
Convert to dual multipliers $\lambda$
};

\node[box=1.8cm, below left=4mm and -19mm of F2] (F3a) {%
\textbf{Interior}\\
Check \eqref{eq:dae_dual_int}
};
\node[box=1.8cm, below right=4mm and -19mm of F2] (F3b) {%
\textbf{Boundary}\\
Check \eqref{eq:dae_dual_bnd}
};

\node[box=8cm, below=2.5cm of C2.west, anchor=north west] (M1) {%
\textbf{Implementation Framework}\\
\textit{Polynomials (Assumption \ref{assumption:polynomial_dae}):} Positivstellensatz Certificates (Prop. \ref{prop:polynomial_dae_aware})\\
\textit{Non-Polynomials:} Global NLP for Counterexamples $x_{ce}$
};

\node[emphbox=8cm, below=4mm of M1] (End) {%
\textbf{Theorem \ref{thm:dae_aware_cbf_verification} Result}\\
If Correctness, Interior Feasibility, and Boundary Tangency hold\\
$\Rightarrow$ $b(x)$ is a valid \textbf{DAE-aware CBF}.
};

\draw[arr] (Start.south -| C1) -- (C1);
\draw[arr] (Start.south -| F1) -- (F1);
\draw[arr] (C1) -- (C2);
\draw[arr] (F1) -- (F2);
\draw[arr] (F2) -- (F3a);
\draw[arr] (F2) -- (F3b);

\draw[arr] (C2.south) -| (M1.160);
\draw[arr] (F3a.south) -- ++(0,-0.15) -| (M1.20);
\draw[arr] (F3b.south) -- ++(0,-0.15) -| (M1.20);

\draw[arr] (M1) -- (End);

\end{tikzpicture}

\caption{Roadmap for the verification of a candidate DAE-aware CBF $b(\x)$ (Problem~\ref{problem:verification}). The framework consists of \textit{Left:} the correctness branch verifies that $\mathcal{D} \cap \mathcal{M} \subseteq \mathcal{C} \cap \mathcal{M}$ via an SOS program~\eqref{eq:crt_sos_program} for polynomial systems, or a nonlinear program (NLP) counterexample search~\eqref{eq:correctness_nlp} in the general case. \textit{Right:} the feasibility branch applies Farkas duality to the affine input constraint system $(A_{\mathrm{dae}}, \boldsymbol{r}_{\mathrm{dae}})$, decomposing the feasibility certificate into an interior condition~\eqref{eq:dae_dual_int} and a boundary tangency condition~\eqref{eq:dae_dual_bnd}. Both branches are unified in Theorem~\ref{thm:dae_aware_cbf_verification}: if correctness, interior feasibility, and boundary tangency all hold, then $b(\x)$ is certified as a valid DAE-aware CBF.}
\label{fig:roadmap_verification}
\vspace{-0.5cm}
\end{figure}

\begin{lemma}[Farkas' Lemma \cite{matousek2006understanding}]
\label{Lemma:Farkas}
Let $A \in \mathbb{R}^{m \times n}$ be a real matrix with $m$ rows and $n$ columns, and let ${r} \in \mathbb{R}^{n}$ be a vector. One of the following conditions holds.
\begin{itemize}
    \item The system of inequalities $A \x \leq {r}$ has a solution.
    \item There exists ${y}$ such that ${y} \geq {0}$, ${y}^{T} A={0}^{T}$ and ${y}^{T} {r} < 0$, where ${0}$ denotes a zero vector.
\end{itemize}
\end{lemma}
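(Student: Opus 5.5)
We prove the two usual halves: the alternatives cannot hold simultaneously, and at least one of them always holds. Throughout we read the dimensions as $r \in \mathbb{R}^{m}$ and $y \in \mathbb{R}^{m}$, which is what the products $A\x$ and $y^{T}A$ require.

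\emph{Both alternatives cannot hold.} Suppose $A\x \leq r$ and $y \geq 0$ with $y^{T}A = 0^{T}$. Multiplying the inequality componentwise by the nonnegative entries of $y$ gives
\begin{equation*}
0 = (y^{T}A)\x = y^{T}(A\x) \leq y^{T}r < 0,
\end{equation*}
which is a contradiction.

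\emph{At least one alternative holds.} First convert the system to standard form by splitting the free variable and adding slacks. Write $\x = \x^{+} - \x^{-}$ and introduce $s \geq 0$. Set $M := [A,\,-A,\,I_m]$ and $z := (\x^{+},\x^{-},s) \geq 0$. Then $A\x \leq r$ is solvable if and only if $Mz = r$ has a solution with $z \geq 0$. It therefore suffices to prove the standard form of the lemma: either $r \in K := \{Mz : z \geq 0\}$, or there is $w$ with $M^{T}w \geq 0$ and $w^{T}r < 0$.

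Given such a $w$, the three column blocks of $M^{T}w \geq 0$ give
\begin{equation*}
A^{T}w \geq 0, \qquad -A^{T}w \geq 0, \qquad w \geq 0.
\end{equation*}
The first two force $A^{T}w = 0$. Hence $y := w$ satisfies $y \geq 0$, $y^{T}A = 0^{T}$ and $y^{T}r < 0$, which is exactly the second alternative.

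\emph{The standard form.} Suppose $r \notin K$. The set $K$ is a convex cone. Once we know $K$ is closed, the strict separating hyperplane theorem for a point and a closed convex set gives $w$ and $c$ with
\begin{equation*}
w^{T}r < c \leq w^{T}k \quad \text{for all } k \in K.
\end{equation*}
Since $0 \in K$, we get $c \leq 0$, so $w^{T}r < 0$. Since $K$ is a cone, $w^{T}k$ cannot be negative for any $k \in K$; otherwise scaling $k$ by a large positive factor would drive $w^{T}k$ below $c$. Hence $w^{T}k \geq 0$ on $K$. Applying this to each column of $M$ gives $M^{T}w \geq 0$.

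\emph{Main obstacle: closedness of $K$.} We handle this with a conic Carathéodory argument.

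1. Every element of $K$ is a nonnegative combination of a linearly independent subset of the columns of $M$. If a representation uses dependent columns, move along the dependence relation until one coefficient reaches zero; this removes a column while keeping all coefficients nonnegative. Repeat until the remaining columns are independent.

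2. Hence $K$ is a finite union of cones $\{M_{B}z_{B} : z_{B} \geq 0\}$, where $M_{B}$ has linearly independent columns.

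3. Each such cone is closed. Because $M_{B}$ is injective, it is a linear homeomorphism onto its range, so it maps the closed orthant onto a closed subset of that (closed) subspace.

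4. A finite union of closed sets is closed, so $K$ is closed.

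If one prefers to avoid topology altogether, an alternative is induction on the number of variables via Fourier--Motzkin elimination. Eliminating one variable produces nonnegative combinations of rows, and these combinations supply the multiplier $y$ directly.
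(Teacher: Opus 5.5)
The paper gives no proof of this lemma; it imports it from \cite{matousek2006understanding}, so there is no in-paper argument to compare your route against.

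Your proof is correct and complete. It follows the standard textbook route:
\begin{itemize}
\item The ``not both'' half is the one-line computation $0 = y^{T}A\x \le y^{T}r < 0$.
\item The ``at least one'' half reduces, via $\x = \x^{+}-\x^{-}$ and slacks, to the conic form: either $r \in \{Mz : z \ge 0\}$, or some $w$ satisfies $M^{T}w \ge 0$ and $w^{T}r < 0$.
\item You settle the conic form by strictly separating $r$ from the finitely generated cone. Closedness of that cone comes from the conic Carath\'eodory decomposition into images of the orthant under injective maps.
\end{itemize}

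Two points work in your favour. First, you correctly read $r \in \mathbb{R}^{m}$ rather than the stated $\mathbb{R}^{n}$. This is also the reading the paper needs when it applies the lemma to $A_{\mathrm{dae}}(\x)\,\ci \le r_{\mathrm{dae}}(\x)$, with multipliers $\lambda$ indexed by the $2n_m+1+n_u^c$ rows. Second, the statement only says that \emph{one} alternative holds, but Proposition~\ref{prop:dae_feasibility} and Theorem~\ref{thm:dae_aware_cbf_verification} use it as an \emph{if and only if}. That use needs your mutual-exclusivity half in addition to the existence half, so you have proved the version the paper actually relies on.

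The only details left implicit are routine and sit in the Carath\'eodory step. You should first discard zero coefficients. You should also orient the dependence vector so that some coefficient decreases before stepping to the first zero.

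Your Fourier--Motzkin alternative would also work and avoids topology entirely. It costs more bookkeeping to track the nonnegative row combinations that produce $y$.
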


Farkas' Lemma states that for a given system of linear inequalities, either $(i)$ the system has a feasible solution, or $(ii)$ there exists a non-negative vector ${y}$ that certifies the infeasibility of the system. This lemma is a key result in linear programming for analyzing the feasibility of linear systems.

\subsection{Correctness Verification}
We first present the correctness condition. Correctness requires $\mathcal{D} \cap \mathcal{M} \subseteq \mathcal{C} \cap \mathcal{M}$ so that any state on the manifold considered safe by the candidate barrier set is also safe under the specification $h(\x)$. A sufficient condition that we enforce on the admissible manifold is
\begin{equation}
\label{eq:correctness_pointwise}
    h(\x) \; \geq \; 0, \qquad \forall\, \x \in \mathcal{D} \cap \mathcal{M}.
\end{equation}

Take any $\x \in \mathcal{D} \cap \mathcal{M}$. Then $b(\x) \ge 0$ by definition and \eqref{eq:correctness_pointwise} implies $h(\x) \ge 0$, so $\x \in \mathcal{C}\cap\mathcal{M}$. The inequality \eqref{eq:correctness_pointwise} can be checked or certified in two complementary ways, i.e., a Positivstellensatz-based approach for polynomial systems and nonlinear programs.

We denote a state violating \eqref{eq:correctness_pointwise} as a correctness counterexample $\x^{(c)}_{ce}$ and the set of correctness counterexamples as $\mathbf{X}^{(c)}_{ce}$. We introduce a slack variable $s \neq 0$ and define the set of violating states over variables $(\x,s)$ as follows:
$$
\mathcal{S}_{\mathrm{viol}} := \big\{(\x,s) : \phi(\x)={0}, b(\x) \ge 0, -h(\x) - s^{2} \ge 0, s \neq 0\big\}
$$
The following lemma characterizes correctness using Positivstellensatz.

\begin{lemma}[Correctness via Positivstellensatz]
\label{lem:crt_psatz}
Assume $h$, $b$, and $\phi$ are polynomials and the set $\mathcal{D} \cap \mathcal{M}$ is compact. The given CBF is geometrically correct (i.e., $\mathcal{D} \cap \mathcal{M} \subseteq \mathcal{C} \cap \mathcal{M}$) if there exist SOS polynomials $\sigma_0, \sigma_1, \sigma_2 \in \Sigma[\x, s]$, polynomials $\rho_\ell \in \mathbb{R}[\x, s]$ for $\ell = 1, \ldots, m$, and an integer $k \geq 1$ such that
\begin{equation}
\label{eq:crt_psatz_identity}
    \Phi(\x, s) + \chi(\x, s) + \varphi(\x, s) = 0,
\end{equation}
where
\begin{align*}
    \Phi(\x, s) &= \sigma_0 + \sigma_1 \cdot b(\x) + \sigma_2 \cdot (-h(\x) - s^2), \\
    \chi(\x, s) &= s^{2k}, \\
    \varphi(\x, s) &= \sum_{\ell=1}^{m} \rho_\ell(\x, s) \, \phi_\ell(\x).
\end{align*}
\end{lemma}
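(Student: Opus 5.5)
The plan is a contrapositive argument built on the trivial direction of the Positivstellensatz (Theorem~\ref{th:Positivstellensatz}): the existence of the identity~\eqref{eq:crt_psatz_identity} forces the set $\mathcal{S}_{\mathrm{viol}}$ to be empty. Emptiness of $\mathcal{S}_{\mathrm{viol}}$ in turn gives~\eqref{eq:correctness_pointwise}, which, as argued just before the lemma, yields $\mathcal{D}\cap\mathcal{M}\subseteq\mathcal{C}\cap\mathcal{M}$.

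\textbf{Step 1 (the identity excludes violating points).} Suppose the identity holds but some $(\x^\star,s^\star)\in\mathcal{S}_{\mathrm{viol}}$ exists, and evaluate~\eqref{eq:crt_psatz_identity} at that point.
\begin{itemize}
\item Since $\phi_\ell(\x^\star)=0$ for every $\ell$, the ideal term satisfies $\varphi(\x^\star,s^\star)=0$.
\item Since $\sigma_0,\sigma_1,\sigma_2$ are SOS, they are nonnegative everywhere. Together with $b(\x^\star)\ge 0$ and $-h(\x^\star)-(s^\star)^2\ge 0$, this gives $\Phi(\x^\star,s^\star)\ge 0$.
\item Since $s^\star\neq 0$, the monoid term satisfies $\chi(\x^\star,s^\star)=(s^\star)^{2k}>0$.
\end{itemize}
Summing, the left-hand side of~\eqref{eq:crt_psatz_identity} is strictly positive, which contradicts the identity. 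Hence $\mathcal{S}_{\mathrm{viol}}=\emptyset$. This uses only pointwise evaluation and needs neither the converse direction of Theorem~\ref{th:Positivstellensatz} nor compactness.

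\textbf{Step 2 (emptiness gives correctness).} Suppose some $\x\in\mathcal{D}\cap\mathcal{M}$ had $h(\x)<0$. Set $s=\sqrt{-h(\x)/2}$. Then $s\neq 0$ and $-h(\x)-s^2=-h(\x)/2>0$. Also $\phi(\x)=0$ and $b(\x)\ge 0$, so $(\x,s)\in\mathcal{S}_{\mathrm{viol}}$, contradicting Step~1. Therefore $h\ge 0$ on $\mathcal{D}\cap\mathcal{M}$, i.e.~\eqref{eq:correctness_pointwise} holds.

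\textbf{The main subtlety is the slack encoding.} With the constraint $-h-s^2\ge 0$ and $s\neq 0$, the set $\mathcal{S}_{\mathrm{viol}}$ captures exactly the strict violations $h<0$. Boundary points with $h=0$ are not counterexamples, because $-h-s^2\ge 0$ then forces $s=0$. The choice $s^2=-h/2$ in Step~2 is what makes this work.

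The compactness hypothesis matters only if one also wants the converse, namely that such certificates exist with bounded degree (via Putinar/Schmüdgen-type results). I would mention this in a remark rather than use it in the proof. I would also note that the paper's statement of Theorem~\ref{th:Positivstellensatz} writes $\chi^2$, while the lemma uses $\chi=s^{2k}$ directly; both are squares of monoid elements, so this is consistent.
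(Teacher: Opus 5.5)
Your proof is correct and follows the same chain as the paper: identity~\eqref{eq:crt_psatz_identity} $\Rightarrow$ $\mathcal{S}_{\mathrm{viol}}=\emptyset$ $\Rightarrow$ $h\ge 0$ on $\mathcal{D}\cap\mathcal{M}$. The difference lies in how the links are justified.

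\begin{itemize}
\item The paper gets emptiness by citing the equivalence in Theorem~\ref{th:Positivstellensatz}. It simply asserts that the slack $s\neq 0$ with $-h-s^2\ge 0$ encodes $h<0$.
\item You prove only the direction actually used, by evaluating the identity at a putative point of $\mathcal{S}_{\mathrm{viol}}$: the ideal term vanishes, the cone term is nonnegative, and $s^{2k}>0$. You also make the slack encoding explicit with the witness $s^2=-h(\x)/2$.
\end{itemize}

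Your version is self-contained. It shows transparently that neither the hard direction of the Positivstellensatz nor the compactness hypothesis is needed. The paper's detour through the equivalence is what underlies its later remark that the condition is also necessary.

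On that point, your planned side remark needs adjusting. The converse (a certificate exists whenever correctness holds) is Stengle's Positivstellensatz, which needs no compactness at all. However, it only guarantees $\Phi$ in the full cone $\Sigma[b,-h-s^2]$. That cone contains a cross term $\gamma\, b\,(-h-s^2)$ that the lemma's $\Phi=\sigma_0+\sigma_1 b+\sigma_2(-h-s^2)$ omits, so necessity of the stated restricted form does not follow directly. Also, Putinar/Schm\"{u}dgen-type theorems are existence results and do not by themselves supply degree bounds.
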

\begin{proof}
We show that identity~\eqref{eq:crt_psatz_identity} certifies $\mathcal{D} \cap \mathcal{M} \subseteq \mathcal{C} \cap \mathcal{M}$ by proving the semialgebraic set of violations $\mathcal{S}_{\mathrm{viol}}$ empty. 
The violation set is not empty, i.e., $\mathcal{S}_{\mathrm{viol}} \neq \emptyset$, if and only if there exists $\x$ with $\phi(\x) = {0}$, $b(\x) \ge 0$, and $h(\x) < 0$. We note that the slack variable $s$ with constraint $s \neq 0$ encodes the strict inequality $h(\x) < 0$. 

By Theorem~\ref{th:Positivstellensatz}, the set $\mathcal{S}_{\mathrm{viol}}$ is empty if and only if there exist $\Phi \in \Sigma[b, -h-s^2]$, $\chi \in \mathcal{M}[s]$, and $\varphi \in \mathbb{I}[\phi_1, \ldots, \phi_m]$ such that $\Phi + \chi^2 + \varphi = 0$. Writing $\chi = s^k$ for some integer $k \geq 1$, we have $\chi^2 = s^{2k}$, which yields identity~\eqref{eq:crt_psatz_identity}.

Hence the existence of such SOS polynomials $\sigma_0, \sigma_1, \sigma_2$, polynomial multipliers $\rho_\ell$, and integer $k \geq 1$ certifies that no $\x \in \mathcal{D} \cap \mathcal{M}$ satisfies $h(\x) < 0$, i.e., $\mathcal{D} \cap \mathcal{M} \subseteq \mathcal{C} \cap \mathcal{M}$.
\end{proof}

The correctness verification can be formulated as the following SOS program.

\begin{align}
    \text{Find } &\sigma_0(\x,s), \sigma_1(\x,s), \sigma_2(\x,s), \rho_1(\x,s), \ldots, \rho_m(\x,s), k \nonumber\\
    \text{s.t. } & \sigma_0, \sigma_1, \sigma_2 \in \text{SOS}, \quad k \geq 1, \nonumber\\
    & -1 + \sigma_0 + \sigma_1 \cdot b(\x) + \sigma_2 \cdot (-h(\x) - s^2) \nonumber\\
    & \qquad+ s^{2k} + \sum_{\ell=1}^{m} \rho_\ell \, \phi_\ell(\x) \in \text{SOS}.
\label{eq:crt_sos_program}
\end{align}
If the program~\eqref{eq:crt_sos_program} is feasible, then $\mathcal{D} \cap \mathcal{M} \subseteq \mathcal{C} \cap \mathcal{M}$.

The above Positivstellensatz 
condition \eqref{eq:crt_psatz_identity} is a sufficient and necessary verification. The SOS program~\eqref{eq:crt_sos_program} is a sufficient certificate for correctness.

\parstart{Revisiting Example~\ref{exmpl:DAE_unaware_example_1}}
For the wind turbine~\eqref{eq:wind_turbine}, program~\eqref{eq:crt_sos_program} is instantiated with $h(\x) = x_1 - x_{\max}$, the degree-2 candidate $b(\x)$ from Section~\ref{sec:experiments}, and the algebraic constraint $\phi(\x)$ from~\eqref{eq:wind_turbine} ($m=1$). With degree-2 SOS multipliers, the program is feasible, certifying $\mathcal{D} \cap \mathcal{M} \subseteq \mathcal{C} \cap \mathcal{M}$; see Tab.~\ref{tab:windturbine_size_time}.

If $h$ or $b$ is non-polynomial, we can verify \eqref{eq:correctness_pointwise} by solving
\begin{equation}
    \label{eq:correctness_nlp}
\begin{array}{ll}
\min\limits_{\x} & h(\x) \\
\text{s.t.} & \phi(\x) = {0}, \\
& b(\x) \ge 0.
\end{array}
\end{equation}
If the optimal value is nonnegative, then \eqref{eq:correctness_pointwise} holds. Otherwise, the optimizer provides a counterexample $\x^{(c)}_{ce}$ with $\phi(\x)={0}$, $b(\x) \ge 0$, and $h(\x) < 0$. 

\begin{remark}
\label{remark:nlp_verification}
Nonlinear programming verifications require global optimality to avoid false negatives. While global certificates can be obtained for polynomial instances via SMT or mixed-integer relaxations, such approaches may not scale efficiently. A negative optimal value yields a concrete counterexample demonstrating the violation.
\end{remark}

\subsection{Feasibility Verification}
\label{subsec:feasibility}
We next present the feasibility conditions for the DAE-aware CBFs. The analysis is divided into two parts, addressing the interior and boundary conditions separately.

Having established the projected vector field framework for index-1 DAEs and the compatibility conditions for higher-index DAEs, we now present unified feasibility conditions that apply to both cases. The key challenge is to ensure that the CBF condition can be satisfied while respecting both the algebraic constraints and any bounds on the control input.

In what follows, we denote by $n_{u}^{c} \in \mathbb{N}$ the number of linear constraints imposed on the control input. For simplicity, we unify the notations for index-1 and index-$\nu$ cases. Let $\bar{P}(\x)$, $\bar{J}_d(\x)$, and $\bar{\eta}(\x)$ denote $P(\x)$, $J_d(\x)$, and $\eta^{(1)}(\x)$ (for index-1) or $P^{(k)}(\x)$, $J_d^{(k)}(\x)$, and $\eta^{(k)}(\x)$ for all $k\in\{1, \ldots, \nu\}$ (for index-$\nu$). For a given $\x \in \mathcal{D} \cap \mathcal{M}$, the existence of a control input $\ci$ satisfying the compatibility, CBF, and input constraints
\begin{equation}
\begin{aligned}
    \bar{P}(\x) \big(\bar{J}_{d}(\x) f_{d}(\x) + \bar{\eta}(\x) g_{d}(\x) \ci\big) &= {0}, \\
    \mathcal{L}_{\hat{f}} b(\x) + \alpha\big(b(\x)\big)
        + \mathcal{L}_{\hat{g}} b(\x) \, \ci &\geq 0, \\
    A_{u} \ci &\leq {r}_{u},
\end{aligned}
    \label{eq:dae_sufficient_conditions}
\end{equation}
is equivalent to the feasibility of the affine system $A_{\mathrm{dae}}(\x) \ci \leq {r}_{\mathrm{dae}}(\x)$, where
\begin{align}
    A_{\mathrm{dae}}(\x) &:=
        \begin{bmatrix}
            -\bar{P}(\x) \bar{\eta}(\x) g_{d}(\x) \\
            \bar{P}(\x) \bar{\eta}(\x) g_{d}(\x) \\
            -\mathcal{L}_{\hat{g}} b(\x) \\
            A_{u}
        \end{bmatrix},
        \label{eq:dae_affine_matrix} \\
    {r}_{\mathrm{dae}}(\x) &:=
        \begin{bmatrix}
            \bar{P}(\x) \bar{J}_{d}(\x) f_{d}(\x) \\
            -\bar{P}(\x) \bar{J}_{d}(\x) f_{d}(\x) \\
            \mathcal{L}_{\hat{f}} b(\x) + \alpha\big(b(\x)\big) \\
            {r}_{u}
        \end{bmatrix}.
    \label{eq:dae_affine_vector}
\end{align}
This equivalence follows from encoding the compatibility equality as two opposing inequalities $\pm \bar{P}(\x) \bar{J}_{d}(\x)(f_{d}(\x) + g_{d}(\x) \ci) \geq {0}$. The first two block rows enforce the compatibility condition, the third row enforces the CBF condition, and the last block row enforces the input constraint $\ci \in \mathcal{U}$.

The following proposition provides a dual characterization of feasibility.
\begin{proposition}[Dual feasibility condition]
\label{prop:dae_feasibility}
The system~\eqref{eq:dae_sufficient_conditions} admits a solution $\ci$ if and only if there does not exist a multiplier ${\lambda} \in \mathbb{R}_{+}^{2 n_{m}+1+n_{u}^{c}}$ such that
\begin{equation}
    {\lambda}^{\top} A_{\mathrm{dae}}(\x) = {0}^{\top},
    \qquad
    {\lambda}^{\top} {r}_{\mathrm{dae}}(\x) < 0.
    \label{eq:dae_dual_vector_form}
\end{equation}
Equivalently, no nonnegative multipliers ${\lambda}^{-}, {\lambda}^{+} \in \mathbb{R}_{+}^{n_{m}}$, $\lambda^{b} \in \mathbb{R}_{+}$, and ${\lambda}^{u} \in \mathbb{R}_{+}^{n_{u}^{c}}$ satisfy
\begin{align}
    &\big(\bar{P}(\x) \bar{\eta}(\x) g_{d}(\x)\big)^{T}
        \big({\lambda}^{+} - {\lambda}^{-}\big)
        - \mathcal{L}_{\hat{g}} b(\x) \, \lambda^{b} \nonumber \\
    &\qquad + A_{u}^{T} {\lambda}^{u} = {0},
        \label{eq:dae_dual_balance} \\
    &\big({\lambda}^{-} - {\lambda}^{+}\big)^{T}
        \bar{P}(\x) \bar{J}_{d}(\x) f_{d}(\x)
        + \big({\lambda}^{u}\big)^{T} {r}_{u} \nonumber \\
    &\qquad + \lambda^{b} \Big(\mathcal{L}_{\hat{f}} b(\x) + \alpha\big(b(\x)\big)\Big) < 0.
        \label{eq:dae_dual_certificate}
\end{align}
\end{proposition}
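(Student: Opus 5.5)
The plan is to reduce the statement to Farkas' Lemma (Lemma~\ref{Lemma:Farkas}) applied pointwise, at a fixed $\x \in \mathcal{D}\cap\mathcal{M}$, with decision variable $\ci$. The argument has three steps.

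\textbf{Step 1: rewrite the system as linear inequalities in $\ci$.} Fix $\x$. Every coefficient in \eqref{eq:dae_sufficient_conditions}, namely $\bar{P}(\x)$, $\bar{J}_d(\x)$, $\bar{\eta}(\x)$, $g_d(\x)$, $\mathcal{L}_{\hat f}b(\x)$, $\mathcal{L}_{\hat g}b(\x)$ and $\alpha(b(\x))$, is a constant, so the system is affine in $\ci$. I will handle each part as follows.
\begin{itemize}
\item Split the equality $\bar{P}\bar{J}_d f_d + \bar{P}\bar{\eta} g_d \ci = 0$ into the two inequalities $-\bar{P}\bar{\eta}g_d\ci \le \bar{P}\bar{J}_d f_d$ and $\bar{P}\bar{\eta}g_d\ci \le -\bar{P}\bar{J}_d f_d$.
\item Write the CBF inequality as $-\mathcal{L}_{\hat g}b\,\ci \le \mathcal{L}_{\hat f}b + \alpha(b)$.
\item Keep $A_u\ci\le r_u$ unchanged.
\end{itemize}
Stacking these four blocks gives exactly $A_{\mathrm{dae}}(\x)\ci \le r_{\mathrm{dae}}(\x)$ with the matrices in \eqref{eq:dae_affine_matrix}--\eqref{eq:dae_affine_vector}. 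The solution sets are identical, so solvability of \eqref{eq:dae_sufficient_conditions} is equivalent to solvability of this affine system.

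\textbf{Step 2: apply Farkas' Lemma.} Apply Lemma~\ref{Lemma:Farkas} with $A = A_{\mathrm{dae}}(\x)$ and $r = r_{\mathrm{dae}}(\x)$. The lemma is stated as "one of the following holds"; for the "if and only if" I need the two alternatives to be mutually exclusive. I will add that short argument. If $A\ci\le r$ and $y\ge 0$, $y^\top A = 0$, then
\[
0 = y^\top A \ci \le y^\top r,
\]
which contradicts $y^\top r<0$. Together, existence of $\ci$ holds exactly when no $\lambda\ge0$ satisfies \eqref{eq:dae_dual_vector_form}. The dimension of $\lambda$ is $n_m+n_m+1+n_u^c$, which matches the statement.

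\textbf{Step 3: expand the certificate blockwise.} Partition $\lambda = (\lambda^-,\lambda^+,\lambda^b,\lambda^u)$ to match the row blocks of $A_{\mathrm{dae}}$. The condition $\lambda^\top A_{\mathrm{dae}} = 0$ transposes to
\[
-(\bar{P}\bar{\eta}g_d)^\top\lambda^- + (\bar{P}\bar{\eta}g_d)^\top\lambda^+ - \mathcal{L}_{\hat g}b^\top\lambda^b + A_u^\top\lambda^u = 0,
\]
which is \eqref{eq:dae_dual_balance}. The condition $\lambda^\top r_{\mathrm{dae}}<0$ becomes
\[
(\lambda^- - \lambda^+)^\top\bar{P}\bar{J}_d f_d + \lambda^b\big(\mathcal{L}_{\hat f}b+\alpha(b)\big) + (\lambda^u)^\top r_u < 0,
\]
which is \eqref{eq:dae_dual_certificate}.

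\textbf{Expected obstacle.} The main care point is the index-$\nu$ case, where $\bar{P}$, $\bar{J}_d$ and $\bar{\eta}$ stand for the whole family $k=1,\ldots,\nu$. There I will stack all $\nu$ compatibility blocks as rows of $A_{\mathrm{dae}}$ and enlarge $\lambda^\pm$ accordingly, interpreting $n_m$ as the total number of stacked compatibility rows. The Farkas argument is unchanged by this. Apart from that, the proof is bookkeeping of signs and transposes.
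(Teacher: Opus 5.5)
Your proposal is correct and follows the paper's proof: fix $\x$, apply Farkas' Lemma to $A_{\mathrm{dae}}(\x)\ci \le r_{\mathrm{dae}}(\x)$, and partition $\lambda = (\lambda^-,\lambda^+,\lambda^b,\lambda^u)$ by row blocks to read off \eqref{eq:dae_dual_balance}--\eqref{eq:dae_dual_certificate}. Your explicit mutual-exclusivity argument and your stacking of the $\nu$ compatibility blocks (reading $n_m$ as the total number of stacked rows) fill in details the paper leaves implicit, but they do not change the route.
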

\begin{proof}
By Lemma~\ref{Lemma:Farkas}, the system $A_{\mathrm{dae}}(\x) \ci \leq {r}_{\mathrm{dae}}(\x)$ has no solution if and only if there exists ${\lambda} \geq {0}$ such that ${\lambda}^{\top} A_{\mathrm{dae}}(\x) = {0}^{\top}$ and ${\lambda}^{\top} {r}_{\mathrm{dae}}(\x) < 0$. Partitioning ${\lambda} = \big[\big({\lambda}^{-}\big)^{\top},\big({\lambda}^{+}\big)^{\top}, \lambda^{b}, \big({\lambda}^{u}\big)^{\top}\big]^{\top}$ according to the block structure of $A_{\mathrm{dae}}(\x)$ in~\eqref{eq:dae_affine_matrix}, the condition ${\lambda}^{\top} A_{\mathrm{dae}}(\x) = {0}^{\top}$ expands to~\eqref{eq:dae_dual_balance}, and ${\lambda}^{\top} {r}_{\mathrm{dae}}(\x) < 0$ expands to~\eqref{eq:dae_dual_certificate}.
\end{proof}

Proposition~\ref{prop:dae_feasibility} provides a dual characterization for the primal feasibility problem~\eqref{eq:dae_sufficient_conditions}. However, this formulation requires a class-$\mathcal{K}$ function $\alpha$ to be specified, and the resulting CBF condition is sufficient but not necessary for safety since $\alpha$ regulates how aggressively the control input must steer the system away from the boundary. In what follows, we derive necessary and sufficient conditions for positive invariance based on Nagumo's theorem, which requires only compatibility at interior points and tangency at boundary points. These exact conditions provide maximal flexibility in control design by characterizing the full set of admissible inputs.

Interior feasibility relies only on compatibility and input constraints. We define the interior matrices
\begin{align*}
    A_{\mathrm{int}}(\x) &:=
        \begin{bmatrix}
            -\bar{P}(\x) \bar{\eta}(\x) g_{d}(\x) \\
            \bar{P}(\x) \bar{\eta}(\x) g_{d}(\x) \\
            A_{u}
        \end{bmatrix}, \\
    {r}_{\mathrm{int}}(\x) &:=
        \begin{bmatrix}
            \bar{P}(\x) \bar{J}_{d}(\x) f_{d}(\x) \\
            -\bar{P}(\x) \bar{J}_{d}(\x) f_{d}(\x) \\
            {r}_{u}
        \end{bmatrix},
\end{align*}
and the feasibility of
\begin{equation*}
    A_{\mathrm{int}}(\x) \, \ci \; \le \; {r}_{\mathrm{int}}(\x)
\end{equation*}
enforces $\bar{P}(\x) \big(\bar{J}_{d}(\x) f_{d}(\x) + \bar{\eta}(\x) g_{d}(\x) \ci\big) = {0}$ and $A_{u} \ci \le {r}_{u}$ for all interior points.

For boundary points, we collect compatibility, tangency, and input constraints as
\begin{align*}
    A_{\mathrm{bnd}}(\x) &:=
        \begin{bmatrix}
            A_{\mathrm{int}}(\x) \\
            -\mathcal{L}_{\hat{g}} b(\x) 
        \end{bmatrix}, &
    {r}_{\mathrm{bnd}}(\x) &:=
        \begin{bmatrix}
            {r}_{\mathrm{int}}(\x) \\
            \mathcal{L}_{\hat{f}} b(\x) 
        \end{bmatrix}.
\end{align*}
Thus, the boundary inequalities $A_{\mathrm{bnd}}(\x) \ci \leq {r}_{\mathrm{bnd}}(\x)$ encode
\begin{equation*}
    \begin{aligned}
        \bar{P}(\x) \big(\bar{J}_{d}(\x) f_{d}(\x) + \bar{\eta}(\x) g_{d}(\x) \ci\big) &= {0}, \\
        \mathcal{L}_{\hat{f}} b(\x) + \mathcal{L}_{\hat{g}} b(\x) \, \ci &\geq 0, \\
        A_{u} \ci &\leq {r}_{u}.
    \end{aligned}
\end{equation*}
By Lemma~\ref{Lemma:Farkas}, feasibility of the interior and boundary stacks is equivalent to infeasibility of the dual systems
\begin{align}
    {\lambda}^{\top} A_{\mathrm{int}}(\x) = {0}^{\top},
    &\qquad {\lambda}^{\top} {r}_{\mathrm{int}}(\x) < 0,
        \label{eq:dae_dual_int} \\
    {\lambda}^{\top} A_{\mathrm{bnd}}(\x) = {0}^{\top},
    &\qquad {\lambda}^{\top} {r}_{\mathrm{bnd}}(\x) < 0,
        \label{eq:dae_dual_bnd}
\end{align}
with nonnegative multipliers ${\lambda} = \big[\big({\lambda}^{-}\big)^{\top},\big({\lambda}^{+}\big)^{\top}, \lambda^{b}, \big({\lambda}^{u}\big)^{\top}\big]^{\top}$ (and $\lambda^{b}$ omitted in the interior). Expanding \eqref{eq:dae_dual_bnd} gives
\begin{align}
\big(\bar{P}(\x) \bar{\eta}(\x) g_{d}(\x)\big)^{T}\big({\lambda}^{+} - {\lambda}^{-}\big)
 - \big(\mathcal{L}_{\hat{g}} b(\x)\big)^{T} \lambda^{b} + A_{u}^{T} {\lambda}^{u} &= {0},
    \label{eq:dae_dual_bnd_balance} \\
\big({\lambda}^{-} - {\lambda}^{+}\big)^{T} \bar{P}(\x) \bar{J}_{d}(\x) f_{d}(\x)
    + \lambda^{b} \, \mathcal{L}_{\hat{f}} b(\x)
    + \big({\lambda}^{u}\big)^{T} {r}_{u} &< 0.
            \label{eq:dae_dual_bnd_certificate}
\end{align}

\parstart{Revisiting Example~\ref{exmpl:DAE_unaware_example_1}}
For the wind turbine~\eqref{eq:wind_turbine}, the matrices $A_{\mathrm{int}}(\x)$ and $A_{\mathrm{bnd}}(\x)$ are constructed from the projected dynamics $(\hat{f}, \hat{g})$ computed in Section~\ref{sec:dae_cbf} and the input bounds $A_u \ci \leq {r}_u$. With degree-2 SOS multipliers, certificates are successfully found showing that the solution set of $\ci$ is not empty. Hence, a safe input $\ci$ exists at every state $\x\in\mathcal{D} \cap \mathcal{M}$; see Tab.~\ref{tab:windturbine_size_time}.

\begin{theorem}[DAE-aware CBF verification]
\label{thm:dae_aware_cbf_verification}
Let $b$ define the barrier set
\( \mathcal{D} := \{\x : b(\x)\ge 0\} \). Then $b$ is a \emph{DAE-aware CBF} if and only if the following conditions hold
\begin{enumerate}
    \item (Correctness) $\mathcal{D} \cap \mathcal{M} \subseteq \mathcal{C} \cap \mathcal{M}$.
    \item (Interior feasibility) For every $\x \in \operatorname{int}(\mathcal{D}) \cap \mathcal{M}$, the dual system \eqref{eq:dae_dual_int} admits no nonnegative solution.
    \item (Boundary tangency) For every $\x \in \partial \mathcal{D} \cap \mathcal{M}$, the dual system \eqref{eq:dae_dual_bnd} admits no nonnegative solution.
\end{enumerate}
\end{theorem}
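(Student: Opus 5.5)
The plan is to read ``DAE-aware CBF'' in the sense of Problem~\ref{problem:verification}: $b$ is correct, and at every $\x\in\mathcal{D}\cap\mathcal{M}$ some admissible input $\ci$ satisfies the compatibility condition and renders $\mathcal{D}\cap\mathcal{M}$ positively invariant. Both directions are then reduced pointwise to primal feasibility of $A_{\mathrm{int}}(\x)\ci\le r_{\mathrm{int}}(\x)$ and $A_{\mathrm{bnd}}(\x)\ci\le r_{\mathrm{bnd}}(\x)$. By Lemma~\ref{Lemma:Farkas}, primal feasibility is equivalent to the dual systems \eqref{eq:dae_dual_int} and \eqref{eq:dae_dual_bnd} having no nonnegative solution. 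This translation is exact at each fixed $\x$, so the real work is linking these primal conditions to invariance through Nagumo's theorem (Theorem~\ref{theorem:Nagumo}) applied to the projected ODE $\dot{\x}=\hat f(\x)+\hat g(\x)\ci$.

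\emph{Sufficiency.} Condition 1) is exactly correctness. Condition 2), via Farkas, gives at each interior point an input satisfying the compatibility equation and $A_u\ci\le r_u$. By Theorem~\ref{thm:positive_invariance} (index $1$) or Proposition~\ref{prop:manifold_invariance_indexnu} and Lemma~\ref{lemma:projected_fields_indexnu} (index $\nu$), this keeps the state on $\mathcal{M}$, where the dynamics are the projected ODE. Condition 3) gives at each $\x\in\partial\mathcal{D}\cap\mathcal{M}$ a compatible admissible input with $\mathcal{L}_{\hat f}b(\x)+\mathcal{L}_{\hat g}b(\x)\ci\ge 0$.

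The next step is to show this means $\hat f+\hat g\ci$ lies in the tangent cone of $\mathcal{D}\cap\mathcal{M}$. Since $\hat f,\hat g$ are tangent to $\mathcal{M}$ (shown after \eqref{eq:projected_fields}), that tangent cone equals $\{v\in T_{\x}\mathcal{M}:\nabla b(\x)^\top v\ge 0\}$. This identification needs a qualification condition: the restriction of $\nabla b$ to $T_{\x}\mathcal{M}$ must be nonzero on $\partial\mathcal{D}\cap\mathcal{M}$. I would state this as a standing regularity requirement. With it in place, Theorem~\ref{theorem:Nagumo}, applied to the closed set $\mathcal{D}\cap\mathcal{M}$ viewed inside the manifold, yields controlled positive invariance.

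\emph{Necessity.} Correctness is part of the definition, so condition 1) is immediate. If $\mathcal{D}\cap\mathcal{M}$ is invariant under compatible admissible inputs, then Proposition~\ref{prop:manifold_invariance_indexnu} shows every point of $\mathcal{M}$ admits an input satisfying the compatibility equation within the bounds. Hence the interior primal system is feasible, and by Farkas \eqref{eq:dae_dual_int} has no nonnegative solution. At a boundary point, the necessity half of Nagumo's theorem forces the velocity into the tangent cone. Under the same qualification, this gives $\nabla b^\top(\hat f+\hat g\ci)\ge 0$, so the boundary primal system is feasible and \eqref{eq:dae_dual_bnd} has no nonnegative solution.

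\emph{Main obstacle.} The hard part is the bridge between Nagumo's tangency condition and the class-$\mathcal{K}$ form of Definition~\ref{def:dae_aware_cbf}. On the boundary the two agree because $\alpha(0)=0$. In the interior, however, existence of some $\alpha$ with $\sup_{\ci}\bigl(\mathcal{L}_{\hat f}b+\mathcal{L}_{\hat g}b\,\ci\bigr)\ge-\alpha(b)$ requires constructing $\alpha$. I would first try to build it as an increasing majorant of $\max\{0,-\sup_{\ci}(\cdots)\}$ over the level sets $\{b=c\}\cap\mathcal{M}$. The ingredients are compactness of $\mathcal{D}\cap\mathcal{M}$ (as assumed in Lemma~\ref{lem:crt_psatz}), bounded inputs, and lower semicontinuity of the feasible value. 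A separate gap is the qualification condition on $\nabla b$ restricted to $T_{\x}\mathcal{M}$, which I would handle by making it an explicit hypothesis.
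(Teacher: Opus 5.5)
Your pointwise Farkas translation is exactly the paper's first step; after that the routes diverge. The paper proves only the ``if'' direction, and it obtains invariance by citing Theorems~\ref{thm:positive_invariance} and~\ref{thm:positive_invariance_indexnu}. Those theorems require the $\alpha$-inequality at \emph{every} point of $\mathcal{D}\cap\mathcal{M}$, whereas condition 2) supplies only a compatible admissible input in the interior. The paper therefore silently skips precisely the bridge you flag.

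Your Nagumo route uses exactly what 2)--3) provide, which is what the text preceding the theorem intends. It needs no $\alpha$, and it also gives the ``only if'' direction, which the paper never addresses. The paper's route, if completed, would instead tie the certificate directly to the $\alpha$-form used in the CBF-QP.

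Your qualification on $\nabla b$ restricted to $T_{\x}\mathcal{M}$ is genuinely necessary, and Theorem~\ref{thm:positive_invariance} needs it as well. Take $\phi=x_2$, $\dot x_1=1+\ci$ with $|\ci|\le\tfrac12$, and $b=h=x_2-x_1^2$. Then $\mathcal{D}\cap\mathcal{M}=\{0\}$, $\operatorname{int}(\mathcal{D})\cap\mathcal{M}=\emptyset$, and $\nabla b(0)=(0,1)$ is normal to $\mathcal{M}$. So 1)--3), and even the $\alpha$-form, hold trivially, yet $x_1$ increases and the state leaves at once. Under your reading (Problem~\ref{problem:verification} plus the qualification), your argument is sound for index~1 and more complete than the paper's.

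What remains loose is your last paragraph. Under the reading you adopted, the $\alpha$-bridge is simply unnecessary. It matters only if you take Definition~\ref{def:dae_aware_cbf} literally, and then three things change:
\begin{itemize}
\item Necessity of 2)--3) is immediate from $\alpha(0)=0$, with no Nagumo needed.
\item Correctness is \emph{not} implied, since that definition never mentions $h$; it must be built in.
\item Sufficiency is exactly your majorant construction, which you have not established.
\end{itemize}

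That construction needs two things you do not have. First, it needs compactness of $\mathcal{D}\cap\mathcal{M}$, which the theorem does not assume. Take $b=h=x_1$, $\dot x_1=-x_1x_2^2$, $\dot x_2=0$, uncontrolled and with a trivial algebraic part. Then 1)--3) hold and $\mathcal{D}$ is invariant, but no $\alpha$ satisfies $x_1x_2^2\le\alpha(x_1)$ for all $x_2$.

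Second, it needs lower semicontinuity of $c(\x):=\sup\{\mathcal{L}_{\hat f}b(\x)+\mathcal{L}_{\hat g}b(\x)\ci\}$ over compatible admissible $\ci$. For index~1 this holds: $P\equiv 0$, so $c$ is a supremum of continuous functions over the fixed set $\mathcal{U}$. For $\nu\ge2$, however, the set $\{\ci\in\mathcal{U}:\bar P(\bar J_d f_d+\bar\eta g_d\ci)=0\}$ need not be inner semicontinuous, e.g.\ where the compatible affine subspace aligns with a face of $\mathcal{U}$. Then $c$ can drop by a fixed amount just inside $\partial\mathcal{D}$, and in that case no class-$\mathcal{K}$ $\alpha$ exists. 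The same state dependence means that for $\nu\ge2$ you must replace the fixed-$\mathcal{U}$ Theorem~\ref{theorem:Nagumo} by a viability-type result.

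Finally, necessity does not use the qualification at all. The inclusion $\mathcal{A}_{\mathcal{D}\cap\mathcal{M}}(\x)\subseteq\{v\in T_{\x}\mathcal{M}:\nabla b(\x)^\top v\ge 0\}$ holds at every $\x$ with $b(\x)=0$.
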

\begin{proof}
By Lemma~\ref{Lemma:Farkas}, $A_{\mathrm{int}}(\x)\,\ci \le {r}_{\mathrm{int}}(\x)$ is feasible if and only if~\eqref{eq:dae_dual_int} has no nonnegative solution, and $A_{\mathrm{bnd}}(\x)\,\ci \le {r}_{\mathrm{bnd}}(\x)$ is feasible if and only if~\eqref{eq:dae_dual_bnd} has no nonnegative solution. Conditions~2) and~3) thus guarantee the existence of admissible inputs satisfying compatibility and input bounds at interior points, and additionally tangency at boundary points. Positive invariance of $\mathcal{D} \cap \mathcal{M}$ then follows from Theorems~\ref{thm:positive_invariance} and~\ref{thm:positive_invariance_indexnu}.
\end{proof}

The interior and boundary dual systems \eqref{eq:dae_dual_int} and \eqref{eq:dae_dual_bnd} are affine in the multipliers ${\lambda}$; infeasibility can therefore be certified using the framework of Section~\ref{sec:problem}. We next present two approaches to certify infeasibility, one for polynomial systems and one for non-polynomial systems.

\subsubsection{Verification for Polynomial Systems}
Consider the case where a candidate DAE-aware CBF $b : \mathbb{R}^{n_x} \to \mathbb{R}$ is given. In what follows, we make the following assumption on the system dynamics and the sets $\mathcal{M}$ and $\mathcal{C}$. 

\begin{assumption}
\label{assumption:polynomial_dae}
The functions $f_{d}$, $g_{d}$, $J_{d}$, $P$, and $b$ are polynomial functions of the state variables. The constraint manifold $\mathcal{M}$ and the safe set $\mathcal{C}$ are described by polynomial equalities and inequalities.
\end{assumption}

Under this assumption, every entry of the matrices $A_{\mathrm{int}}(\x)$, $A_{\mathrm{bnd}}(\x)$ and vectors ${r}_{\mathrm{int}}(\x)$, ${r}_{\mathrm{bnd}}(\x)$ used in Theorem~\ref{thm:dae_aware_cbf_verification} is polynomial in $\x$. We present SOS certificates based on the dual infeasibility conditions \eqref{eq:dae_dual_int} and \eqref{eq:dae_dual_bnd}.

We now formulate SOS certificates 
based on the dual infeasibility conditions 
\eqref{eq:dae_dual_int} and \eqref{eq:dae_dual_bnd}. To apply Theorem~\ref{th:Positivstellensatz}, we formulate emptiness problems for the interior and boundary dual systems over variables $(\x, {\lambda}, s)$. The \emph{interior cone} $\Sigma_{\mathrm{int}}$ is generated by the inequalities $b(\x)\ge 0$ and componentwise nonnegativity of ${\lambda}_{\mathrm{int}}$, whereas the \emph{boundary cone} $\Sigma_{\mathrm{bnd}}$ is generated by the nonnegativity of ${\lambda}_{\mathrm{bnd}}$. 

\begin{figure*}[t]
    \vspace{-0.2cm}
    \centering
    \includegraphics[width=0.95\linewidth]{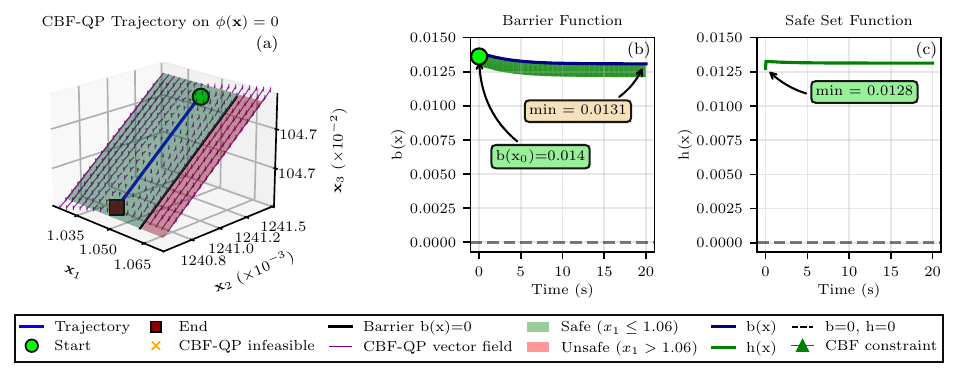}
    \vspace{-0.3cm}
    \caption{Simulation results for the DAE system~\eqref{eq:wind_turbine} using a DAE-aware CBF. The figure shows (a) the CBF-QP trajectory on the constraint manifold $\phi(\x)=0$ remaining in the safe region, (b) the evolution of the barrier function $b(\x)$ (with $b(\x_0)=0.014$ and a minimum $ b(\x(t))=0.0131$), and (c) the evolution of the safe set function $h(\x)$ (with a minimum $h(\x(t))=0.0128$). Both $b(\x)$ and $h(\x)$ remain positive over the horizon, indicating safety and feasibility.}
    \label{fig:synth_results_daecbf}
    \vspace{-0.6cm}
\end{figure*}

Strict inequalities ${\lambda}^{\top} {r}_{\mathrm{int}}(\x) < 0$ and ${\lambda}^{\top} {r}_{\mathrm{bnd}}(\x) < 0$ are encoded via slack variables $s_{\mathrm{int}}^{2}$ and $s_{\mathrm{bnd}}^{2}$ in the respective monoids. The ideals $\small \mathbb{I}_{\mathrm{int}}:= \big\langle {\lambda}^{\top}A_{\mathrm{int}}(\x),\,\phi(\x)\big\rangle$ and $\mathbb{I}_{\mathrm{bnd}}:= \big\langle {\lambda}^{\top}A_{\mathrm{bnd}}(\x),\,\phi(\x),\,b(\x)\big\rangle$ are generated by the equalities ${\lambda}^{\top} A_{\mathrm{int}}(\x)={0}^{\top}$, $\phi(\x)={0}$ (interior), and ${\lambda}^{\top} A_{\mathrm{bnd}}(\x)={0}^{\top}$, $\phi(\x)={0}$, $b(\x)=0$ (boundary).

The emptiness of the interior and boundary dual systems 
can thus be certified through SOS representations 
of the type established by the Positivstellensatz. We formalize these SOS conditions accordingly.
\begin{proposition}[Polynomial certificates for DAE-aware CBFs]
\label{prop:polynomial_dae_aware}
Suppose Assumption~\ref{assumption:polynomial_dae} holds. The interior feasibility condition 2) and boundary tangency condition 3) of Theorem~\ref{thm:dae_aware_cbf_verification} hold if there exist SOS polynomials and polynomial multipliers satisfying
\begin{equation*}
\Phi_{\mathrm{int}}(\x, {\lambda}_{\mathrm{int}}) + s_{\mathrm{int}}^{2} + \varphi_{\mathrm{int}}(\x, {\lambda}_{\mathrm{int}}) = 0,
\end{equation*}
with $\Phi_{\mathrm{int}} \in \Sigma_{\mathrm{int}}$, $s_{\mathrm{int}}^{2}$ in the monoid, and
\begin{equation*}
\Phi_{\mathrm{bnd}}(\x, {\lambda}_{\mathrm{bnd}}) + s_{\mathrm{bnd}}^{2} + \varphi_{\mathrm{bnd}}(\x, {\lambda}_{\mathrm{bnd}}) = 0,
\end{equation*}
with $\Phi_{\mathrm{bnd}} \in \Sigma_{\mathrm{bnd}}$, $s_{\mathrm{bnd}}^{2}$ in the monoid, where
\begin{align*}
\varphi_{\mathrm{int}}(\cdot) &= \sum_{q=1}^{n_{u}} \rho_{q}(\cdot) \big[{\lambda}_{\mathrm{int}}^{\top} A_{\mathrm{int}}(\x)\big]_{q} + \sum_{\ell} \rho_{\ell}^{\phi}(\cdot) \, \phi_{\ell}(\x) \\
&\quad + \rho^{s}(\cdot) \big(- {\lambda}_{\mathrm{int}}^{\top} {r}_{\mathrm{int}}(\x) - s_{\mathrm{int}}^{2}\big), \\
\varphi_{\mathrm{bnd}}(\cdot) &= \sum_{q=1}^{n_{u}} \rho_{q}(\cdot) \big[{\lambda}_{\mathrm{bnd}}^{\top} A_{\mathrm{bnd}}(\x)\big]_{q} + \sum_{\ell} \rho_{\ell}^{\phi}(\cdot) \, \phi_{\ell}(\x)\\
&\quad + \rho^{b}(\cdot) b(\x) + \rho^{s}(\cdot) \big(- {\lambda}_{\mathrm{bnd}}^{\top} {r}_{\mathrm{bnd}}(\x) - s_{\mathrm{bnd}}^{2}\big).
\end{align*}
\end{proposition}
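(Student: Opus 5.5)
The plan is to reduce each of the two identities to an emptiness statement via the easy direction of the Positivstellensatz (Theorem~\ref{th:Positivstellensatz}), and then translate emptiness back into conditions 2) and 3) of Theorem~\ref{thm:dae_aware_cbf_verification}. For the interior, I would define the semialgebraic set
\[
\mathcal{S}_{\mathrm{int}} := \big\{(\x,\lambda_{\mathrm{int}},s_{\mathrm{int}}) : b(\x)\ge 0,\ \lambda_{\mathrm{int}}\ge 0,\ \phi(\x)=0,\ \lambda_{\mathrm{int}}^{\top}A_{\mathrm{int}}(\x)=0^{\top},\ -\lambda_{\mathrm{int}}^{\top}r_{\mathrm{int}}(\x)-s_{\mathrm{int}}^{2}=0,\ s_{\mathrm{int}}\neq 0\big\}.
\]
The boundary set $\mathcal{S}_{\mathrm{bnd}}$ is defined analogously, with $b(\x)=0$ added to the equalities, $A_{\mathrm{bnd}},r_{\mathrm{bnd}}$ in place of $A_{\mathrm{int}},r_{\mathrm{int}}$, and only $\lambda_{\mathrm{bnd}}\ge 0$ among the inequalities. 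The slack encodes the strict inequality: $\lambda^{\top}r<0$ holds iff there is $s\neq 0$ with $-\lambda^{\top}r=s^{2}$. This is exactly the device used in Lemma~\ref{lem:crt_psatz}.

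\textbf{Step 1 (soundness of the certificate).} I would argue directly rather than invoke the full equivalence. Suppose a point of $\mathcal{S}_{\mathrm{int}}$ exists and evaluate the identity $\Phi_{\mathrm{int}}+s_{\mathrm{int}}^{2}+\varphi_{\mathrm{int}}=0$ there. Every generator of the ideal vanishes, so $\varphi_{\mathrm{int}}=0$. Every element of the cone $\Sigma_{\mathrm{int}}$ is a sum of SOS multiples of products of $b$ and the components of $\lambda_{\mathrm{int}}$, each of which is nonnegative, so $\Phi_{\mathrm{int}}\ge 0$. Finally $s_{\mathrm{int}}^{2}>0$, so the left-hand side is strictly positive, which is a contradiction. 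Hence $\mathcal{S}_{\mathrm{int}}=\emptyset$, and the same argument gives $\mathcal{S}_{\mathrm{bnd}}=\emptyset$.

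\textbf{Step 2 (translation).} Emptiness of $\mathcal{S}_{\mathrm{int}}$ means that for every $\x$ with $\phi(\x)=0$ and $b(\x)\ge 0$, in particular every $\x\in\operatorname{int}(\mathcal{D})\cap\mathcal{M}$, there is no $\lambda\ge 0$ with $\lambda^{\top}A_{\mathrm{int}}(\x)=0^{\top}$ and $\lambda^{\top}r_{\mathrm{int}}(\x)<0$. That is precisely condition 2) of Theorem~\ref{thm:dae_aware_cbf_verification} via \eqref{eq:dae_dual_int}. Likewise, emptiness of $\mathcal{S}_{\mathrm{bnd}}$ gives that \eqref{eq:dae_dual_bnd} has no nonnegative solution at every $\x\in\partial\mathcal{D}\cap\mathcal{M}$. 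Here I would note that $\partial\mathcal{D}\subseteq\{b=0\}$ by continuity of $b$, so the boundary ideal's generator $b(\x)$ is justified. By Lemma~\ref{Lemma:Farkas}, these conditions are equivalent to feasibility of the primal input systems, but only the dual statements are needed.

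\textbf{Main obstacle.} The main obstacle is bookkeeping rather than depth. First, I must check that the monoid element $s^{2}$ (that is, $\chi=s$ with $\chi^{2}=s^{2}$) and the ideal term $\rho^{s}(-\lambda^{\top}r-s^{2})$ correctly encode the strict dual inequality. Second, I must check that the proposition's $\varphi$ terms really lie in the ideals $\mathbb{I}_{\mathrm{int}}$ and $\mathbb{I}_{\mathrm{bnd}}$ once the slack equation is added as a generator. Third, I must make sure no positivity is silently assumed for the ideal multipliers $\rho$; none is needed, since the ideal terms vanish on the set regardless of the sign of their multipliers.
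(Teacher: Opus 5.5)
Your proposal is correct and follows essentially the paper's route. You encode the dual systems \eqref{eq:dae_dual_int}--\eqref{eq:dae_dual_bnd} as semialgebraic sets, using a slack $s^2$ for the strict inequality, and the cone--monoid--ideal identities certify that these sets are empty over $\mathcal{D}\cap\mathcal{M}$ and over $\mathcal{M}\cap\{b=0\}\supseteq\partial\mathcal{D}\cap\mathcal{M}$. The only differences are that you check the trivial (soundness) direction of Theorem~\ref{th:Positivstellensatz} directly by evaluating the identity, and you spell out why $\partial\mathcal{D}\subseteq\{b=0\}$, whereas the paper just cites Farkas' Lemma and the full Positivstellensatz equivalence.
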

\begin{proof}
By Lemma~\ref{Lemma:Farkas}, infeasibility of the interior and boundary dual systems~\eqref{eq:dae_dual_int}-\eqref{eq:dae_dual_bnd} is equivalent to emptiness of their semialgebraic representations. By Theorem~\ref{th:Positivstellensatz}, such emptiness is certified by the stated cone–monoid–ideal identities. Under Assumption~\ref{assumption:polynomial_dae}, all functions are polynomial; hence, these conditions reduce to SOS feasibility problems.
\end{proof}

\subsubsection{Verification for non-polynomial Systems} 
When the system contains non-polynomial functions, the verification can be performed by solving nonlinear programs for the interior and boundary conditions. The interior condition holds if the optimal value of
\begin{equation*}
\begin{array}{ll} 
\min_{\x, {\lambda}_{\mathrm{int}}} & {\lambda}_{\mathrm{int}}^{\top} {r}_{\mathrm{int}}(\x) \\
\mbox{s.t.} & {\lambda}_{\mathrm{int}}^{\top} A_{\mathrm{int}}(\x) = {0}^{\top} \\
& {\lambda}_{\mathrm{int}} \geq {0} \\
& \phi(\x) = {0} \\
& b(\x) > 0.
\end{array}
\end{equation*}
Similarly, the boundary condition holds if the optimal value of
\begin{equation*}
\begin{array}{ll} 
\min_{\x, {\lambda}_{\mathrm{bnd}}} & {\lambda}_{\mathrm{bnd}}^{\top} {r}_{\mathrm{bnd}}(\x) \\
\mbox{s.t.} & {\lambda}_{\mathrm{bnd}}^{\top} A_{\mathrm{bnd}}(\x) = {0}^{\top} \\
& {\lambda}_{\mathrm{bnd}} \geq {0} \\
& \phi(\x) = {0} \\
& b(\x) = 0.
\end{array}
\end{equation*}
is non-negative. If an optimal value is negative, the corresponding $\x$ is a feasibility counterexample, denoted $\x^{(f)}_{ce}$, where no safe and compatible control exists.

\begin{remark}
    As in Remark~\ref{remark:nlp_verification}, NLP-based feasibility verification requires global optimality to avoid false negatives, and a negative optimal value yields a concrete counterexample.
\end{remark}

\section{Experiments}
\label{sec:experiments}
We validate the proposed DAE-aware CBF framework on two DAE systems: a wind turbine system and a flexible-link manipulator, demonstrating safety enforcement while respecting algebraic constraints for DAEs of varying index. The experiments are conducted in a simulation environment using Python and are performed on a machine with an Apple M2 Max chip and 32 GB RAM running macOS. For both systems, we construct DAE-aware CBFs using the methods outlined in Section~\ref{sec:dae_cbf} and verify their validity using SOS and the SMT-based approach from Section~\ref{section:verification}. We then implement a CBF-QP controller that incorporates the DAE-aware CBFs to enforce safety during system operation. The CBF-QP controller is implemented using Drake~\cite{drake} to compute the control inputs that ensure safety. The safety specifications are defined as state constraints that the system must not violate.
The code is available at~\cite{zhang2025daecbf_code}.

According to the theoretical results presented in Sections~\ref{sec:dae_cbf} and~\ref{section:verification}, we investigate the following research questions.
\begin{itemize}
\item $(\mr{Q}1)$ For the wind turbine system in Example~\ref{exmpl:DAE_unaware_example_1}, does the DAE-aware CBF from Theorem~\ref{thm:positive_invariance} ensure feasibility on the constraint manifold and prevent the safety violations observed in the DAE-unaware case?
\item $(\mr{Q}2)$ Do the candidate barriers used in the case studies satisfy the DAE-aware CBF validity requirements in Definition~\ref{def:dae_aware_cbf}, as certified through the correctness/interior/boundary checks in Theorem~\ref{thm:dae_aware_cbf_verification}?
\item $(\mr{Q}3)$ Do the guarantees established for the index-1 case in Theorem~\ref{thm:positive_invariance} extend to higher-index DAE models as characterized by Theorem~\ref{thm:positive_invariance_indexnu}?
\end{itemize}
To answer the posed research questions, we first apply the proposed forward-invariance and verification methods to the wind turbine index-1 DAE, thereby addressing $(\mr{Q}1)$--$(\mr{Q}2)$. We then consider a higher-index ($\nu=2$) flexible-link manipulator to address $(\mr{Q}3)$ while revalidating $(\mr{Q}2)$ for such a higher-index nonlinear system.

\begin{figure*}[t]
    \vspace{-0.4cm}
    \centering
    \includegraphics[width=0.95\linewidth]{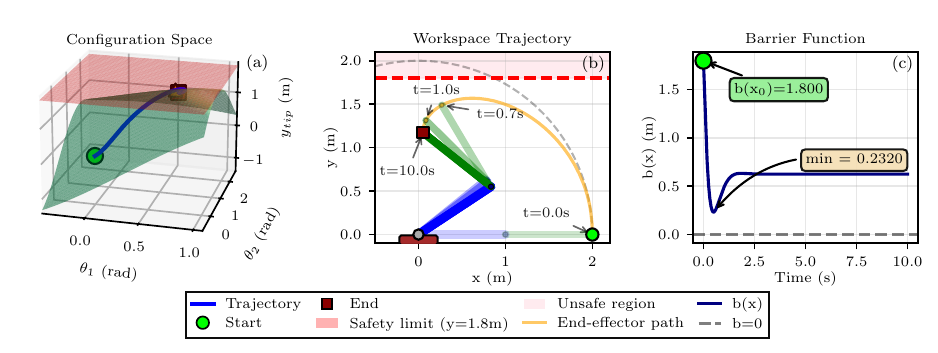}
    \vspace{-0.3cm}
    \caption{Simulation results for the flexible link manipulator. The figure shows (a) the trajectory in configuration space $(\theta_1,\theta_2,y_{\mathrm{tip}})$, where the trajectory (blue) starts at the green marker and ends at the red marker, remaining within the safe region ($b(\x)\ge 0$, green surface) and below the safety limit (red plane). (b) The manipulator trajectory in workspace, where the end-effector path (yellow) remains below the safety limit $y_{\mr{max}}=1.8\,\mathrm{m}$ (red dashed); the snapshots indicate the manipulator configuration at selected times. (c) Evolution of the barrier function along the trajectory (with $b(\x_0)=1.800$ and with a minimum $b(\x(t))=0.2320$), confirming that $b(\x(t))\ge 0 \, \forall \, t$.} \vspace{-0.5cm}
    \label{fig:motivating_infeasible_flexible}
\end{figure*}

\subsection{Wind Turbine System}
We revisit the semi-explicit DAE system~\eqref{eq:wind_turbine} from Example~\ref{exmpl:DAE_unaware_example_1}. Based on the results showing safety violations under DAE-unaware CBFs, we now apply Theorem~\ref{thm:positive_invariance} to ensure safety on the constraint manifold.
Unlike the DAE-unaware case where the QP became infeasible, leading to safety violations, the DAE-aware CBF explicitly incorporates the algebraic constraints into the safety condition.

A degree-2 polynomial candidate CBF with relative degree $d=1$, for the wind turbine system, is given as follows:
\begin{align*}
b(\x) =& -1175.36 + 238.42 x_1 + 1491.68 x_2 + 238.42 x_3 \\
& - 263.78 x_1^2 + 472.45 x_1 x_2 - 477.45 x_1 x_3 \\
& - 145.97 x_2^2 - 263.78 x_2 x_3 + x_3^2.
\end{align*}

Note that synthesizing candidate DAE-aware CBFs is beyond the scope of this paper and is left for future work.  The above $b(\x)$ is a hand-designed polynomial barrier used for demonstrating the proposed framework.

Simulation results in Fig.~\ref{fig:synth_results_daecbf} demonstrate that the system trajectory remains feasible on the constraint manifold and the safety function $h(\x)$ remains non-negative.
Fig.~\ref{fig:synth_results_daecbf}a shows the safe trajectory evolving on the manifold, and Figs.~\ref{fig:synth_results_daecbf}b and~\ref{fig:synth_results_daecbf}c confirm that the barrier condition is satisfied throughout the simulation. The controller enforces the compatibility condition~\eqref{eq:invariance_projection} together with the barrier inequality~\eqref{eq:invariance_cbf}. Given that the wind turbine barrier function has relative degree one on $\mathcal{M}$, Corollary~\ref{lemma:dae_hocbf_index1} reduces to the index-1 DAE-aware CBF condition (Theorem~\ref{thm:positive_invariance}).

Furthermore, to ensure that the DAE-aware CBF is valid, we perform formal verification using both SOS programming and SMT-based methods, following the framework in Fig.~\ref{fig:roadmap_verification}. We verify the correctness, interior, and boundary conditions for the DAE-aware CBF based on Theorem~\ref{thm:dae_aware_cbf_verification}. The verification results along with runtimes are shown in Tab.~\ref{tab:windturbine_size_time} for the candidate CBF. The verification confirms that $b(\x)$ is a valid DAE-aware CBF and thus ensures safety on the constraint manifold.

By observing Fig.~\ref{fig:synth_results_daecbf} and Tab.~\ref{tab:windturbine_size_time}, we highlight that \textit{(i)} the DAE-aware CBF-QP remains feasible on $\mathcal{M}$ and maintains both $b(\x)$ and $h(\x)$ nonnegative throughout the trajectory, and \textit{(ii)} the correctness, interior, and boundary checks in Theorem~\ref{thm:dae_aware_cbf_verification} are computationally certifiable for this candidate barrier, thus answering questions $(\mr{Q}1)$ and $(\mr{Q}2)$.

\begin{table}[t]
    \fontsize{9}{9}\selectfont
    \centering
    \caption{Wind turbine: problem size and runtime. Each entry reports the number of parameters and runtime for verifying the corresponding property. Degree-4 SOS timing is omitted due to excessive cost (N/A).}
    \label{tab:windturbine_size_time}
    \vspace{-0.2cm}
    \renewcommand{\arraystretch}{1.3}
    \resizebox{\columnwidth}{!}{%
    \begin{tabular}{l|cc|cc|cc}
        \midrule \hline
        \multirow{2}{*}{Property} 
        & \multicolumn{2}{c|}{SMT (Z3)} 
        & \multicolumn{2}{c|}{SOS ($d{=}2$)} 
        & \multicolumn{2}{c}{SOS ($d{=}4$)} \\
        \cline{2-7}
        & \#params & time (s) & \#params & time (s) & \#params & time (s) \\
        \hline
        Correctness & 3   & 0.01    & 60    & 0.04   & 280     & N/A \\
        Interior    & 7   & 120.65  & 405   & 5.77   & 8{,}235 & N/A \\
        Boundary    & 7   & 126.08  & 405   & 5.30   & 7{,}695 & N/A \\
        \hline
        Total       & 17  & 246.74  & 870   & 11.11  & 16{,}360 & N/A \\
        \toprule \bottomrule
    \end{tabular}}
    \vspace{-0.5cm}
\end{table}

\subsection{Flexible Link Manipulator}
We consider a two-link planar manipulator modeled as an index-2 DAE following~\cite{moberg2007modeling,Park2024b}. The dynamics along with the algebraic constraint can be written as follows
\begin{subequations}\label{eq:manipulator_dae}
\begin{align}
 \begin{bmatrix}
    \dot{x}_1 \\ \dot{x}_2
\end{bmatrix} =& 
\begin{bmatrix}
    x_3 \\ x_4
\end{bmatrix}\label{eq:manip_kinematics}\\
 M(\x)
\begin{bmatrix}
    \dot{x}_3 \\ \dot{x}_4
\end{bmatrix} =& 
\begin{bmatrix}
    u_1 \\ u_2
\end{bmatrix} 
- {C}(\x) - {G}(\x) - {D}\begin{bmatrix}
    x_3 \\ x_4
\end{bmatrix}, \label{eq:manip_dynamics}\\
 0 =& \; \ell_1 \sin x_1 + \ell_2 \sin(x_1 + x_2) - x_5, \label{eq:manip_constraint}
\end{align}
\end{subequations}
where the differential states are $\x_d = [x_1, x_2, x_3, x_4]^{\top} \in \mathbb{R}^4$, such that $x_1, x_2$ represent joint angles ($\theta$) and $x_3, x_4$ represent joint velocities ($\omega$), with control inputs $\ci = [u_1, u_2]^{\top} \in \mathbb{R}^2$, representing joint torques with limit $\|\ci\|_\infty \leq 10$\,N$\cdot$m. Here, the algebraic constraint \eqref{eq:manip_constraint} enforces the end-effector vertical position, constraining the system to evolve on the constraint manifold $\mathcal{M} = \{\x \in \mathbb{R}^5 : \ell_1 \sin x_1 + \ell_2 \sin(x_1 + x_2) - x_5 = 0\}$. The constraint~\eqref{eq:manip_constraint} considered is a holonomic position-level constraint depending only on the differential states. The end-effector vertical position is defined as
\begin{equation}\label{eq:ytip_def}
y_{\mathrm{tip}} := \ell_1 \sin x_1 + \ell_2 \sin(x_1 + x_2) \quad \text{on } \mathcal{M}.
\end{equation}

The inertia matrix $M(\x)$, Coriolis vector $\boldsymbol{C}(\x)$, gravity vector $\boldsymbol{G}(\x)$, and damping matrix $\boldsymbol{D}$ are standard rigid-body dynamics terms. For brevity, we do not present these expressions; readers are referred to~\cite{moberg2007modeling} for detailed expressions. The parameters used in the simulation are: $\ell_1 {=} \ell_2 {=} 1$\,m, $m_1 {=} m_2 {=} 1$\,kg, $\ell_{c1} {=} \ell_{c2} {=} 0.5$\,m, $I_1 {=} I_2 {=} 0.083$\,kg$\cdot$m$^2$, $g = 9.81$\,m/s$^2$, $d_1 {=} d_2 {=} 0.1$\,N$\cdot$m$\cdot$s/rad.

\begin{table}[t]
    \fontsize{9}{9}\selectfont
    \centering
    \caption{Flexible link manipulator: problem size and runtime.
    Each entry reports the number of parameters and runtime for the corresponding verification subproblem; the Total row aggregates parameter counts and sums runtimes across correctness, interior, and boundary checks. Timeouts indicate SMT cannot find any counterexample within the given time. }
    \label{tab:manipulator_size_time_vertical}
    \vspace{-0.2cm}
    \renewcommand{\arraystretch}{1.3}
    \resizebox{\columnwidth}{!}{%
    \begin{tabular}{l|cc|cc}
        \midrule \hline
        \multirow{1}{*}{Item} 
        & \multicolumn{2}{c|}{Polynomial} 
        & \multicolumn{2}{c}{NN (Z3)} \\
        \cline{2-5}
        \hline

        Arch 
        & \multicolumn{2}{c|}{$b(\mathbf{x})=h(\mathbf{x})$}
        & \multicolumn{2}{c}{8--8--tanh--8--tanh--1} \\

        Verifier 
        & \multicolumn{2}{c|}{SOS ($d{=}2$)}
        & \multicolumn{2}{c}{SMT (Z3)} \\

        \hline
        Subproblem   & \multicolumn{1}{c}{\#params} & \multicolumn{1}{c|}{time (s)} & \multicolumn{1}{c}{\#params} & \multicolumn{1}{c}{time (s)} \\
        \hline
        Correctness  & 165     & 0.09  & 8  & 50.06 \\
        Interior     & 2{,}565 & 4.90  & 16 & 350.69 (timeout) \\
        Boundary     & 2{,}565 & 4.48  & 7  & 310.91 (timeout) \\
        \hline
        Total        & 5{,}405 & 9.47  & 16 & 711.66 \\
        \toprule \bottomrule
    \end{tabular}}
    \vspace{-0.5cm}
\end{table}

The system~\eqref{eq:manipulator_dae} has differentiation index $\nu = 2$; refer to~\cite{moberg2007modeling} for the detailed index analysis. The safe set is defined as $\mathcal{C} = \{\x \in \mathcal{M} : h(\x) = y_{\max} - y_{\mathrm{tip}} \geq 0\}$ with $y_{\max} = 1.8$\,m, where $y_{\mathrm{tip}}$ is given by~\eqref{eq:ytip_def}. The safety function $h(\x)$ has relative degree $d' = 2$ on the constraint manifold (control appears in $\frac{d^2 h}{dt^2}$ but not in $\frac{dh}{dt}$). Since the DAE has index $\nu = 2$ and $h$ has relative degree $d' = 2$ on $\mathcal{M}$, by Lemma~\ref{lemma:index-relative-degree}, the DAE-aware HOCBF order is $d = \nu + d' - 1 = 2 + 2 - 1 = 3$. We therefore apply Theorem~\ref{thm:positive_invariance_indexnu} for higher-index DAEs with a barrier candidate $b(\x) = h(\x)$ and class-$\mathcal{K}$ functions $\alpha_0(r) = \alpha_1(r) = \alpha_2(r) = \kappa r$ with $\kappa = 4$. The trigonometric nonlinearities make this system suitable for SMT-based verification from Section~\ref{section:verification}.

Simulation results are shown in Fig.~\ref{fig:motivating_infeasible_flexible}, where Fig.~\ref{fig:motivating_infeasible_flexible}a shows the trajectory in configuration space $(x_1, x_2, y_{\mathrm{tip}})$ remaining within the safe region, and Fig.~\ref{fig:motivating_infeasible_flexible}b shows the manipulator workspace trajectory under an aggressive initial swing ($\dot{x}_1 = 3$\,rad/s), where the end-effector (orange) stays below the $1.8$\,m limit (red dashed), reaching a maximum height of $1.57$\,m with a minimum safety margin of $0.23$\,m. Despite the high initial angular velocity, the DAE-aware CBF-QP controller successfully modulates the joint torques to keep the end-effector below the safety limit while respecting the algebraic constraint throughout the simulation.

To verify the validity of the DAE-aware CBF for this case, we apply both polynomial (SOS) and neural network (SMT) verification from Section~\ref{section:verification}. The verification results along with runtimes are reported in Tab.~\ref{tab:manipulator_size_time_vertical}. For the polynomial case, we use $b(\x) = h(\x)$ directly and verify using degree-2 SOS, which completes in under $10$\,s. For the neural network case, we train an $8$--$8$--$\tanh$--$8$--$\tanh$--$1$ architecture and verify using Z3. The SMT-based verifier is able to confirm correctness, while the interior and boundary are certified through falsification. The system is regarded as safe if no falsifiers are identified within a given time. The number of parameters and the runtime highlight the computational challenge of verifying nonlinear DAE-aware CBFs in both the SOS-based verification and the SMT-based falsification. We note that this case shows that \textit{(i)} the DAE-aware CBF-QP controller can successfully enforce safety while respecting the algebraic constraint for a higher-index nonlinear DAE system, and \textit{(ii)} the same correctness/interior/boundary workflow of Theorem~\ref{thm:dae_aware_cbf_verification}, with SOS certificates and SMT-based falsification checks, can be applied to verify the candidate barrier, thus answering questions $(\mr{Q}2)$ and $(\mr{Q}3)$.

\section{Conclusion}
\label{sec:conclusion}
A framework for safety-critical control of DAE systems using CBFs is introduced. By incorporating projected vector fields defined on the constraint manifold, the proposed DAE-aware CBF conditions ensure forward invariance of safe sets while remaining consistent with the algebraic constraints. The framework is shown to extend to higher-index DAEs through high-order projected vector fields that account for hidden constraints and the system's differentiation index, giving rise to HOCBFs. Formal verification tools were introduced to certify geometric consistency and feasibility of candidate barrier functions. The proposed framework was validated on two DAE systems, demonstrating that DAE-aware CBFs can maintain safety without violating algebraic constraints, whereas standard CBFs may violate the algebraic constraints. 

The manuscript is \textit{not} devoid of limitations and thus merits the following future work. In particular, $(i)$ the SOS-based and NN-SMT-based verification methods suffer from scalability limitations, which restrict the applicability of the approach to high-dimensional DAE systems such as power systems; $(ii)$ the proposed framework assumes exact knowledge of the system model, and extending it to uncertain DAE systems remains an open problem; and $(iii)$ the current approach requires a candidate barrier function; thus, the problem of synthesizing DAE-aware CBFs from safety specifications is not addressed. This motivates future work on the synthesis of DAE-aware CBFs.

\section*{REFERENCES} 
\balance
\bibliographystyle{IEEEtran}
\bibliography{ref.bib}

\end{document}